\definecolor{darkgrn}{rgb}{0, 0.8, 0}
\theoremstyle{plain}
\newtheorem{thm}{Theorem}[section]
\newtheorem{lem}[thm]{Lemma}
\newtheorem{prop}[thm]{Proposition}
\newtheorem{defn}[thm]{Definition}
\newtheorem{asmn}[thm]{Assumption}
\newtheorem{rem}[thm]{Remark}
\newtheorem{case}{Case}
\let\oldlemma\lem
\renewcommand{\lem}{%
  \crefalias{thm}{lem}
  \oldlemma}
\Crefname{lem}{Lemma}{Lemmas}
\let\olddefn\defn
\renewcommand{\defn}{%
  \crefalias{thm}{defn}
  \olddefn}
\Crefname{defn}{Definition}{Definitions}
\let\oldrem\rem
\renewcommand{\rem}{%
  \crefalias{thm}{rem}
  \oldrem}
\Crefname{rem}{Remark}{Remarks}
\let\oldcor\cor
\renewcommand{\cor}{%
  \crefalias{thm}{cor}
  \oldcor}
\Crefname{cor}{Corollary}{Corollaries}
\let\oldclaim\claim
\renewcommand{\claim}{%
  \crefalias{thm}{claim}
  \oldclaim}
\Crefname{claim}{Claim}{Claims}
\let\oldprop\prop
\renewcommand{\prop}{%
  \crefalias{thm}{prop}
  \oldprop}
\Crefname{prop}{Proposition}{Propositions}
\let\oldcon\con
\renewcommand{\con}{%
  \crefalias{thm}{con}
  \oldcon}
\Crefname{con}{Conjecture}{Conjectures}
\let\oldasmn\asmn
\renewcommand{\asmn}{%
  \crefalias{thm}{asmn}
  \oldasmn}
\Crefname{asmn}{Assumption}{Assumptions}
\newcommand{\R}{ {\mathbb R} }
\newcommand{\hK}{\hat{K}}
\newcommand{\hk}{\hat{k}}
\newcommand{\hV}{\hat{V}}
\newcommand{\hE}{\hat{E}}
\newcommand{\he}{\hat{e}}
\newcommand{\hF}{\hat{F}}
\newcommand{\hG}{\hat{G}}
\newcommand{\hu}{\hat{u}}
\newcommand{\hv}{\hat{v}}
\newcommand{\hf}{\hat{f}}
\newcommand{\CH}{\mathcal{C}^H}
\newcommand{\CO}{\mathcal{C}^O}
\newcommand{\hCH}{\hat{\mathcal{C}}^H}
\newcommand{\hCO}{\hat{\mathcal{C}}^O}
\newcommand{\tK}{\tilde{K}}
\newcommand{\tG}{\tilde{G}}
\newcommand{\tR}{\tilde{R}}
\newcommand{\te}{\tilde{e}}
\newcommand{\tf}{\tilde{f}}
\newcommand{\tv}{\tilde{v}}
\newcommand{\tP}{\tilde{P}}
\newcommand{\vx}{\mathbf{x}}
\newcommand{\vy}{\mathbf{y}}
\newcommand{\Ps}{\mathcal{P}} 
\title{Continuous Toolpath Planning in a Graphical Framework for Sparse Infill Additive Manufacturing}
\author{
  Prashant Gupta$^\dag$ \hspace*{0.5in}   Bala Krishnamoorthy$^\dag$ \hspace*{0.5in} Gregory Dreifus$^\ddag$\\
  \vspace*{-0.05in}\\
	\dag: Department of Mathematics and Statistics, Washington State University \\
	\ddag: Department of Mechanical Engineering, Massachusetts Institute of Technology\\
	\{prashant.gupta,kbala\}@wsu.edu, gdreifus@mit.edu
}
\begin{document}
\maketitle

\begin{abstract} 
  We develop a framework that creates a new polygonal mesh representation of the sparse infill domain of a layer-by-layer 3D printing job.
  We guarantee the existence of a single, \emph{continuous} tool path covering each connected piece of the domain in every layer in this graphical model.
  We also present a tool path algorithm that traverses each such continuous tool path with \emph{no crossovers}.
                                                                        
  The key construction at the heart of our framework is a novel \emph{Euler transformation} which converts a $2$-dimensional cell complex $K$ into a new $2$-complex $\hK$ such that every vertex in the $1$-skeleton $\hG$ of $\hK$ has even degree.
  Hence $\hG$ is Eulerian, and an Eulerian tour can be followed to print all edges in a continuous fashion without stops.

  We start with a mesh $K$ of the union of polygons obtained by projecting all layers to the plane.
  First we compute its Euler transformation $\hK$.
  In the \emph{slicing} step, we clip $\hK$ at each layer using its polygon to obtain a complex that may not necessarily be Euler.
  We then \emph{patch} this complex by adding edges such that any odd-degree nodes created by slicing are transformed to have even degrees again.
  We print extra \emph{support} edges in place of any segments left out to ensure there are no edges without support in the next layer above.
  These support edges maintain the Euler nature of the complex.
  Finally, we describe a tree-based search algorithm that builds the continuous tool path by traversing ``concentric'' cycles in the Euler complex.
  Our algorithm produces a tool path that avoids material collisions and crossovers, and can be printed in a continuous fashion irrespective of complex geometry or topology of the domain (e.g., holes).

  We implement our test our framework on several 3D objects.
  Apart from standard geometric shapes including a nonconvex star, we demonstrate the framework on the Stanford bunny.
  Several intermediate layers in the bunny have multiple components as well as complicated geometries.
\end{abstract}

\section{Introduction} \label{sec:intro}
\emph{Additive manufacturing} refers to any process that adds material  to create a 3D object. %
3D printing is a popular form of additive manufacturing that deposits material (plastic, metal, biomaterial, polymer, etc.) in layer by layer fashion. 
We focus on extrusion based 3D printing, in which material is pushed out of an extruder that follows some tool path while depositing material in beads that meld together upon contact.
In this paper, we will refer to this process simply as 3D printing.

In \emph{sparse infill} 3D printing, we first print the outer ``shell'' or boundary of the 3D object in each layer.
We then cover the interior space by printing an \emph{infill lattice} \cite{WuAaWeSi2018}, which is typically a standard mesh. 
In an arbitrary infill lattice, one is not guaranteed to find a \emph{continuous} tool path, i.e., an entire layer being printed by non-stop extrusion of material.
Non-continuous tool paths typically have multiple starts and stops, which could reduce quality of the print, cause print failures (e.g., delamination), and could increase print time.
To ensure existence of a continuous tool path, we need to choose the mesh modeling the infill lattice carefully.
Subsequently, we need to develop algorithms that ensure a continuous tool path can be obtained for each layer with arbitrary geometry.
Further, we need to identify a traversel of this tool path that avoids crossovers.

\subsection{Our contributions} \label{ssec:contrib}
We recently proposed a method that transforms a given $2$-dimensional cell complex $K$ into a new $2$-complex $\hK$ in which every vertex is part of a uniform even number of edges \cite{GuKr2018}.
Hence every vertex in the graph $\hG$ that is the $1$-skeleton of $\hK$ has an even degree, which makes $\hG$ Eulerian, i.e., it is guaranteed to contain an Eulerian tour.
We refer to this method as an \emph{Euler transformation} of a polygonal (or cell) complex.
For $2$-complexes in $\R^2$ under mild assumptions (that no two adjacent edges of a polygon in $K$ are boundary edges), we show that the Euler transformed $2$-complex $\hK$ has a geometric realization in $\R^2$, and that each vertex in its $1$-skeleton has degree $4$.
We bound the numbers of vertices, edges, and polygons in $\hK$ as small scalar multiples of the corresponding numbers in $K$.

We present a computational framework for 3D printing that identifies continuous tool paths for printing the infill lattice in each layer. We illustrate the steps in our framework in Figure \ref{infill_pattern_work_flow} (on a 3D pyramid with a square base).
First we find the polygons for each layer of the input 3D domain (typically presented as an STL file, e.g., Figure \ref{pyramid})) using a slicing software.
Let $\Ps$ be the union of all of these polygons in 2D (Figure \ref{union_of_projections}). 
We fill the space in $\Ps$ with some infill lattice $K$, using any meshing algorithm (Figure \ref{infill_pattern}).
We then apply Euler transformation to obtain a new infill lattice $\hK$ that is guaranteed to be Euler (Figure \ref{infill_pattern_transformed}).
In the next step, we \emph{clip} $\hK$ using $P_i$, a polygon in layer $i$ (Figure \ref{infill_pattern_transformed_polygon_cut}).
Depending on the shape of $P_i$, this step could create terminal vertices in the infill lattice for layer $i$, making it no longer Euler.
In the last step, we \emph{patch} the clipped infill lattice by adding new edges such that the resulting infill lattice is Euler again (Figure \ref{infill_pattern_transformed_polygon_patch_simple}).
An application of this framework is illustrated in Figures \ref{pyramidplan} and \ref{printedpyramid}.
Finally, we propose a tool path algorithm (Section \ref{sec:toolpathalgo}) that identifies the actual print tool path from the patched Euler infill lattice that avoids crossovers and material collisions.
We address all geometric/computational challenges that arise along the way to ensure the proposed framework is complete.
Since each layer can have multiple polygons in general, our framework can generate continuous tool path for each polygon in a given layer (see Section \ref{sec:boundaryedges} for an exception arising in certain cases with extreme geometries).
As we might not print every boundary edge after the patching step, we also print \emph{support} edges (see Section \ref{sec:slicing}).
The overall goal of our framework is to create an Euler infill lattice in each layer, and also prevent printing in free space so as to avoid print failures.

\begin{figure}[htp!]
  \centering
  \begin{subfigure}[t]{3in}
    \centering
    \includegraphics[scale=0.30]{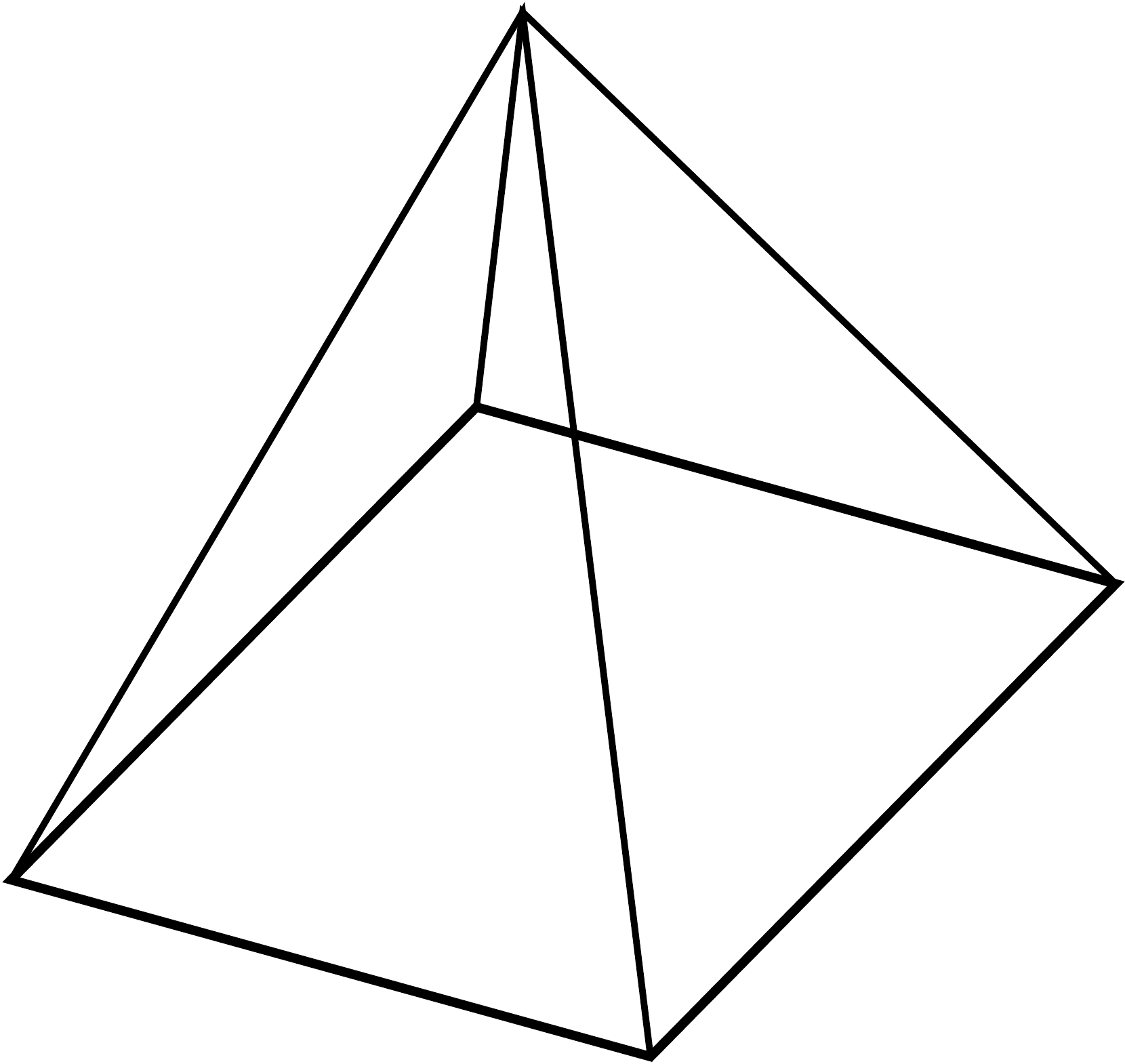}
    \caption{\small \label{pyramid}}
  \end{subfigure}
  \begin{subfigure}[t]{3in}
    \centering
    \includegraphics[scale=0.30]{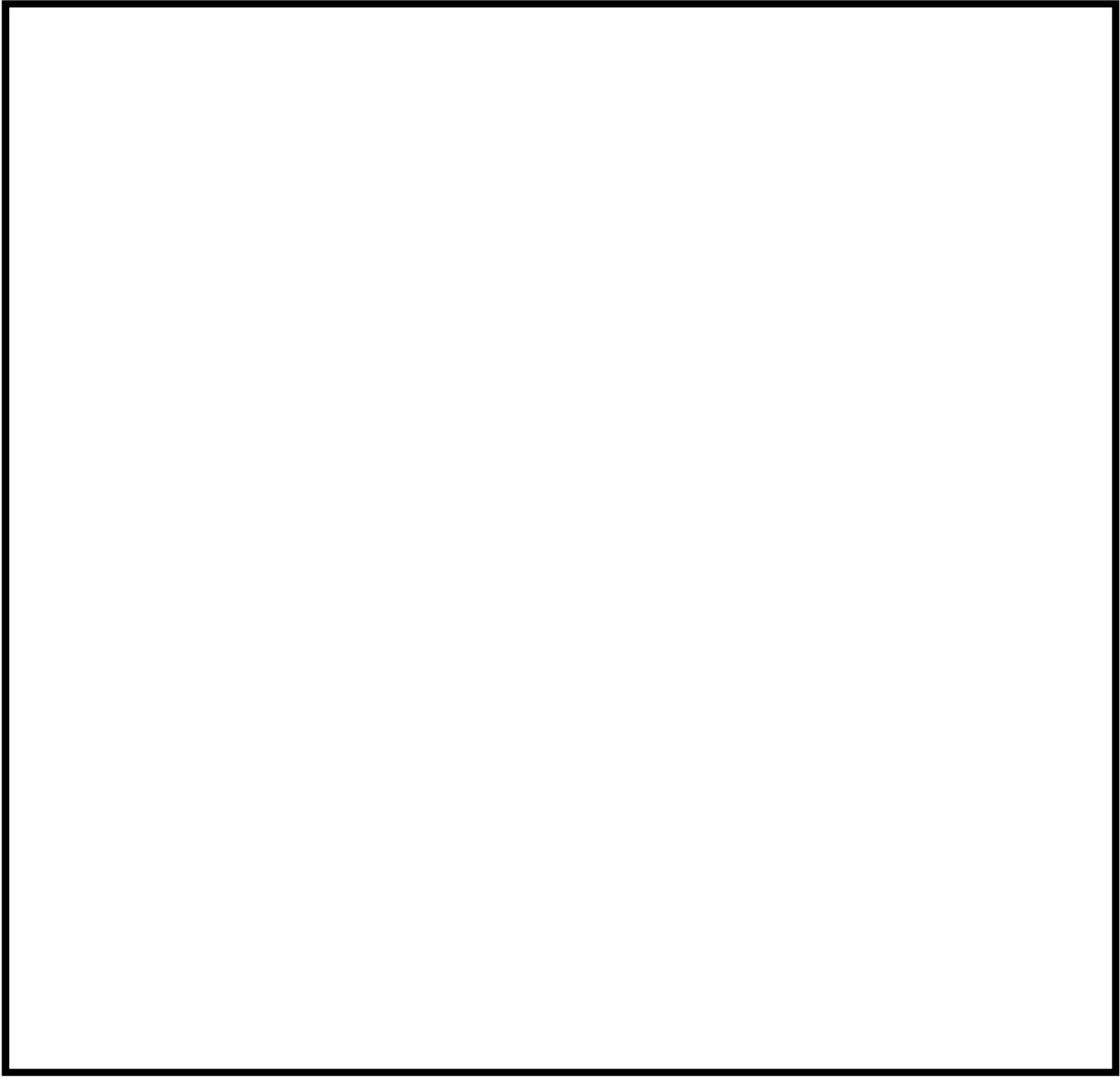}
    \caption{\small \label{union_of_projections}}
  \end{subfigure}
  \begin{subfigure}[t]{3in}
    \centering
    \includegraphics[scale=0.30]{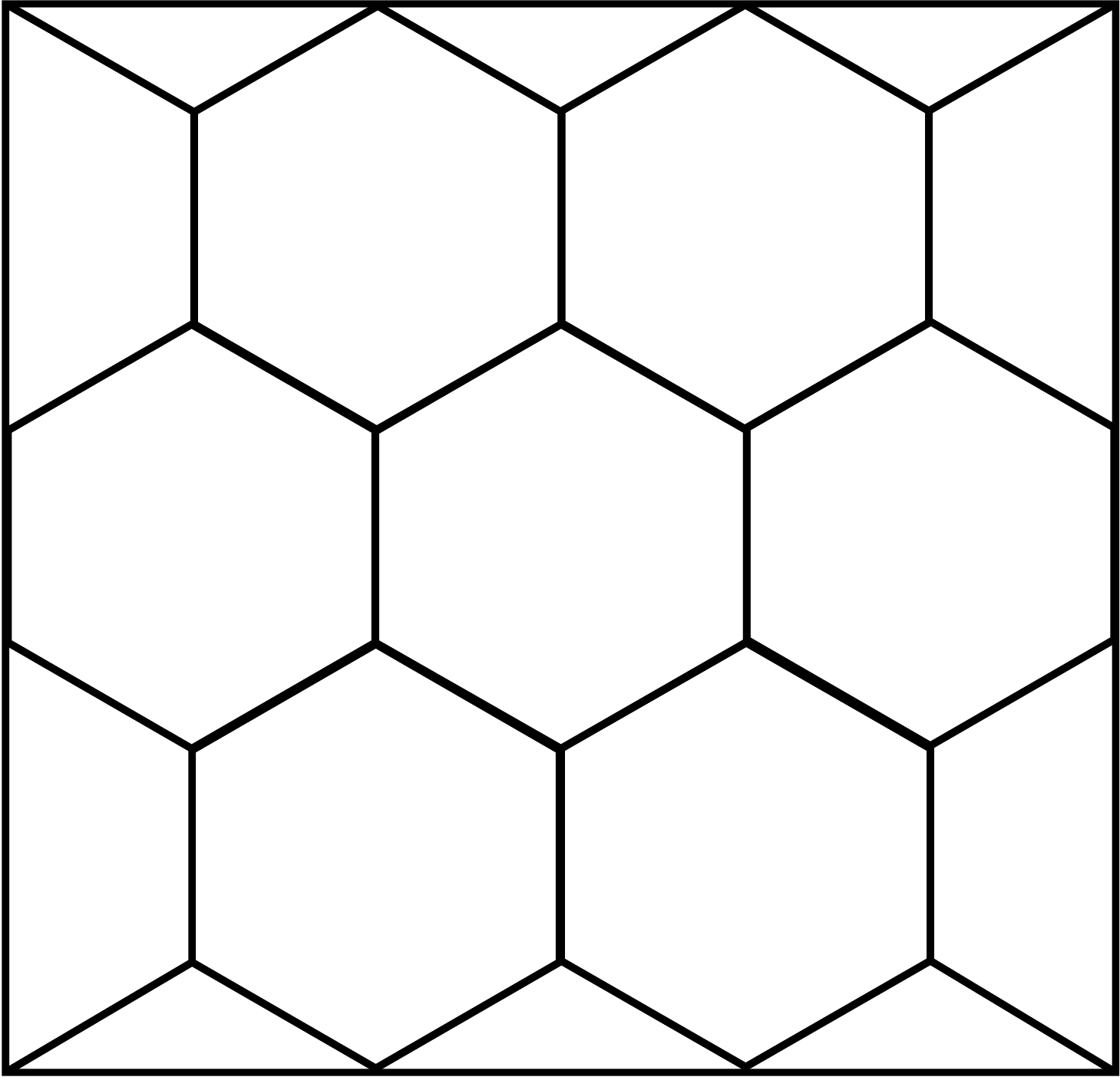}
    \caption{\small \label{infill_pattern}}
  \end{subfigure}
  \begin{subfigure}[t]{3in}
    \centering
    \includegraphics[scale=0.30]{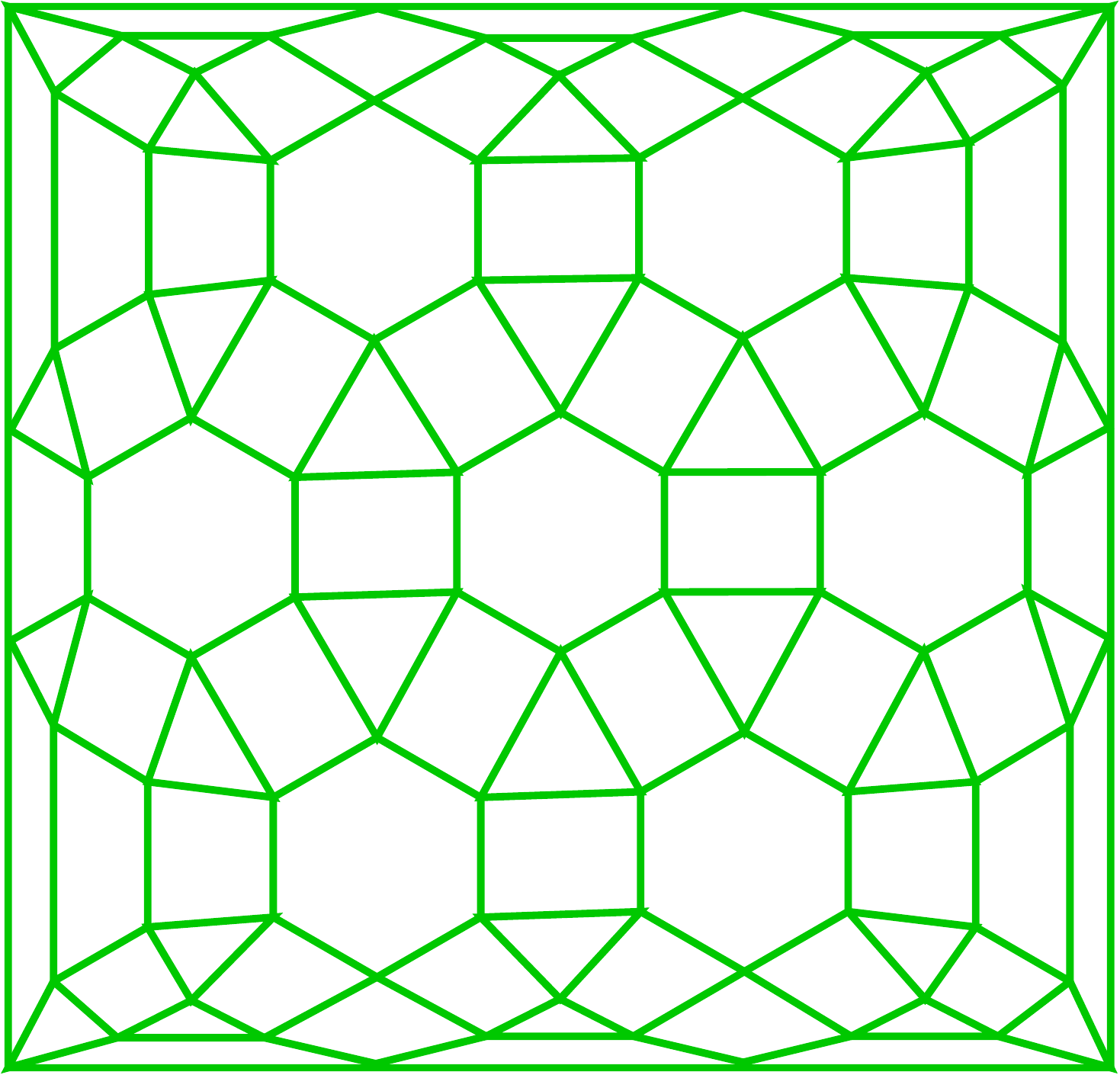}
    \caption{\small \label{infill_pattern_transformed}}
  \end{subfigure}
  \vspace*{0.05in}
  \begin{subfigure}[t]{3in}
    \centering
    \includegraphics[scale=0.30]{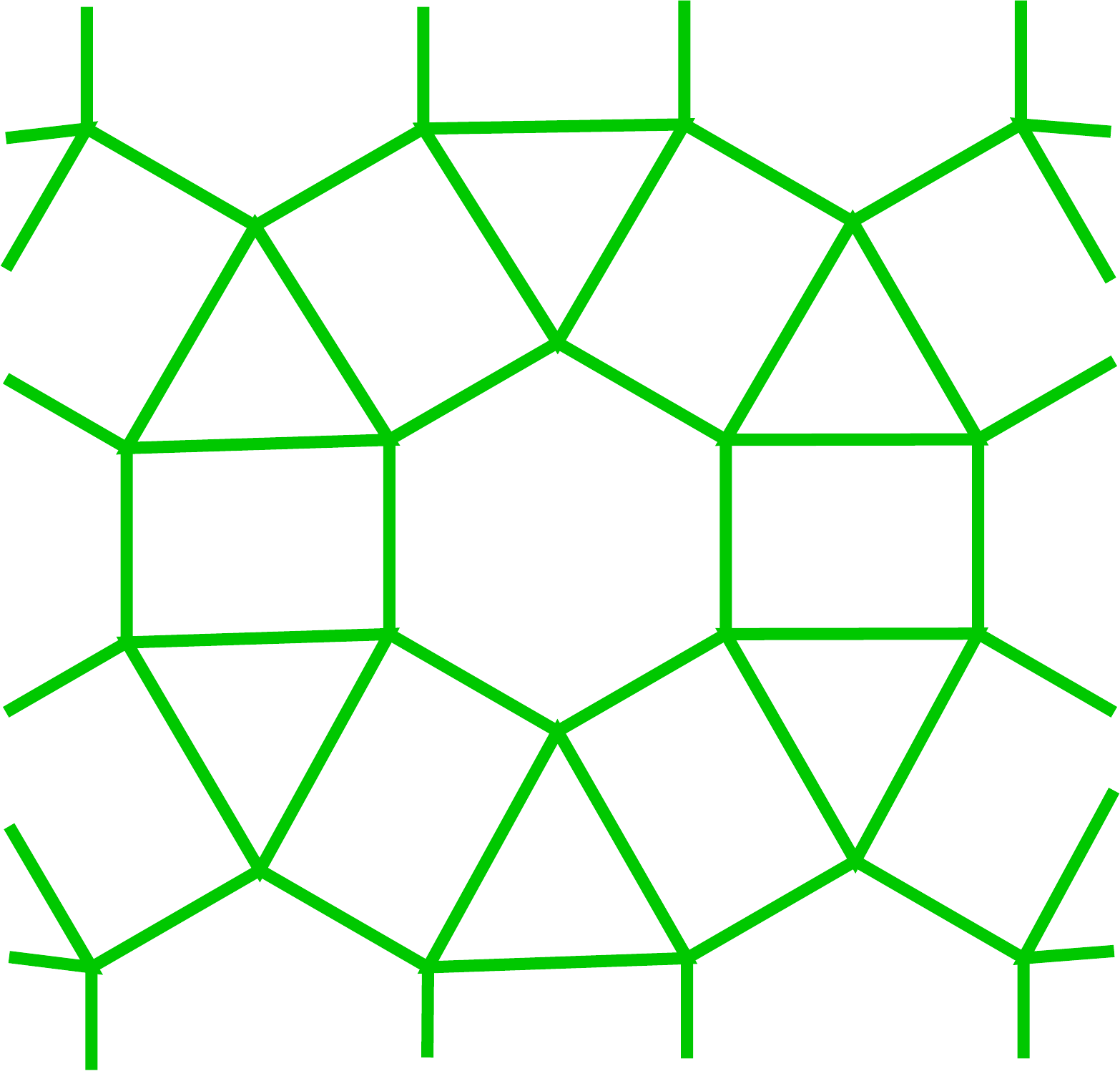}
    \caption{\small \label{infill_pattern_transformed_polygon_cut}}
  \end{subfigure}
  \begin{subfigure}[t]{3in}
    \centering
    \includegraphics[scale=0.30]{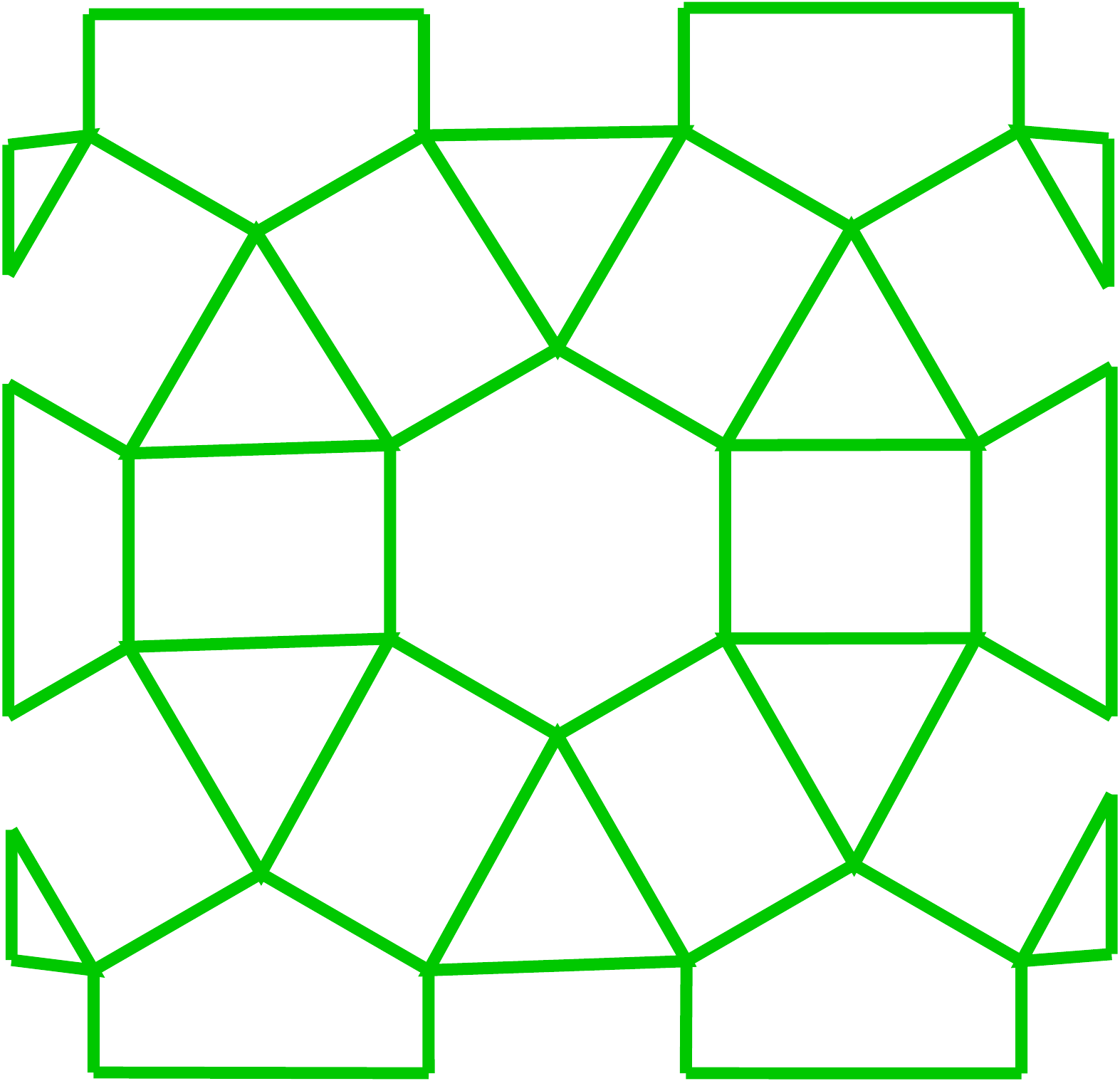}
    \caption{\small \label{infill_pattern_transformed_polygon_patch_simple}}
  \end{subfigure}
  \begin{subfigure}[t]{3in}
    \centering
    \includegraphics[scale=0.15, bb=0 0 1154 859]{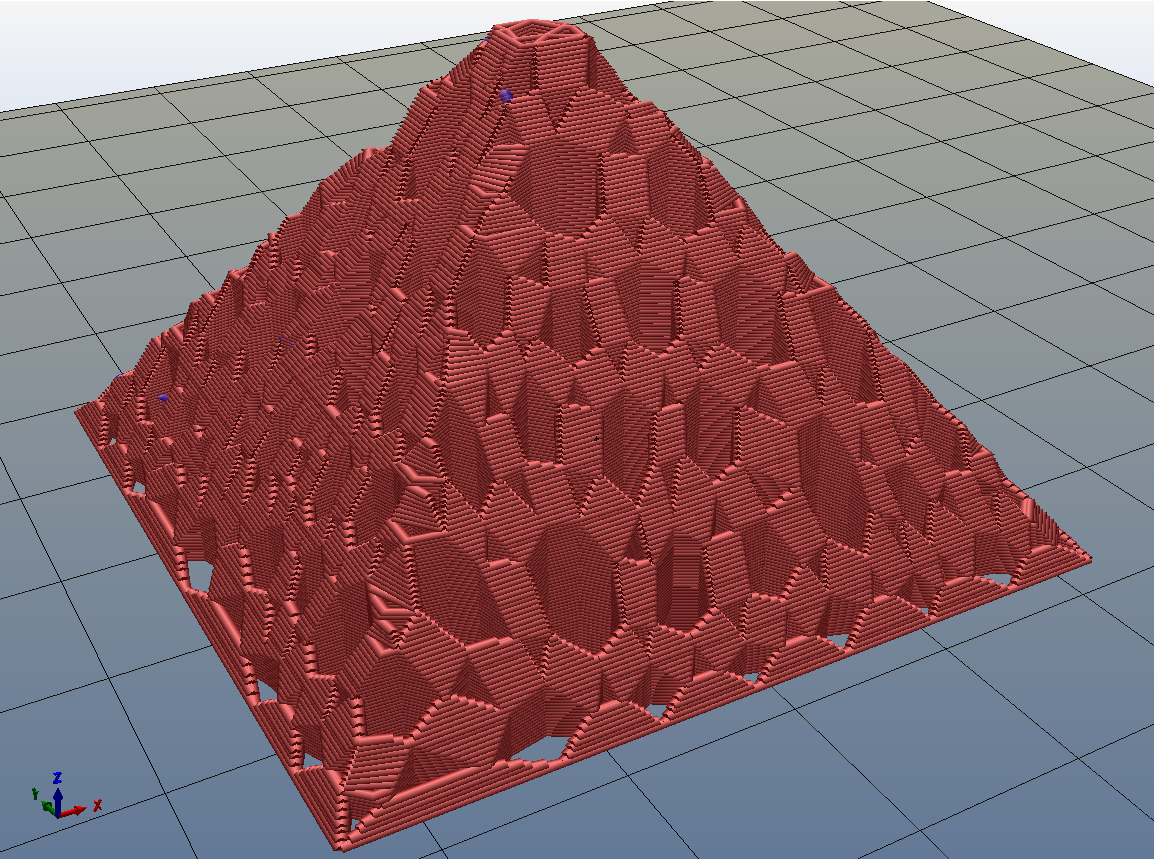} 
    \caption{\small \label{pyramidplan}}
  \end{subfigure}
  \begin{subfigure}[t]{3in}
    \centering
    \includegraphics[scale=0.075, bb=0 0 2448 1744]{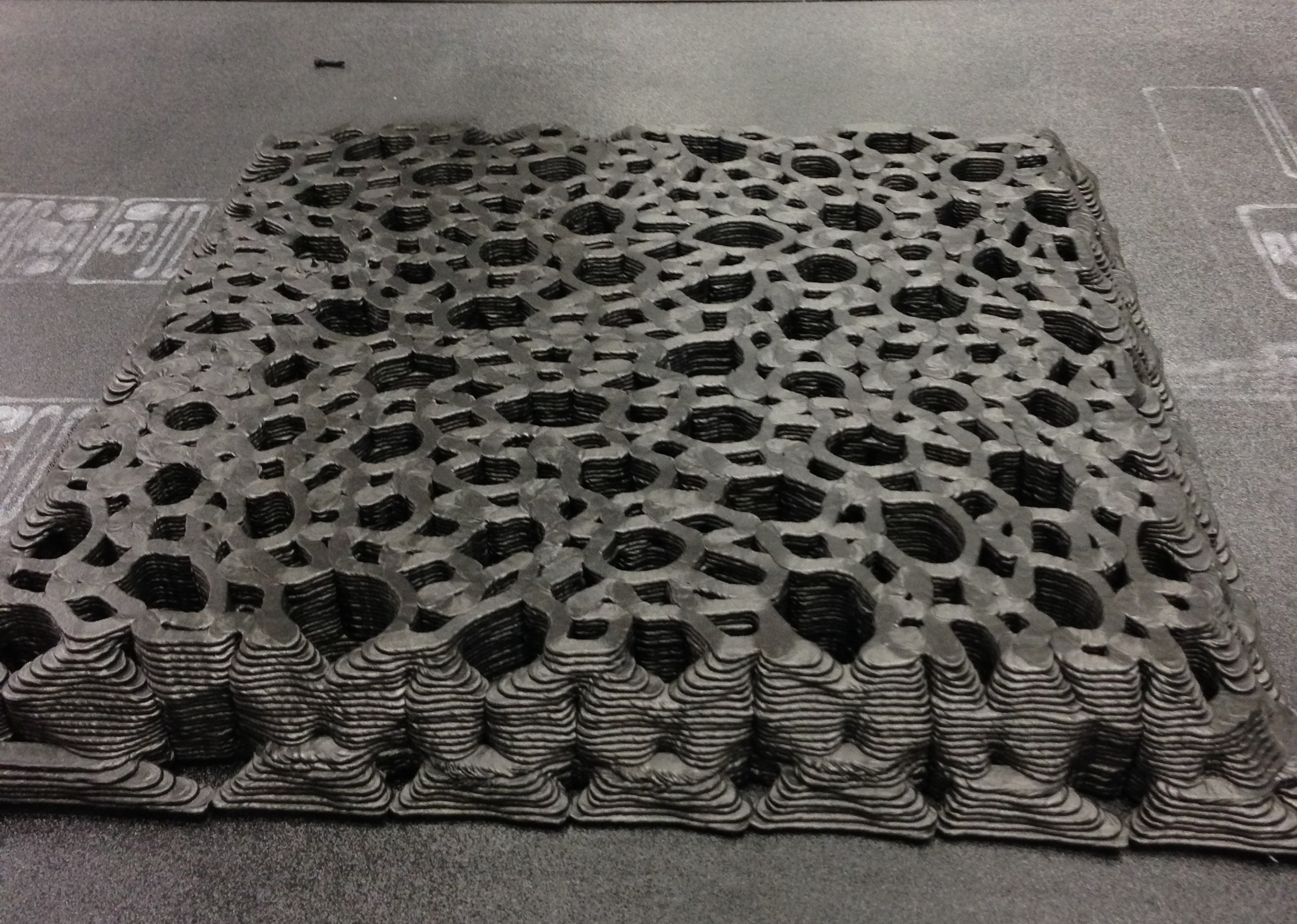}
    \caption{\small \label{printedpyramid}}
  \end{subfigure}
  \caption{ \label{infill_pattern_work_flow}
    Illustration of our framework (see Section \ref{ssec:contrib} for details).
    Subfigure \ref{pyramidplan} shows the plan produced by our framework for printing the infill lattice of a  $610\,$mm $\times$ $610\,$mm $\times$ $610\,$mm pyramid with $143$ layers.
    Subfigure \ref{printedpyramid} shows details of the print after $23$ layers.
    See \cref{sec:implem} for other prints produced by our framework.
  }
\end{figure}

\subsection{Related work}

\paragraph{Euler Graph} \label{sec:previouseulerwork}
The $1$-skeleton of $K$ is an undirected planar graph ($G$).
One approach to make $G$ Eulerian is to delete a minimal number of vertices and/or edges.
But Cai and Yang~\cite{CaYa2011} showed that the Euler vertex deletion and Euler edge deletion problems are NP-Complete.
More importantly, removing edges and/or vertices could create gaps in the coverage of the domain, potentially affecting the mechanical properties of the final product.
Another approach to make $G$ Eulerian is to add a minimum number of edges that pair odd degree vertices in $G$, which can be cast as a \emph{graph augmentation} problem.
Boesch~\cite{TCR1977} presented a polynomial time algorithm for this problem, but planarity of $G$, which is necessary to avoid material collision, is not guaranteed after the augmentation.
Another approach for augmentation is to use the Chinese postman problem to double the edges along shortest paths connecting odd degree vertices in $G$.
But printing the resulting multigraph could be challenging due to non-uniform thickness of (multiple copies of) edges and non-uniform degrees of nodes.
In fact, the degree of some of the nodes could be multiplied by up to $n^o/2$, where $n^o$ is the number of odd degree vertices in $G$.
Visiting a vertex a large number of times could reduce quality of the print.

Jin et al.~\cite{JiHeFuZhDu2017} showed that subpaths in this setting could be generated using a geometric approach and then joined optimally to reduce the amount of jumps.
Zhao et al.~\cite{ZhGuHuGaYoChBeZhCoDaBa2016} proposed a method that finds global continuous paths using Fermat spirals.
But both of these approaches are designed for completely filling the region (with or without holes), and not for sparse infilling.
   
The Catmull--Clark subdivision \cite{CatmullClark1978} creates quadrilateral polygons from any input $2$-complex.
But some vertices in the resulting mesh may have odd degrees.

Our Euler transformation appears similar to the Doo--Sabin subdivision \cite{DoSa1978}, which creates new vertices for each polygon after each iteration.
But  Doo--Sabin subdivision does not preserve the underlying space, and creates boundary vertices with odd degree.
It will add $O(b)$ jumps, where $b$ is the number of boundary vertices after subdivision.
Also, new vertices are created in the Doo--Sabin subdivision based on vertices and the number of edges in the original polygon.
But in our Euler transformation, new vertices are created through mitered offsets (parallel offset of neighboring edges) of polygons.
The length of edges get scaled roughly by a factor of $0.25$ by Doo--Sabin subdivision when the number of vertices is large.
In contrast, one gets much greater control on edge lengths in our Euler transformation by choosing mitered offsets suitably, independent of the number of vertices. 
Further, original angles are not preserved in the Doo--Sabin subdivision. 
Finally, combinatorial changes in general and topological changes for concave polygons are not allowed the the Doo--Sabin subdivision.
But our Euler transformation allows combinatorial as well as topological changes, thus avoiding small edges after transformation (see Section \ref{sec:eulertransformwithcombin}).

\paragraph{Tool path}\label{sssec:toolpath}

While the perimeter and inset can typically be printed as continuous loops, tool paths for the infill lattice commonly resemble a back and forth pattern, a spiral, or a fractal-like path for an arbitrary geometry.
The tool path consists of \textit{print paths} (material is pushed through the nozzle) and \textit{travel paths} (extruder moves from one location to other without pushing material).
Galceran and Carreras \cite{GaCa2013} described coverage path planning as assignment of finding a path that passes through all the points without obstacles.
Xu \cite{Xu2011} presented the use of graph-based algorithms in coverage problems.
General requirements for graph-based coverage problems such as all vertex coverage, non-overlapping and continuous paths, etc.~\cite{CaHuHa1988} are applicable in 3D printing as well, including the requirement that each edge should be printed.
One of the major steps in path planning is the identification of the tool path trajectory~\cite{DiPaCuLi2015}.
This tool path generation step involves filling interior regions and joining sub paths \cite{JiHeDu2017}.
While attempts have been made to join sub paths into a continuous path, they are all limited by increasing complexity of geometry.
Fewer sub paths in the tool path trajectory implies better quality of print.

Wang et al.~\cite{KuWuWa2019} developed $3$-dimensional infill (crossfill) using space filling curves, whose layer by layer cross-section is a continuous curve.
Crossfill curves are fit into the infill region by intersecting with boundary of the polygon in a given layer, but this step can create multiple components.
Later these components are connected into one continuous curve through an outer perimeter.
New edges added to create these components are not guaranteed to have a support below it.
We can still have material collision in each individual component if their boundary is too skewed or component is too thin.
 
Use of graphical models in additive manufacturing was demonstrated by Dreifus et al.~\cite{Dretal2017}.
They mesh each layer of the print as a graph, and find an Eulerian cycle over all edges of the graph.
If the infill lattice is not Eulerian, they add "phantom edges" to the odd-degree vertices of the graph.
When the extruder reaches an odd degree vertex, it stops printing, lifts above the printed material, moves to its matched vertex, and resumes printing.
However, these stops and starts leave \textit{teardrops} of material in their wake, as the extruder drags excess material behind it.
Such teardrops weaken the print.
Also, stopping and starting repeatedly increases print times.
Further, their approach to identify the Eulerian cycle created \emph{crossovers} when pairs of sub-paths of the tour cross each other.

\section{Euler Transformation} \label{sec:eulertransformation}

We recently introduced the Euler transformation of polyhedral complexes in a general setting, with details provided for 2D and 3D cases \cite{GuKr2018}.
Our framework for continuous tool path planning depends crucially on the Euler transformation in 2D, and uses extensively the related notation and definitions.
We present the main results for the 2D case here.

\subsection{Definitions on Polygonal Complexes} \label{sec:defns}

We present definitions that we use to specify properties of the input mesh $K$ as well as the Euler transformed mesh $\hK$ in 2D.
See standard books on algebraic topology \cite{Hatcher2002,Munkres1984} for details.

\begin{defn} \label{def:plyhdcplx}
	\emph{\bfseries (Polygonal complex)}
	A polygonal complex $K$ is a collection of \emph{polygons} in $\R^2$ such that every face of a polygon in $K$ is also included in $K$, and the nonempty intersection of any two polygons in $K$ is a face of both.
	Polygons in $K$ are referred to also as its \emph{$2$-cells}.
        We refer to $K$ as a $2$-complex.
\end{defn}

We will work with \emph{finite} polygonal complexes, i.e., where the set of cells in $K$ is finite.
The cells of interest in this work are vertices, edges, and polygons.
Our definition of Euler transformation (in \cref{sec:eulertsfm}) as well as geometric realization results (in \cref{sec:geomrlzn}) do not require polygons in $K$ to be convex.
In general, some cells in the Euler transformed complex $\hK$ may not be convex.
But if we assume cells in $K$ are convex, then we can guarantee a large majority of cells in $\hK$ are so as well.

\begin{defn} \label{def:purecplx}
	\emph{\bfseries (Pure  complex)}
	A polygonal complex is \emph{pure} if every vertex and every edge is a face of some polygon in $K$.
\end{defn}
\noindent Thus in a pure $2$-complex, there are no ``isolated'' edges or vertices, and all top-dimensional cells are polygons.

We assume the input mesh $K$ is a finite, connected, pure $2$-complex in $\R^2$.
Along with $K$, we assume we are given a collection $\CH$ of polygons that capture $2$-dimensional ``holes'', and a singleton set $\CO$ that consists of a polygon capturing the ``outside''.
Note that vertices and edges in the intersection of a polygon in $K$ and a polygon in $\CH$ or $\CO$ are precisely the boundary cells of $K$.
We make the following assumptions about intersections of polygons in $K$, $\CH$, and $\CO$.
We denote the \emph{underlying spaces} of these objects as $|K|, |\CH|$, and $|\CO|$, respectively.
To be precise, $|\CH| = \cup_{c_i \in \CH} |c_i|$.
\begin{asmn} \label{asmn:Kholesoutside}
The following conditions hold for the input complex $K$, the collection of holes $\CH$, and the outside polygon $\CO$.
\begin{enumerate}
  \item $|K| \cup |\CH| \cup |\CO| = \R^2$.
  \item \label{asmn:sepholes} Polygons in $\CH$ are pairwise disjoint, and are also disjoint from the polygon that is $\CO$. 
  \item \label{asmn:holeintr} Any polygon in $K$ and a polygon in $\CH$ intersect in at most \emph{one} facet (edge) of both.
  \item  \label{asmn:outintr} No two edges that are \emph{adjacent facets} of a polygon in $K$  
  intersect the polygon that is $\CO$.
\end{enumerate}
\end{asmn}
\noindent Hence polygons in $K, \CH$, $\CO$ cover $\R^2$, and each polygon in $\CH$ captures a separate hole that is also separate from the outside.

We point out that \emph{articulation} (or cut) vertices are allowed in $K$, i.e., vertices whose removal disconnects the complex (we assume $K$ is connected to start with).
Conditions specified in \cref{asmn:Kholesoutside} ensure such vertices are boundary vertices of $K$.
For instance, $K$ could consist of two copies of the complex in \cref{infill_pattern} that meet at one of the four corner points.

\subsection{Definition of Euler Transformation} \label{sec:eulertsfm}

We define the Euler transformation $\hK$ of the input $2$-complex $K$ by listing the polygons included in $\hK$.
We denote vertices as $v$ (or $u, v_i$), edges as $e$ (or $e_i$), and polygons as $f$ (or $f_i$).
The corresponding cells in $\hK$ are denoted $\hv, \he, \hf$, and so on.
We first define cells in $\hK$ \emph{abstractly}, and discuss aspects of geometric realization in \cref{sec:geomrlzn}.

We start by \emph{duplicating} every polygon in $K \cup \CH \cup \CO$.
Since we do not want to alter the domain in $\R^2$ captured by $K$, we set $\hCH = \CH$ and $\hCO = \CO$.
But we ``shrink'' each polygon in $K$ when duplicating (\cref{sec:geomrlzn}).
By \cref{asmn:Kholesoutside}, this duplication represents each edge $e \in K$  by two copies in $\hK$.

The polygons in $\hK$ belong to three classes, and correspond to the polygons, edges, and vertices in $K$ (see Figure \ref{fig:3types2cells2d}).
\begin{enumerate}
        \item \label{2dETcls1}
	For each polygon $f \in K$, we include $\hf \in \hK$ as its copy.
	
	\item \label{2dETcls2}
	Each edge $e \in K$ generates the $4$-gon $\hf_e$ in $\hK$ specified as follows.
	Let $e = \{u,v\} \in f,f'$, where $f \in K$ and $f' \in K \cup \CH \cup \CO$.
	Then $\hf_e$ is the polygon whose facets are the four edges $\{\hu,\hv\}, \, \{\hv,\hv'\}, \, \{\hu',\hv'\}$, and $\{\hu,\hu'\}$.
	Here, $\hv, \hv'$ are the two copies of $v$ in $\hK$.
	Edges $\he = \{\hu,\hv\}$ and $\he' =  \{\hu',\hv'\}$ are facets of the Class \ref{2dETcls1} polygons $\hf$ added to $\hK$ or of polygons $\hf'$ in $\hCH$ or $\hCO$.
	Edges $\{\hu,\hu'\}$ and $\{\hv,\hv'\}$ are added new.
	
	\item \label{2dETcls3}
	Each vertex $v \in K$ that is part of $p$ polygons in $K$ generates a $p$-gon (polygon with $p$ sides) $\hf_v$ in $\hK$ whose vertices and edges are specified as follows.
	Let $v \in f_k$ for $k=1,\dots,p$ in $K$.
	Then $\hf_v$ has vertices $\hv_k$, $k=1,\dots,p$, where $\hv_k$ is the copy of $v$ in $\hf_k$ (in $\hK$).
	For every pair of polygons $f_i,f_j \in \{f_k\}_1^p$ that intersect in an edge $e_{ij} \in K$, the edge $\he_{ij} = \{\hv_i, \hv_j\}$ is included as a facet of $\hf_v$.
	Edges $\he_{ij}$ are ones added new as facets of Class \ref{2dETcls2} polygons (see above).
\end{enumerate}

\vspace*{0.25in}
\begin{figure*}[hbp!]
	\centering
	\includegraphics[scale=0.24]{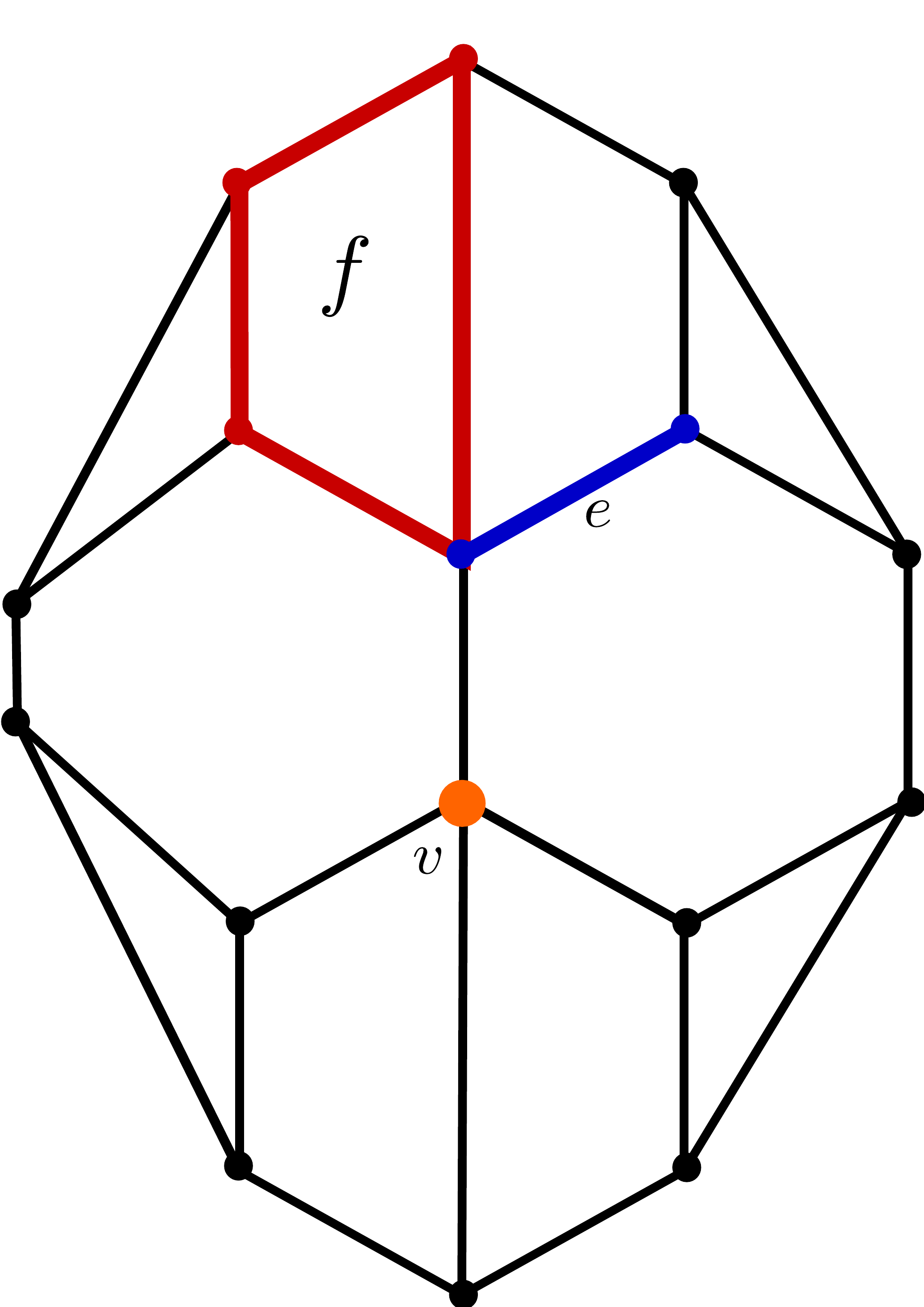}
	\hspace*{0.1in}
	\includegraphics[scale=0.24]{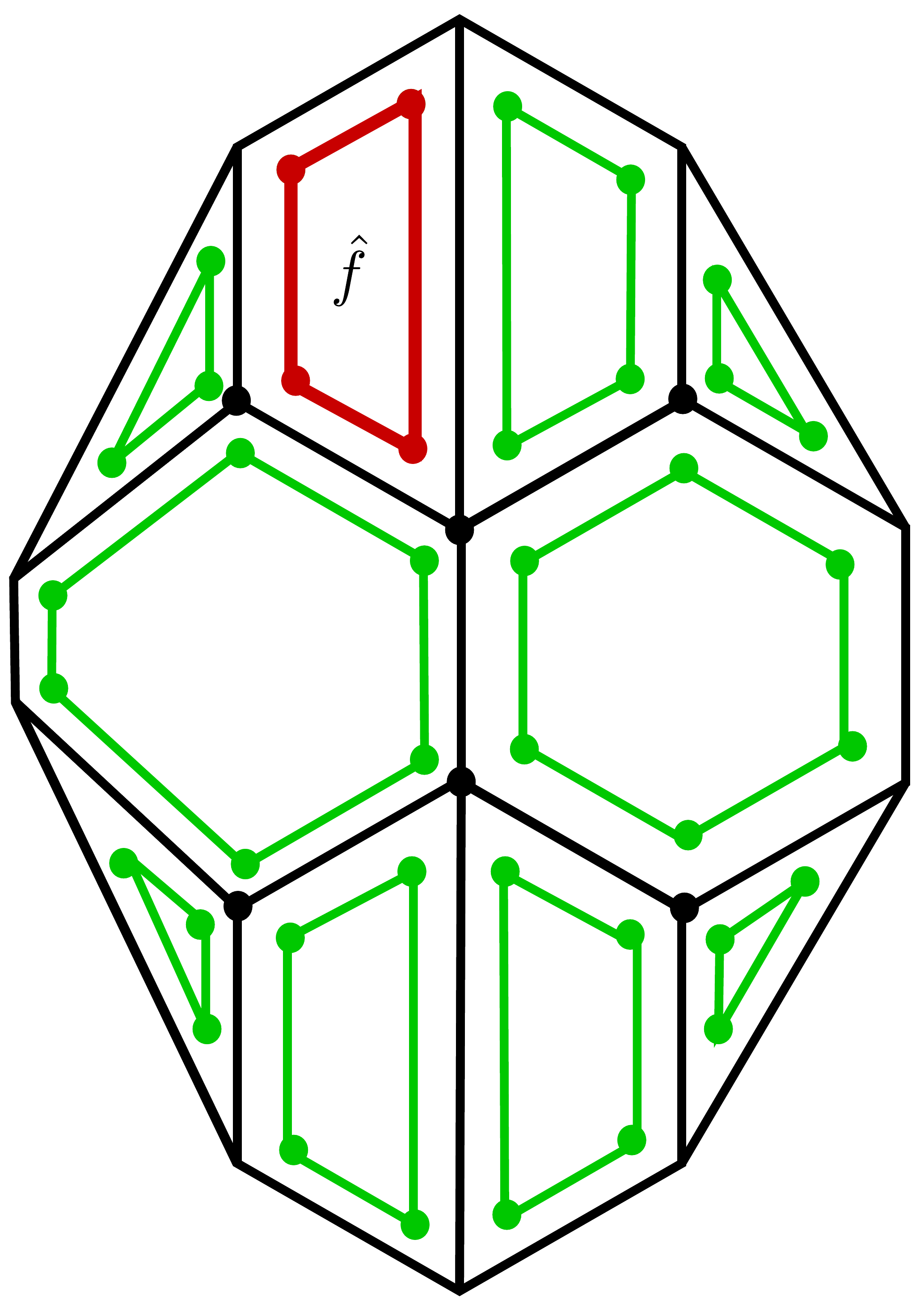}
	\hspace*{0.1in}
	\includegraphics[scale=0.24]{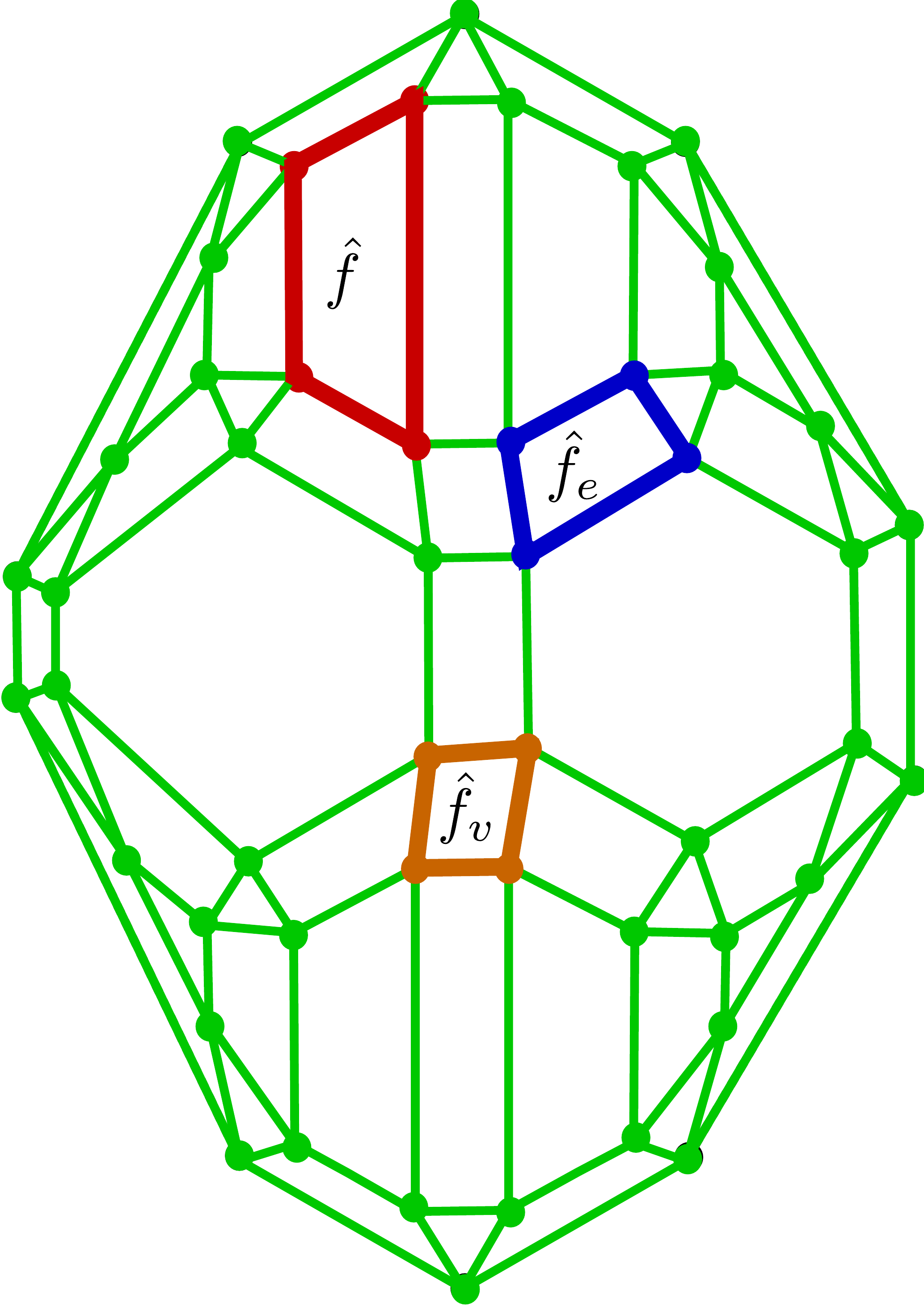}
	\caption{A polygon $f$, edge $e$, and a vertex $v$ highlighted in an input complex $K$ (left),
		an intermediate complex showing only the copies of original polygons in $K$ that are included in $\hK$, i.e., of Class \ref{2dETcls1} (middle),
		and the final Euler transformation $\hK$ (right).}
	\label{fig:3types2cells2d}
\end{figure*}

\section{Geometric Properties of Euler Transformed Complex} \label{sec:geomrlzn}

As the Euler transformation adds new polygons corresponding to input polygons, edges, as well as vertices, we \emph{offset} the Class \ref{2dETcls1} polygons in $\hK$ in order to generate enough space to add extra polygons.
Intuitively, we ``shrink'' each of the polygons in $K$.
We use standard techniques for producing offset polygons in 2D, e.g., \emph{mitered} offset generated using the straight skeleton (SK) of the input polygon \cite{AiAuAlGa1995}.
We define the polygon offset as a mitered offset of the input polygon that creates no \emph{combinatorial} or \emph{topological changes}---i.e., no edges are shrunk to points, and the polygon is not split into multiple polygons.
Later on, we generalize the definition of Euler transformation to allow combinatorial or topological changes (\cref{sec:eulertransformwithcombin,sec:eulertransformwithTopolog}).
Naturally, we do not want to alter the print domain $|K|$.
Hence we include the polygons in $\CH$ and $\CO$ in $\hK$ without any changes.

\subsection{Geometric Realization} \label{ssec:rlzn2d}

\begin{thm} \label{thm:deg4ET2d}
  Every vertex in the $1$-skeleton of $\hK$ has degree $4$.
\end{thm}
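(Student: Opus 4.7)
The plan is to classify the vertices appearing in the $1$-skeleton of $\hK$ and show that each one is incident to exactly four edges by a direct counting argument. By the construction in \cref{sec:eulertsfm}, the vertices of $\hK$ come in two flavors: (i) for each $v \in K$ sitting on $p$ polygons $f_1,\dots,f_p$ of $K$, we have one copy $\hv_k$ per polygon $f_k$, living in the shrunken copy $\hf_k$; and (ii) whenever $v$ is a boundary vertex of $K$ (i.e., $v$ also lies on some polygon in $\hCH = \CH$ or $\hCO = \CO$), the original vertex $v$ survives as a vertex of the unchanged boundary polygons and hence appears as a vertex of the Class \ref{2dETcls2} polygons $\hf_e$ built on boundary edges at $v$.

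For a type-(i) vertex $\hv_k$, I would enumerate its incident edges by walking through the three polygon classes that meet it. From $\hf_k$ (Class \ref{2dETcls1}), which is a combinatorial copy of $f_k$, the vertex $\hv_k$ inherits the two edges of $\hf_k$ at it, corresponding to the two edges of $f_k$ at $v$. From $\hf_v$ (Class \ref{2dETcls3}), the vertex $\hv_k$ is incident to exactly the edges $\{\hv_k,\hv_j\}$ with $f_j$ sharing an edge of $K$ with $f_k$ at $v$; there are precisely two such $j$, one per edge of $f_k$ at $v$. Since the Class \ref{2dETcls2} $4$-gons $\hf_e$ built on the two edges of $K$ incident to $f_k$ at $v$ only reuse the two edges of $\hf_k$ at $\hv_k$ and the two new edges of $\hf_v$ at $\hv_k$, no further incidences are introduced, and $\hv_k$ has degree $4$.

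For a type-(ii) vertex $v$, I would count using the Class \ref{2dETcls2} $4$-gons $\hf_e$ for the boundary edges $e$ at $v$. By \cref{asmn:outintr} and the at-most-one-facet condition of \cref{asmn:holeintr}, exactly two such boundary edges meet at $v$, say $e$ and $e'$. Each $\hf_e$ has $v$ as a vertex (the ``copy'' of $v$ in the unchanged hole or outside polygon is $v$ itself) and contributes two incident edges at $v$: the preserved boundary edge of $K$ and a new edge $\{v,\hv_k\}$ where $f_k \in K$ is the polygon sharing $e$. Doing the same for $e'$ yields two more edges, and I would verify that the edges contributed by the Class \ref{2dETcls3} polygon $\hf_v$ at $v$ (which incorporates $v$ as a vertex through the boundary edges) coincide with these new $\{v,\hv_k\}$ edges rather than producing fresh ones. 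The total is again $4$.

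The main obstacle I expect is the bookkeeping at boundary vertices: ensuring no edge is double-counted across Class \ref{2dETcls1}, \ref{2dETcls2}, and \ref{2dETcls3} polygons, and checking the degenerate cases allowed by the hypotheses — in particular articulation vertices of $K$ (permitted per the discussion after \cref{asmn:Kholesoutside}) and vertices adjacent to one or more hole polygons. I anticipate that \cref{asmn:sepholes}, \cref{asmn:holeintr}, and \cref{asmn:outintr} are exactly the conditions needed to rule out pathologies where $v$ would meet more than two boundary edges or where two distinct Class \ref{2dETcls3} polygons could share $v$, so that the ``two from the shrunken polygon plus two from the vertex polygon'' tally holds uniformly.
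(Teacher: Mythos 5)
Your argument is correct and matches the paper's proof in substance: every vertex copy $\hv$ picks up exactly two edges from the (offset or unchanged) polygon containing it, plus two new edges from the Class \ref{2dETcls2} quadrilaterals built on the two edges of that polygon incident to the original vertex, and \cref{asmn:Kholesoutside} guarantees each such edge lies in exactly two polygons of $K \cup \CH \cup \CO$. The paper simply avoids your two-case split by letting $f$ range uniformly over $K \cup \CH \cup \CO$ (the hole and outside polygons being included unchanged), so your type-(ii) bookkeeping is subsumed in the same count.
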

\begin{proof}
  Consider a vertex $v$ shared by adjacent edges $e_1, e_2 \in f$, where $f \in K \cup \CH \cup \CO$ is a polygon.
  Following \cref{asmn:Kholesoutside}, the edges $e_1$ and $e_2$ are shared by exactly two polygons each from the input complex,  holes, or the outside cell.
  Let $f'_1, f'_2$ be the other polygons containing edges $e_1, e_2$, respectively (with $f$ being the first polygon).

  Consider the vertex $\hv \in \hK$ generated as part of $\hf$.
  The polygon $\hf$ is a mitered offset of $f$ when $f \in K$, or is identical to $f$ when it belongs to $\CH \cup \CO$.
  Hence $\hf$ is a simple polygon in both cases, and $\hv$ is part of two edges $\he_1, \he_2 \in \hf$.
  Further, $\hv$ will be part of two more edges $\{\hv,\hv'_1\}$ and  $\{\hv,\hv'_2\}$  added as facets of the Class \ref{2dETcls2} polygons generated by $e_1, e_2$.
  Here $\hv'_i \in \he'_i \in \hf'_i$ for $i=1,2$.
  Hence $\hv$ has degree $4$ in the $1$-skeleton of $\hK$.
\end{proof}

\begin{rem}
	\label{rem:adjbdyedges}
	{\rm 
		We show why we require the input complex to satisfy Conditions \ref{asmn:holeintr} and \ref{asmn:outintr} in \cref{asmn:Kholesoutside}, which require that no two adjacent edges of a polygon in $K$ can be boundary edges.
		Consider the input complex $K$ consisting of a single square, whose four edges are shared with the outside cell $\CO$.
		Then every vertex in the Euler transformation  $\tilde{K}$ will have the odd degree of $3$ (see \cref{fig:outintr}).
		But if we apply the Euler transformation \emph{once more} to $\tilde{K}$, we do get a valid complex $\hK$ with each vertex having degree $4$.
		Note that $\tilde{K}$ satisfies Condition \ref{asmn:outintr}, and hence becomes a valid input.
		\begin{figure}[htp!]
			\centering
			\includegraphics[scale=0.22]{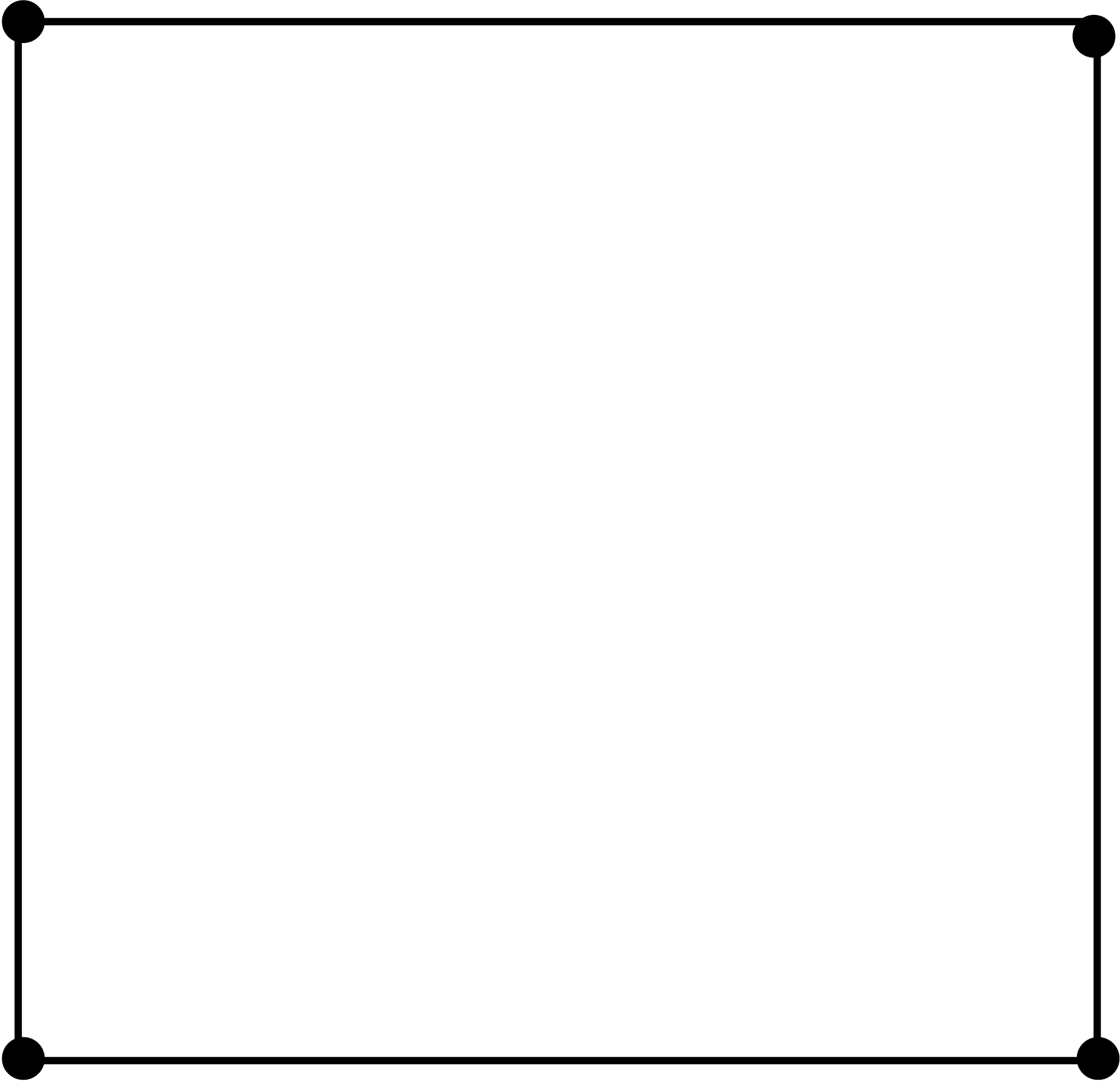}
			\quad
			\includegraphics[scale=0.22]{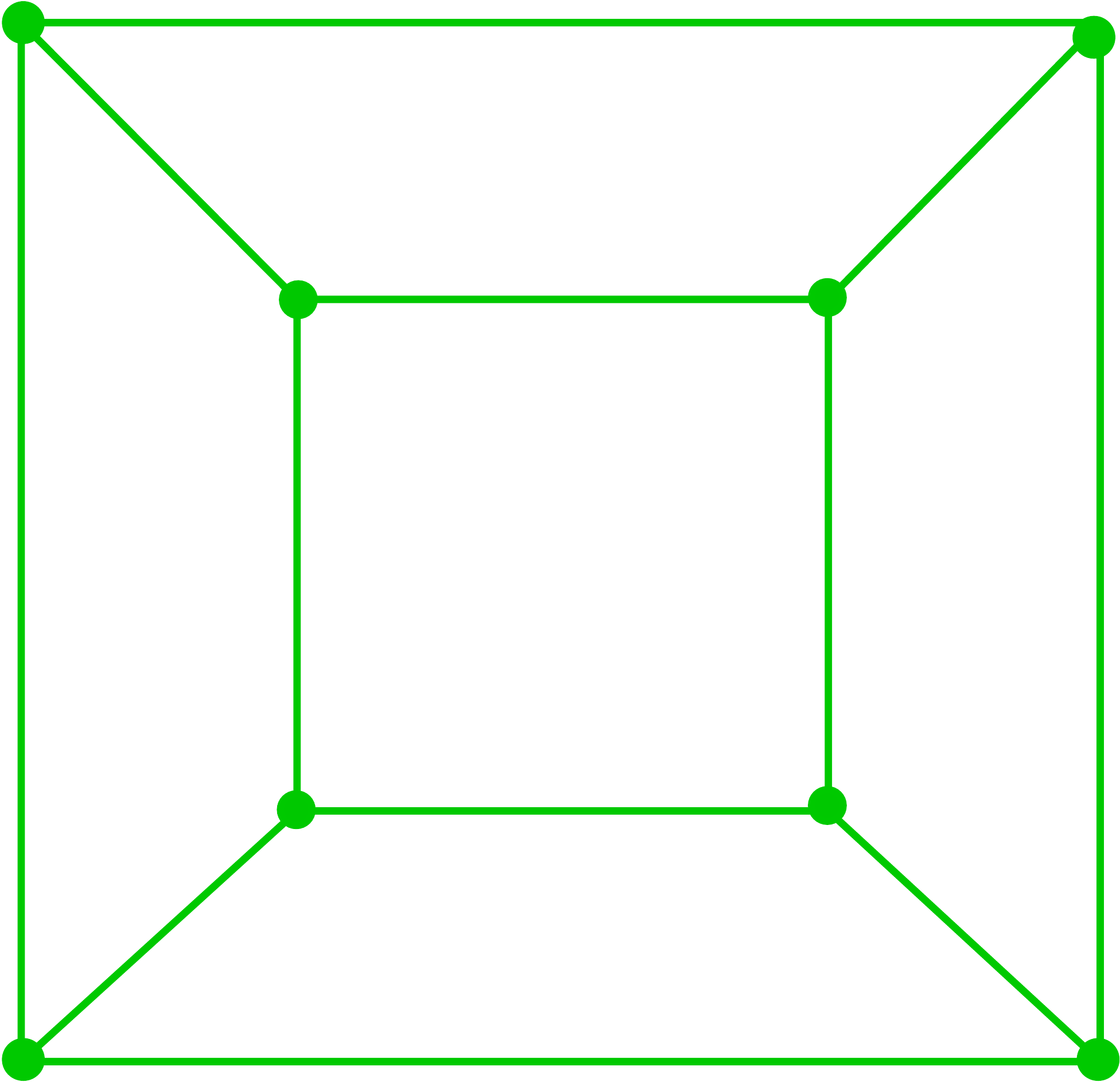}
			\quad
			\includegraphics[scale=0.22]{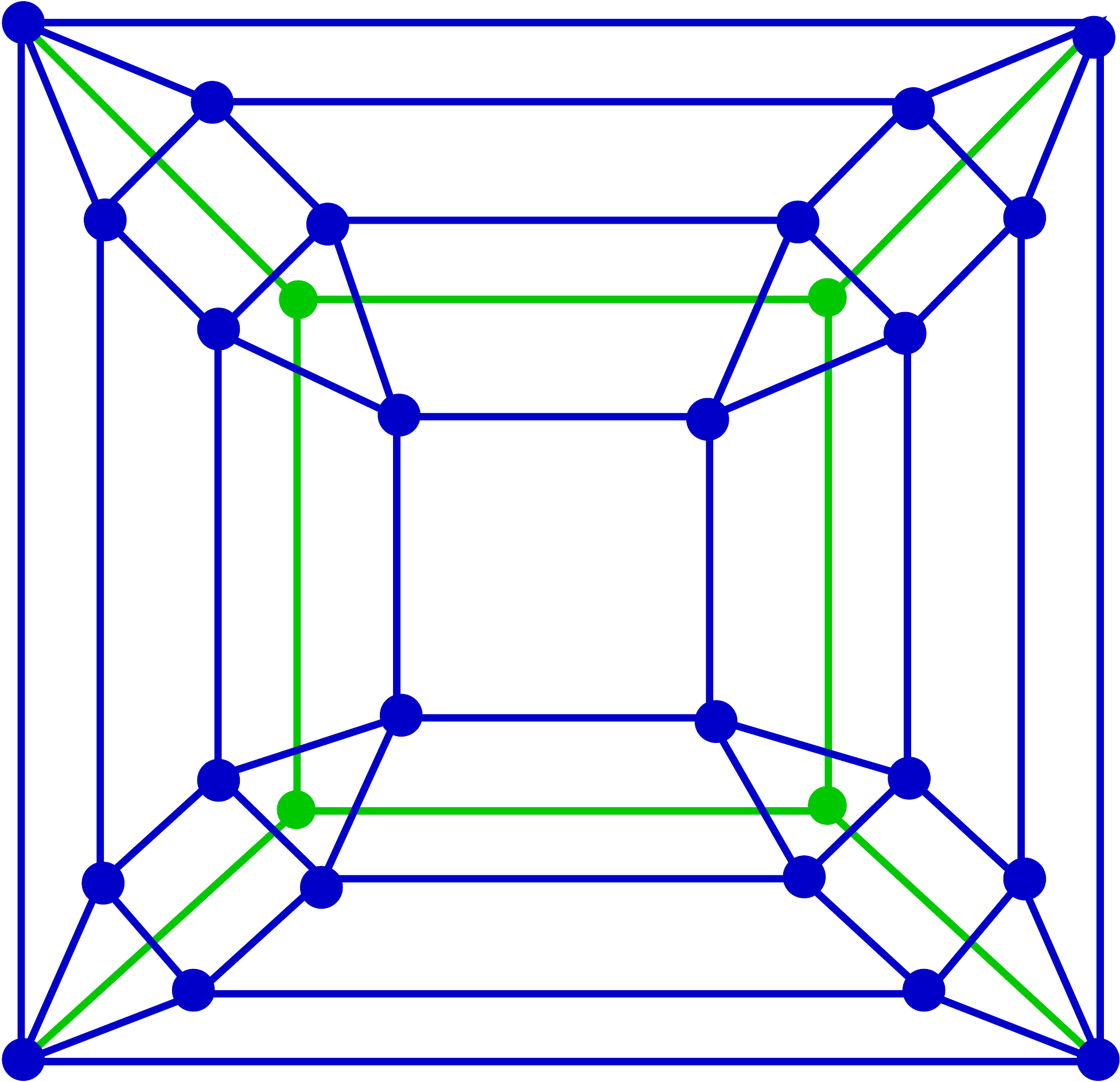}
			\caption{Applying the Euler transformation to the square $K$ whose more than one adjacent edge is shared with the outside (left) produces a complex $\tilde{K}$ in which every vertex has odd degree (middle).
				Applying the Euler transformation again to $\tilde{K}$ produces a valid complex $\hK$ where every vertex has degree $4$ (right).}
			\label{fig:outintr}
		\end{figure}
	}
\end{rem}

\begin{lem}
	\label{lem:cntshVhEhF2d}
	Let $V, E, F$ denote the sets of vertices, edges, and polygons (faces) in $K$, and let $\hV, \hE, \hF$ denote the corresponding sets in $\hK$.
	The following relations hold for the cardinalities of these sets: $|\hV| = 2|E|,~|\hE| = 4|E|$, and $|\hF| = |V| + |E| + |F|$.
\end{lem}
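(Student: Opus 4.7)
The plan is to compute $|\hF|$, $|\hE|$, and $|\hV|$ by exploiting the three-class construction of $\hK$ together with \cref{thm:deg4ET2d}. The face count is immediate from the definition in \cref{sec:eulertsfm}: the polygons of $\hK$ partition into $|F|$ Class \ref{2dETcls1} shrunken copies (one per $f \in F$), $|E|$ Class \ref{2dETcls2} quadrilaterals (one per $e \in E$), and $|V|$ Class \ref{2dETcls3} polygons (one per $v \in V$), so $|\hF| = |V| + |E| + |F|$.

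For $|\hE|$, the key observation is that every edge of $\hK$ is a facet of exactly one Class \ref{2dETcls2} polygon, whence $|\hE|$ equals the total number of Class \ref{2dETcls2} facets, namely $4|E|$. To justify this uniqueness, I would classify each edge of $\hK$ as either a ``parallel'' facet of $\hf_e$ of the form $\{\hu,\hv\}$ or $\{\hu',\hv'\}$---which is shared with a Class \ref{2dETcls1} polygon $\hf$ (or with an unchanged polygon in $\hCH \cup \hCO$ when $f' \in \CH \cup \CO$)---or a ``transverse'' facet of the form $\{\hu,\hu'\}$ or $\{\hv,\hv'\}$, which is a new edge shared with the Class \ref{2dETcls3} polygon $\hf_u$ or $\hf_v$. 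In either case the edge is uniquely tied to $e$: a parallel facet inherits $e$'s identity from the Class \ref{2dETcls1} copy of the unique polygon of $K \cup \CH \cup \CO$ containing it, and a transverse facet connects the two specific copies of a vertex in the unique pair of polygons sharing $e$. Completeness follows because by the Class \ref{2dETcls3} construction every Class \ref{2dETcls3} facet is a transverse edge of some $\hf_e$, and every Class \ref{2dETcls1} facet is a parallel edge of some $\hf_e$.

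For $|\hV|$, I would invoke the handshake lemma in the $1$-skeleton $\hG$ of $\hK$: by \cref{thm:deg4ET2d}, every vertex has degree $4$, so
\[
4|\hV| = \sum_{\hv \in \hV} \deg(\hv) = 2|\hE| = 8|E|,
\]
which gives $|\hV| = 2|E|$.

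The main obstacle I anticipate is the careful bookkeeping in the boundary case: when $f' \in \CH \cup \CO$, the parallel facet $\{\hu',\hv'\}$ of $\hf_e$ coincides with the original edge $e$ itself rather than a shrunken copy, so one must verify both that this edge is still present in $\hK$ as a facet of $\hf_e$ (it is, by the Class \ref{2dETcls2} definition) and that it is not shared with any other Class \ref{2dETcls2} polygon (which is immediate since each $K$-edge generates exactly one $\hf_e$). The rest is routine incidence accounting.
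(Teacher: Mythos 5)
Your proposal is correct, but it reverses the paper's order of attack on the two nontrivial counts. The paper first counts $|\hV|$ directly: each vertex $v$ lying in $p$ polygons spawns $p$ distinct copies (distinctness coming from the pairwise disjointness of the mitered offset polygons, so that $\hf_u \cap \hf_v = \emptyset$), giving $|\hV| = \sum_v \delta(v) = 2|E|$; it then derives $|\hE| = 4|E|$ from the handshake identity together with \cref{thm:deg4ET2d}. You instead count $|\hE|$ directly---via the incidence claim that every edge of $\hK$ is a facet of exactly one Class~\ref{2dETcls2} quadrilateral, so $|\hE| = 4|E|$---and then run the handshake lemma in the other direction to get $|\hV| = 2|\hE|/4 = 2|E|$. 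Both routes lean on \cref{thm:deg4ET2d} for exactly one of the two quantities; the bookkeeping burden just lands in a different place. The paper's version needs the disjointness of the offset polygons (which it reuses anyway in \cref{lem:hGplanar}), while yours needs the uniqueness-and-completeness argument for edge/Class-\ref{2dETcls2} incidences, which you handle adequately, including the boundary case where the facet $\{\hu',\hv'\}$ is the unshrunken edge $e$ itself. Your edge count is arguably the more self-contained of the two, since it does not presuppose the vertex count; the paper's vertex count is slightly more economical because the degree formula $\delta(v)=p$ falls straight out of \cref{asmn:Kholesoutside}. The face count is identical in both. No gaps.
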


\begin{proof}
	Let $\delta(v)$ denote the degree of vertex $v \in K$, and let $\hf_v$ be the polygon generated by $v$ in $\hK$.
	This is a Class \ref{2dETcls3} polygon.
	Following \cref{asmn:Kholesoutside} about $K, \CH, \CO$, it is clear that when $v$ belongs to $p$ polygons in $K$, we must have $\delta(v)=p$ and $\hf_v$ has $p$ vertices.
	Since each cell $\hf$ corresponding to polygon $f \in K$ is a mitered offset, and since each vertex $\hv$ is part of one such offset polygon, it follows that $\hf_u \cap \hf_v = \emptyset$ for any two vertices $u,v \in K$.
	Hence we get
	\[
	|\hV| = \sum_{v \in K} \delta(v) = 2|E|.
	\]
	By \cref{thm:deg4ET2d}, each vertex $\hv \in \hK$ has degree $\hat{\delta}(\hv) = 4$ in $\hK$.
	Combined with the result above on $|\hV|$, we get that
	\[
	|\hE| = \frac{1}{2} \sum_{\hv \in \hK} \hat{\delta}(\hv) = \frac{1}{2} \cdot 2|E| \cdot 4 = 4|E|.
	\]
	%
        Further, each polygon, edge, and vertex in $K$ generate corresponding unique polygons in $\hK$ belonging to three classes.
	Hence we get $|\hF| = |F| + |E| + |V|$.
\end{proof}

\begin{rem} \label{rem:prtlgth}
  {\rm 
    While the number of edges in $\hK$ is quadrupled, the total length of all edges gets roughly doubled (see our previous manuscript \cite{GuKr2018} for details). 
    If we want to limit the total print length, we could start with a much sparser input complex $K$, and choose the mitered offsets of its polygons such that $|K|$ is covered adequately while limiting the total length of edges in $\hE$.
  }
\end{rem}

\begin{lem}
	\label{lem:hGplanar}
	The graph $\hG$, the $1$-skeleton of $\hK$, is planar.
\end{lem}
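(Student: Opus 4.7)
The plan is to prove planarity by constructing an explicit geometric realization of $\hK$ as a cell complex in $\R^2$, from which the $1$-skeleton $\hG$ inherits a straight-line plane embedding. Once $\hK$ is realized as a polygonal subdivision of $\R^2$ with pairwise interior-disjoint cells, the fact that $\hG$ is a plane graph (hence planar) is immediate.

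First, I would place the Class \ref{2dETcls1} polygons. For each $f \in K$, realize $\hf$ as a mitered offset interior to $f$, with the offset depth chosen small enough (per the definition of polygon offset at the start of \cref{sec:geomrlzn}) that no edge of $f$ collapses and the polygon does not split. For $f \in \CH \cup \CO$, take $\hf = f$ unchanged. Since $|K| \cup |\CH| \cup |\CO| = \R^2$ by \cref{asmn:Kholesoutside} and each Class \ref{2dETcls1} polygon lies inside (or equals) its generator, the Class \ref{2dETcls1} realizations have pairwise disjoint interiors. Next, for each edge $e = \{u,v\} \in K$ shared by polygons $f$ and $f'$, consider the corridor bounded by $e$, the offset edges $\he \subset \hf$ and $\he' \subset \hf'$, and two lateral segments along the angle bisectors at $u$ and $v$. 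This corridor is a simple quadrilateral, and I would identify it with the Class \ref{2dETcls2} polygon $\hf_e$ whose new facets $\{\hu,\hu'\}$ and $\{\hv,\hv'\}$ lie on those bisectors. Distinct such corridors are disjoint because the mitered offset is generated by the straight skeleton, whose bisector rays partition each polygon of $K$ into one corridor per edge.

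Finally, for each vertex $v \in K$ of degree $p$, the region near $v$ not yet covered by a Class \ref{2dETcls1} or Class \ref{2dETcls2} cell is bounded by the $p$ lateral bisector segments contributed by the incident Class \ref{2dETcls2} quadrilaterals. I would argue this region is a simple $p$-gon whose vertices are exactly the $p$ copies $\hv_k$ of $v$ coming from the $p$ offset polygons at $v$; this realizes the Class \ref{2dETcls3} polygon $\hf_v$. Together, the three families tile $\R^2$ with pairwise disjoint interiors, giving a valid embedding of $\hK$ in $\R^2$. The $1$-skeleton $\hG$ is then a plane graph, which proves planarity.

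The main obstacle is the geometric verification at Step 3: one must show that, around each vertex $v$, the bisector segments bound a single simple $p$-gon rather than self-intersect or overlap Class \ref{2dETcls2} corridors, and that a single offset depth can be chosen uniformly over all of $K$ so that all local vertex neighborhoods behave this way simultaneously. This is precisely where \cref{asmn:Kholesoutside}---especially Conditions \ref{asmn:holeintr} and \ref{asmn:outintr} forbidding two adjacent boundary edges at a polygon, together with \cref{rem:adjbdyedges}---becomes essential, as it rules out the pathological boundary configurations that would otherwise prevent a clean bisector arrangement.
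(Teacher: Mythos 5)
Your proposal is correct and follows essentially the same route as the paper's proof: both establish planarity by geometrically realizing the Class \ref{2dETcls1}, \ref{2dETcls2}, and \ref{2dETcls3} cells as simple polygons with pairwise interior-disjoint realizations (offset polygons, the trapezoidal corridors between an edge's two offset copies, and the $p$-gon left over around each vertex), so that $\hG$ is a plane graph. Your version is somewhat more explicit about the tiling and the role of the straight-skeleton bisectors, and the obstacle you flag at the vertex gons is exactly the point the paper dispatches by observing that the facets of $\hf_v$ are precisely the new edges of the incident Class \ref{2dETcls2} polygons.
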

\begin{proof}
	By the definition of Euler transformation, each polygon $\hf \in \hK$ that is a mitered offset of  polygon $f \in K$ is a simple closed polygon.
        Any two polygons $\hf, \hf' \in \hK$ of Class \ref{2dETcls1} generated by polygons $f,f' \in K$ satisfy $\hf \cap \hf' = \emptyset$.
	
	Class \ref{2dETcls2} polygons $\hf_e, \hf_{e'} \in \hK$ generated by edges $e,e' \in K$ intersect at a vertex $\hv$ if and only if $e$ and $e'$ are adjacent edges of a polygon $f \in K$ meeting at the vertex $v$.
	Since each $\hf \in \hK$ is a mitered offset of some polygon $f \in K$, at least one of the two copies $\he, \he'$ of edges in $\hK$ corresponding to the edge $e \in K$ is shorter in length than $e$.
	If $e$ is not a boundary edge then both $\he$ and $\he'$ are shorter than $e$.
	If $e$ is a boundary edge then one edge out of $\he, \he'$ has the same length as $e$ while the other is shorter.
	Hence each polygon $\hf_e$ of Class \ref{2dETcls2} is a (convex) trapezium.
	
	Since all edges of the polygon $\hf_v$ of Class \ref{2dETcls3} generated by vertex $v \in K$ are precisely the \emph{new} edges added to define the Class \ref{2dETcls2} polygons, each $\hf_v$ is a simple closed polygon.
	Further, by the properties of Class \ref{2dETcls2} polygons specified above, $\hf_v \cap \hf_{v'} = \emptyset$ for any two vertices $v, v' \in K$.
	
	Thus every polygon in $\hK$ is simple and closed.
	Any two such polygons intersect at most in an edge or a vertex, and any two edges in $\hK$ intersect at most in a vertex.
	Hence $\hG$, the $1$-skeleton of $\hK$, is a planar graph.
\end{proof}

\begin{rem}
	\label{rem:disjholes}
	{\rm
		We illustrate why we require holes in the domain to be disjoint (Condition \ref{asmn:sepholes} in \cref{asmn:Kholesoutside}).
		Consider the input complex $K$ with two holes $h, \bar{h} \in \CH$ that intersect at a vertex $v$.
		The corresponding vertex $\hv$ in the transformed complex $\hK$ will not have a degree of $4$.
		There will also be other vertices in $\hK$ that have odd degree, which are circled in \cref{fig:disjholes}.
		Let these odd-degree vertices be labeled $\hv',\hv''$.
		Technically, there are two \emph{identical} copies of the edge $\{\hv,\hv'\}$ and similarly of $\{\hv,\hv''\}$.
		But such duplicate edges make the graph $\hG$ ($1$-skeleton of $\hK$) non-planar.
		If we include only one copy of each pair of duplicate edges, we get odd degree vertices in $\hG$.
                \begin{figure}[htp!]
			\centering
			\includegraphics[scale=0.35]{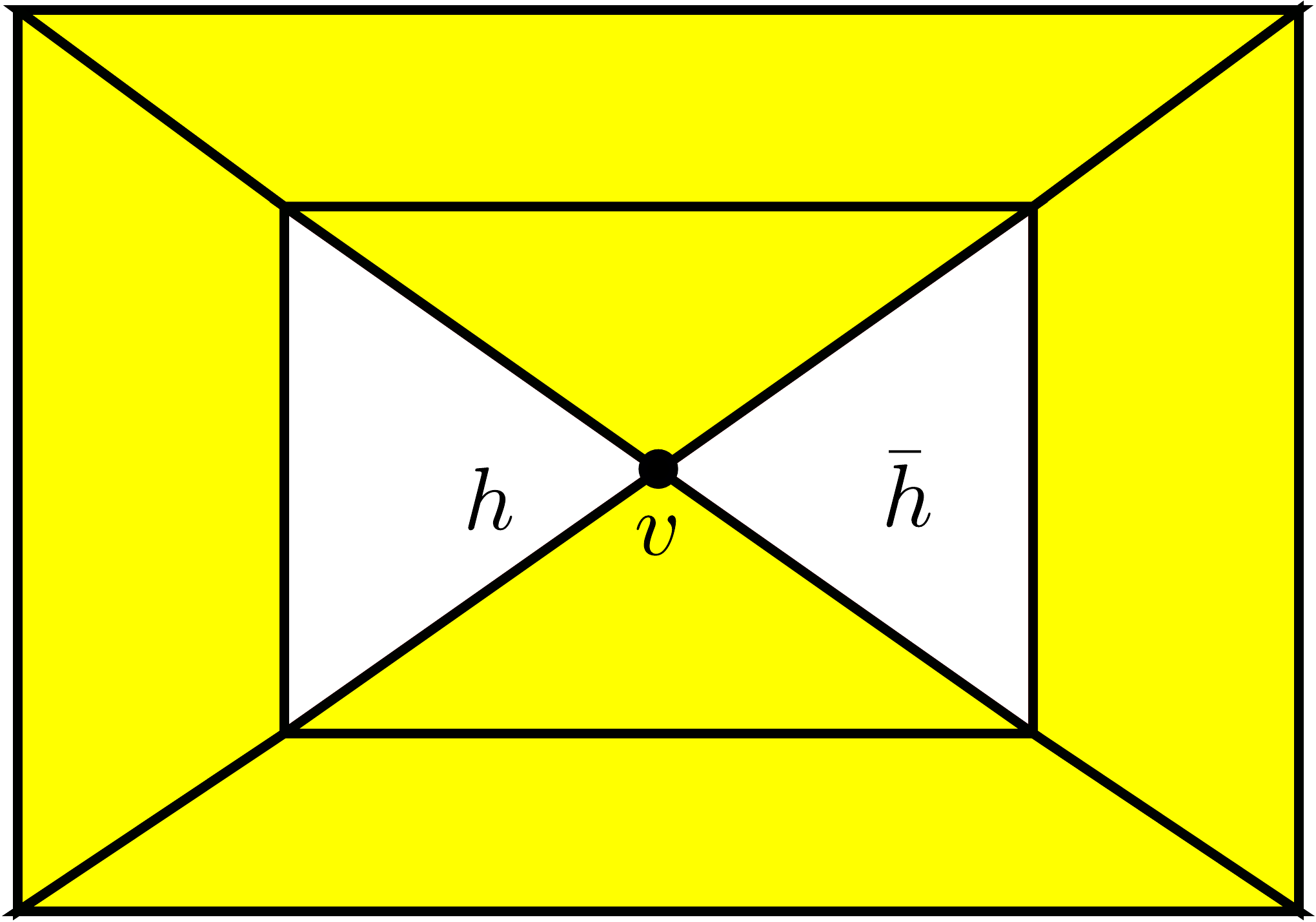}
			\quad\quad
			\includegraphics[scale=0.35]{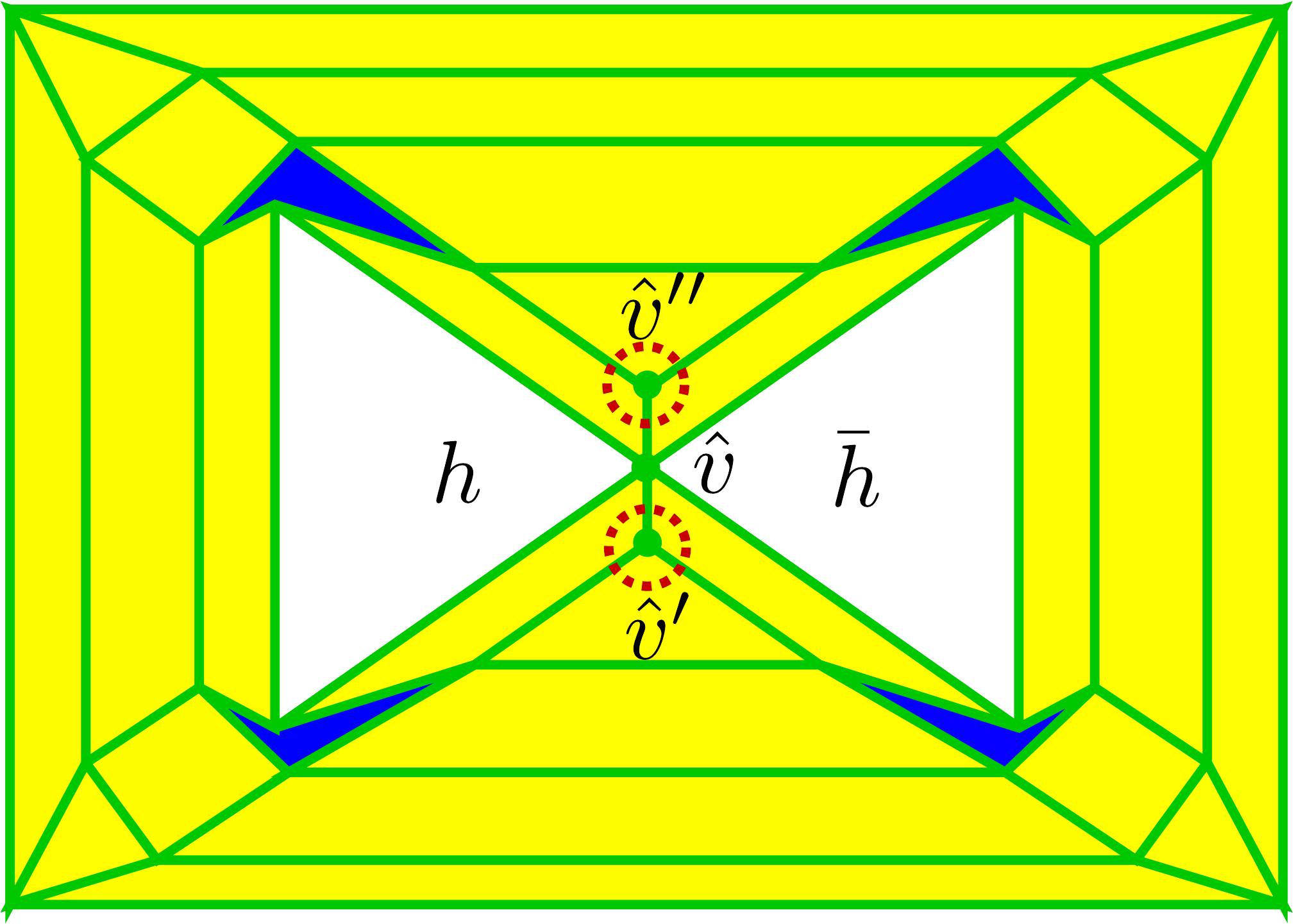}
			\caption{Two holes touching at a vertex (left), and the result of applying Euler transformation (right).
				Vertices with odd degree in the result are circled.
				The four polygons shaded in blue are of Class \ref{2dETcls3} generated by vertices in the input complex (see \cref{sec:eulertransformation}).
				These cells could be nonconvex.
			}
			\label{fig:disjholes}
		\end{figure}
		
	}
\end{rem}

We pointed out in the Proof of \cref{lem:hGplanar} that the polygons of Class \ref{2dETcls2} in $\hK$ generated by edges are convex $4$-gons.
Each polygon $\hf \in \hK$  of Class \ref{2dETcls1} is geometrically similar to the polygon $f \in K$ generating it.
Hence if $f$ is convex, so is $\hf$.
But polygons of Class \ref{2dETcls3} generated by vertices are not guaranteed to be convex.
In fact, when $v \in K$ is a boundary vertex where $K$ has a notch, or an ``incut corner'', $\hf_v \in \hK$ could be nonconvex---see \cref{fig:disjholes} for illustrations.
We finish with the result on $\hK$ remaining connected.

\begin{prop}
	\label{prop:hKcnctd2d}
	If $K$ is connected, then so is $\hK$.
\end{prop}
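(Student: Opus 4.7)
The plan is to reduce connectivity of $\hK$ to that of $K$ in two steps: first, show that all Class \ref{2dETcls1} polygons of $\hK$ lie in a common connected component of the $1$-skeleton $\hG$, using path-connectedness of $K$'s $1$-skeleton $G$; and second, show that every remaining vertex of $\hK$ is linked into that component via a Class \ref{2dETcls2} polygon.

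Carrying out the second step first (since it is easier): every vertex of $\hK$ is a copy of some $v \in K$ produced by duplicating a polygon of $K \cup \CH \cup \CO$. If that polygon belongs to $K$, then the copy is already a vertex of a Class \ref{2dETcls1} polygon $\hf$. Otherwise $v$ lies on a boundary edge $e \in K$ (using \cref{asmn:Kholesoutside}), the copy appears as a vertex of the Class \ref{2dETcls2} polygon $\hf_e$, and the trapezoidal edge of $\hf_e$ incident to that copy (one of $\{\hu, \hu'\}$ or $\{\hv, \hv'\}$) directly links it to a vertex of the Class \ref{2dETcls1} polygon $\hf$ associated with the unique $f \in K$ containing $e$.

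For the first step, I would show that whenever $f, f' \in K$ share at least a vertex, the Class \ref{2dETcls1} polygons $\hf, \hf'$ lie in the same component of $\hG$. If $f \cap f' \supseteq e$ for some edge $e$, the Class \ref{2dETcls2} polygon $\hf_e$ contributes the edges $\{\hu, \hu'\}$ and $\{\hv, \hv'\}$ bridging $\partial \hf$ to $\partial \hf'$. If instead $f \cap f' = \{v\}$ is only a vertex, the Class \ref{2dETcls3} polygon $\hf_v$ is, by the proof of \cref{lem:hGplanar}, a simple closed polygon whose boundary cycle contains both the copy of $v$ lying in $\hf$ and the copy lying in $\hf'$, yielding the linking path. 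Since $K$ is connected and pure, any two polygons of $K$ can be joined by a chain $f = f_0, f_1, \ldots, f_m = f'$ in which consecutive entries share at least a vertex: trace a path in $G$ between a vertex of $f$ and a vertex of $f'$, and for each traversed edge pick an incident polygon, which exists by purity. Concatenating the local linkings above yields a path in $\hG$ from $\hf$ to $\hf'$.

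The main difficulty is the vertex-only adjacency case at a boundary or articulation vertex of $K$, where the fan of polygons of $K$ incident to $v$ need not close up, and one must be careful in invoking the simple-closed-polygon property of $\hf_v$. I would resolve this by bringing in the Class \ref{2dETcls2} polygons of the boundary edges incident to $v$: their linking edges $\{\hv, \hv'\}$ glue the endpoints of the fan through polygons in $\hCH \cup \hCO$, so that the combined subgraph of $\hG$ spanned by all copies of $v$ and the boundary trapezoids at $v$ stays connected. Together with the reduction step, this places every vertex of $\hK$ in the single component of $\hG$ containing all Class \ref{2dETcls1} polygons, establishing connectedness of $\hK$.
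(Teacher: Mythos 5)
Your proof is correct and follows essentially the same route as the paper's: both reduce connectivity of $\hK$ to adjacency of polygons in $K$, using Class \ref{2dETcls2} polygons to bridge edge-adjacent pairs and Class \ref{2dETcls3} polygons to bridge vertex-adjacent pairs, with the articulation/boundary case resolved by gluing through the original vertex $v$ via the boundary trapezoids (the paper phrases this as two Class \ref{2dETcls3} polygons intersecting in the identical copy $\hv = v$). Your version is somewhat more explicit about the global chain argument and about accounting for vertices of $\hK$ lying only on $\hCH \cup \hCO$, but the underlying argument is the same.
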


\begin{proof}
  We noted in the proof of \cref{lem:hGplanar} that the mitered offset polygons in $\hK$ are pairwise disjoint.
  But we show that when polygons $f,f' \in K$ are connected, so are the corresponding offset polygons $\hf, \hf' \in \hK$.
  By \cref{asmn:Kholesoutside} on the input complex, when polygons $f,f' \in K$ intersect, they do so either in an edge $e$ or in a vertex $v$.
  If $f \cap f' = e$, then by the definition of Euler transformation (\cref{sec:eulertransformation}), the corresponding offset polygons $\hf,\hf' \in \hK$ are connected by the pair of new edges defining $\hf_e$, the $4$-gon of Class \ref{2dETcls2} generated by edge $e$.
  If $f \cap f' = v$ and $v$ is not an articulation vertex, then the corresponding offset polygons $\hf,\hf' \in \hK$ are similarly connected by the Class \ref{2dETcls3} polygon $\hf_v$ generated by $v$, with the corresponding copies $\hv, \hv'$ of $v$ in $\hf,\hf'$, respectively, being vertices of $\hf_v$.
  If $f \cap f' = v$ that is an articulation vertex, then $\hv=v$ is the identical copy of this vertex in $\hK$.
  There will be two Class \ref{2dETcls3} polygons $\hf_v, \hf_v'$ generated by $v$ in the two biconnected components joined at $v$, with $\hf_v \cap \hf_v' = \hv$.
  Further, $\hf_v$ is connected to $\hf$ and $\hf_v'$ to $\hf'$, ensuring that $\hf$ and $\hf'$ are connected.
  It follows that $\hK$ is connected when the input complex $K$ is connected.
\end{proof}

\section{Generalized Euler Transformation} \label{sssec:genET}

We now consider generalizations of the Euler transformation where we could relax parts of \cref{asmn:Kholesoutside}.
The goal is to allow combinatorial and topological changes in the polygons undergoing transformation.

Consider a $2$-complex $K$ consisting of a single polygon $f$.
$K$ does not satisfy the input condition for Euler transformation, since adjacent edges are shared with the outside (Figure \ref{fig:2cellpartition}).
Nevertheless, we apply the transformation to $K$.
In the resulting $\hK$, all vertices will have odd degree.
But this $\hK$ satisfies the input condition.
Hence if we apply the Euler transformation again to $\hK$, i.e., we apply it \emph{twice} on $K$, the resulting complex has a $1$-skeleton that is Euler in the default setting.
We define this process as the generalized Euler transformation.

\begin{figure}[htp!] 
  \centering
  \includegraphics[scale=0.33]{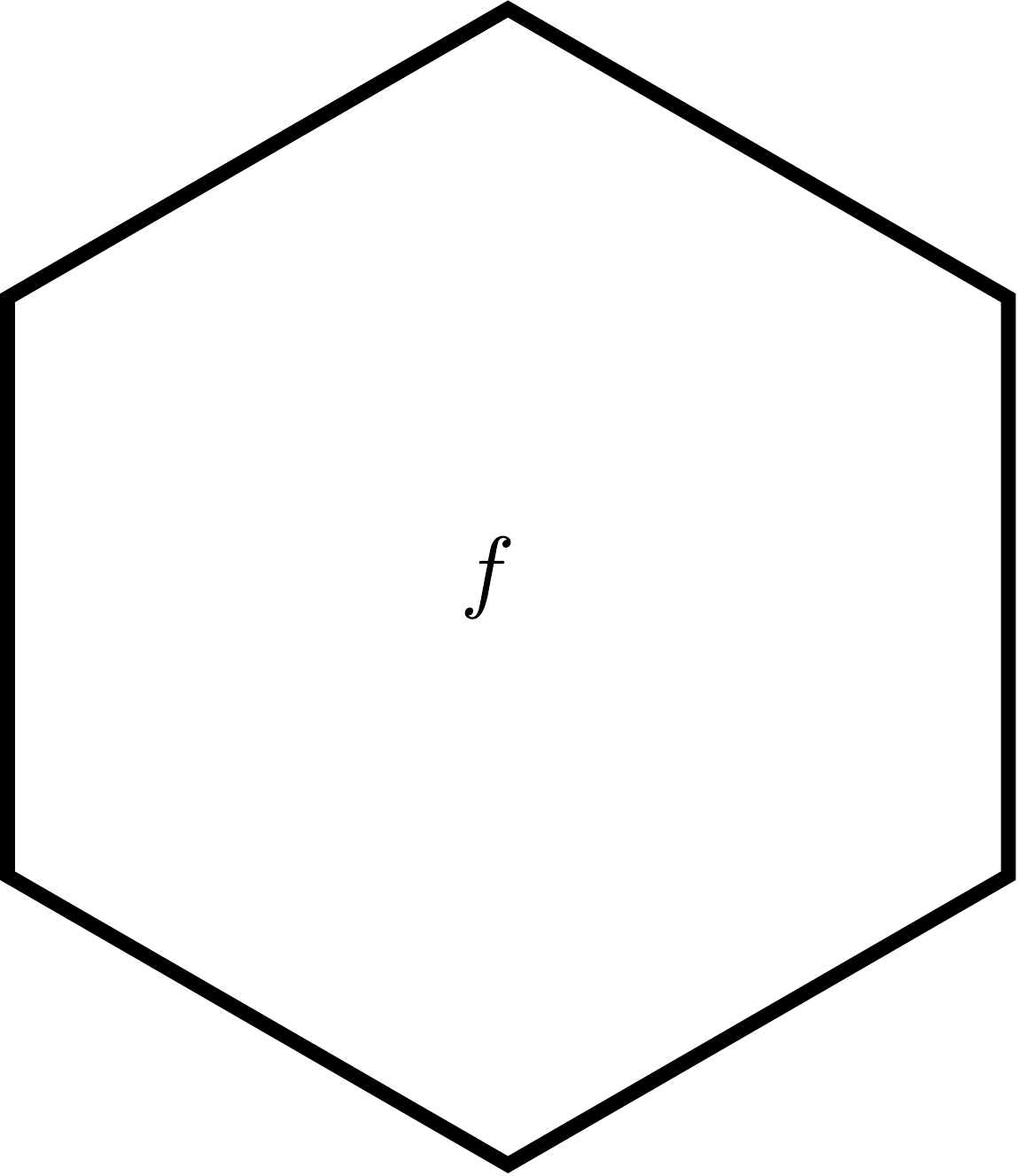}
  \quad\quad
  \includegraphics[scale=0.33]{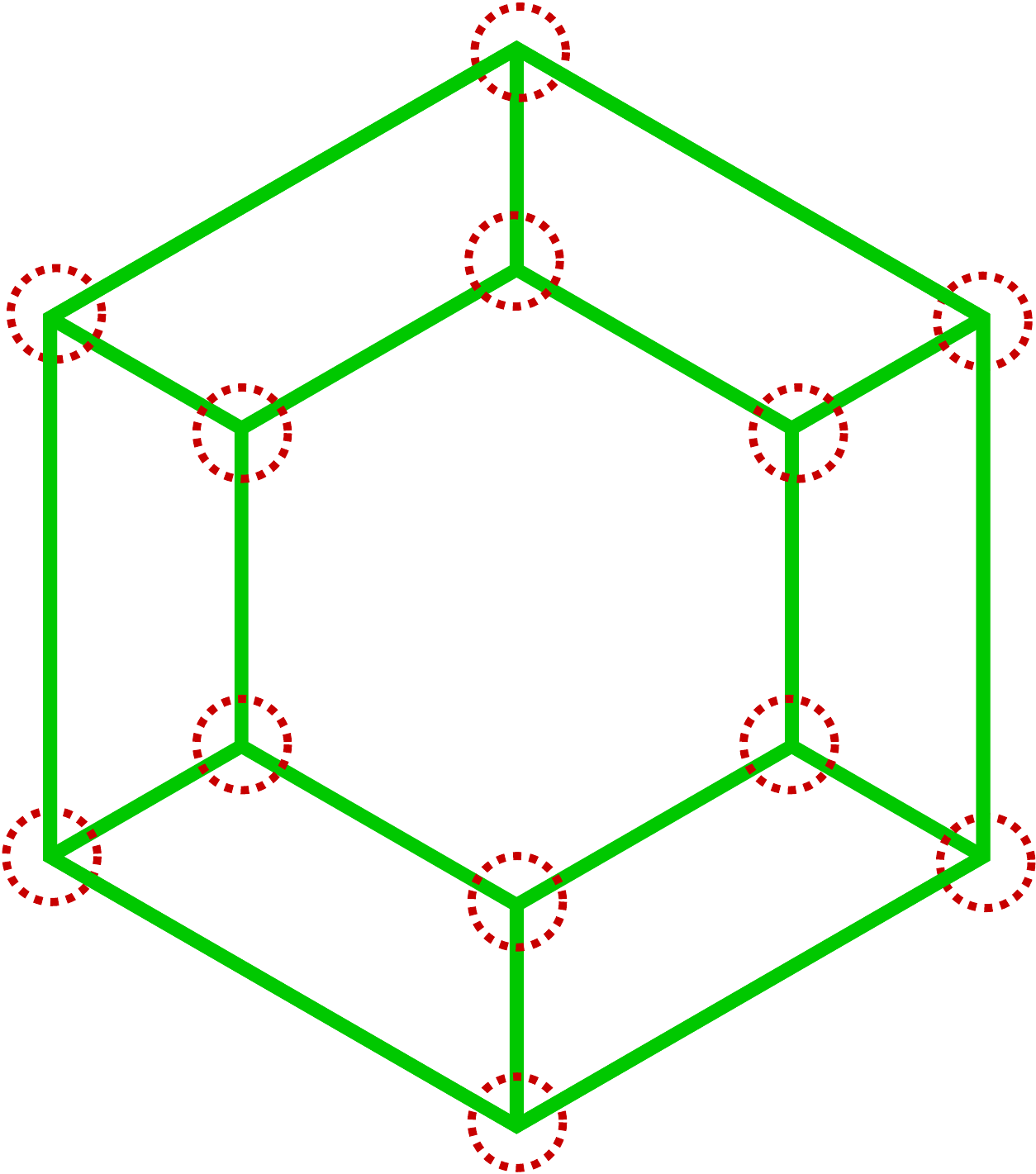}
  \quad\quad
  \includegraphics[scale=0.33]{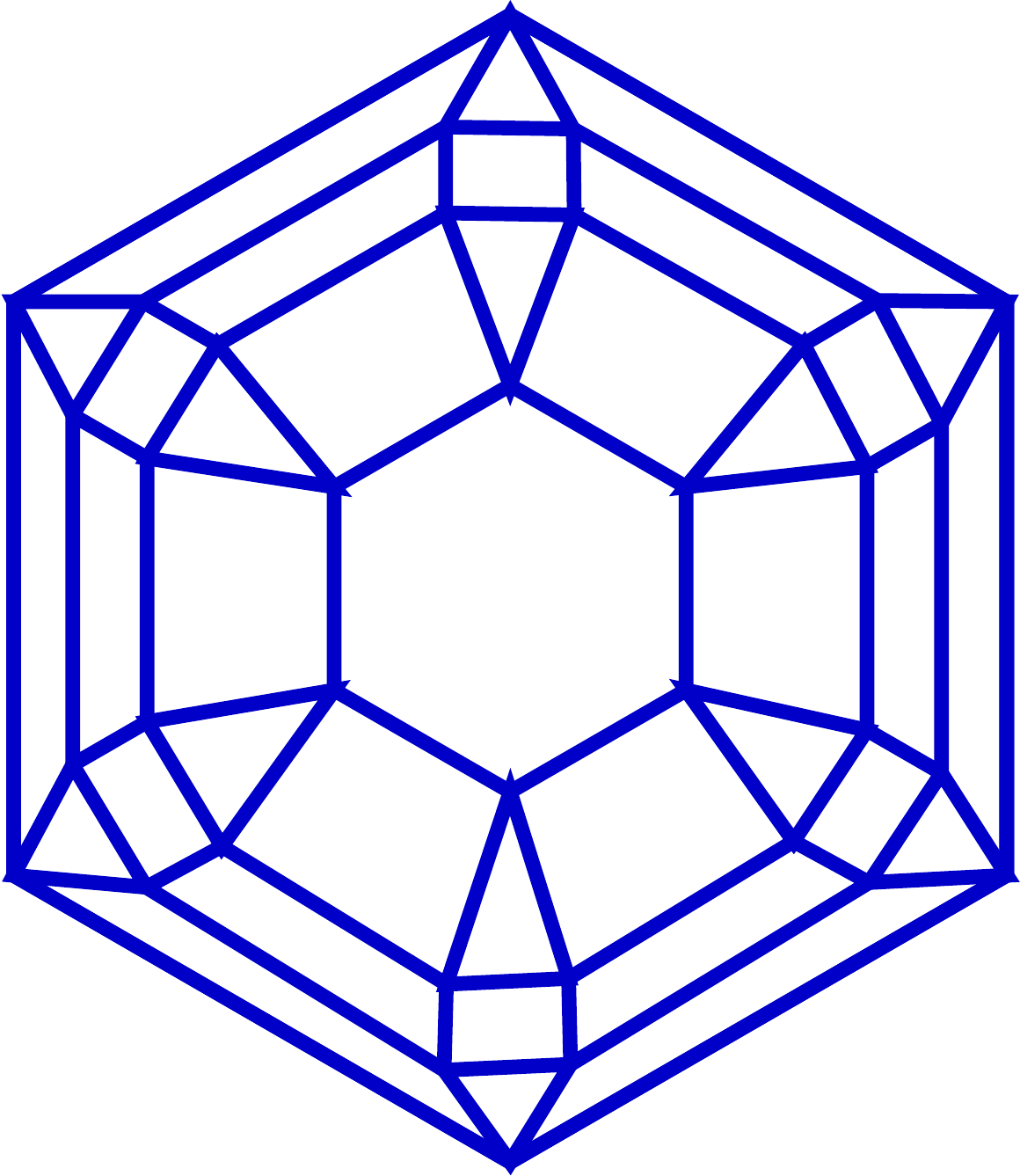}
  \caption{\label{fig:2cellpartition}
    $\hK$ consisting of a single polygon $f$ (left) and its Euler transformation $\hK$ in green (middle).
    Vertices circled in red have odd degrees.
    The complex in blue (right) is the Euler transformation of $\hK$, and its $1$-skeleton is Euler.}
\end{figure}

\begin{defn}
	\label{def:genET}
	(Generalized Euler Transformation in $d=2$)
	Let $K$ be a $2$-dimensional cell complex in $\R^2$ with polygons possibly having adjacent boundary edges.
	Apply the Euler transformation on $K$ to obtain $\hK$, which will always satisfy \cref{asmn:Kholesoutside}.
        Now apply Euler transformation on $\hK$.
\end{defn}
We could use the generalized Euler transformation to improve mechanical properties of the design (by increasing the density of the mesh) in some regions while still guaranteeing that the $1$-skeleton of the $2$-complex is Euler.
Note that the density of the mesh could increase significantly by this process.
\begin{lem}
  \label{lem:trgrrate} 
  After $m$ transformations of complex $K$, the number of vertices $|\hat{V}^m| = 2\cdot 4^{m-1}|E|$ and number of edges $|\hat{E}^m|=4^m|E|$.	
\end{lem}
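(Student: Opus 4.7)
The plan is a straightforward induction on $m$, relying on \cref{lem:cntshVhEhF2d} applied iteratively. The base case $m=1$ is immediate: \cref{lem:cntshVhEhF2d} gives $|\hat{V}^1| = 2|E| = 2 \cdot 4^0 |E|$ and $|\hat{E}^1| = 4|E| = 4^1 |E|$, matching the claimed formulas.

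For the inductive step, I would assume the statement for some $m \geq 1$ and apply \cref{lem:cntshVhEhF2d} to $\hat{K}^m$, treating it as the ``input'' complex for the $(m+1)$-st transformation. The key relations supplied by that lemma are that if a transformation is applied to a complex with edge set $E'$, the resulting complex has $2|E'|$ vertices and $4|E'|$ edges. Applying this with $E' = \hat{E}^m$ yields
\[
|\hat{V}^{m+1}| = 2|\hat{E}^m| = 2 \cdot 4^m |E| = 2 \cdot 4^{(m+1)-1} |E|,
\qquad
|\hat{E}^{m+1}| = 4|\hat{E}^m| = 4^{m+1} |E|,
\]
which is exactly the claim for $m+1$.

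The only subtlety worth a sentence in the write-up is ensuring that \cref{lem:cntshVhEhF2d} is legitimately applicable at every step: the lemma's hypotheses assume the input complex satisfies \cref{asmn:Kholesoutside}. This is precisely what \cref{rem:adjbdyedges} (and the discussion motivating the generalized transformation) guarantees, namely that $\hat{K}^1$ already satisfies \cref{asmn:Kholesoutside} even if $K$ does not, and each further transformation preserves this property because every vertex in $\hat{K}^m$ for $m \geq 1$ has degree $4$ and no two adjacent boundary edges can share a polygon (by \cref{thm:deg4ET2d} and the structure of Class \ref{2dETcls2} polygons). Once this is noted, the induction closes without further calculation, and no step poses a real obstacle; the lemma is essentially a bookkeeping consequence of \cref{lem:cntshVhEhF2d}.
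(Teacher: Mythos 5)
Your induction is exactly the paper's argument: the paper's proof also just applies \cref{lem:cntshVhEhF2d} once to get the base case and then iterates (``extending the argument''), computing $|\hat{V}^m| = 2|\hat{E}^{m-1}|$ and $|\hat{E}^m| = 4|\hat{E}^{m-1}|$. Your added remark about why \cref{lem:cntshVhEhF2d} remains applicable at each step (the transformed complex satisfying \cref{asmn:Kholesoutside}) is a point the paper leaves implicit, but otherwise the two proofs coincide.
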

\begin{proof}
  After the first transformation, we get $|\hat{E}^{1}| = 4 |E|$ (by \cref{lem:cntshVhEhF2d}).
  Extending the argument, after $m$ transformations we get $|\hat{E}^m| = 4^m |E|$.
  Similarly, we get $|\hat{V}^1| = 2 |E|$, then $|\hat{V}^m| = 2 |\hat{E}^{m-1}| = 2\cdot4^{m-1}|E|$.     	
\end{proof}

We present Euler transformation with \emph{combinatorial} and \emph{topological} changes to a polygon resulting from mitered offset. 

\subsection{Euler Transformation with Combinatorial changes}\label{sec:eulertransformwithcombin}

Suppose $\hf, \hf_e, \hf_v$ are $3$ classes of polygons corresponding to a polygon ($f$), edge ($e$), and a vertex($v$) in $K$.
Maximum mitered offset in a polygon of the input complex $K$ is limited by the smallest edge length, as the mitered offset in our original Euler transformation assumes no combinatorial and topological changes in $\hf$.
\textit{Combinatorial change in $\hK$ is a change in number of edges of some Class \ref{2dETcls1} polygon $\hf$ in $\hK$ from corresponding polygon in $f$ in $K$.} 
Suppose polygon $f$ has $n$ edges and is now permitted to have combinatorial changes when generating its mitered offset.
Then we can reduce at most $n-3$ edges as we want $\hf$ to still be a polygon, and a polygon has at least $3$ edges.
If $\hf_e$ is sharing edges with two Class \ref{2dETcls1} polygons and if both of those edges are reduced, then $\hf_e$ will collapse into an edge (see \cref{fig:eulertransfecollapse}).
Since Class \ref{2dETcls3} polygons ($\hf_v$) do not share edges with any Class \ref{2dETcls1} polygons, combinatorial changes in the Euler transformation will not affect the number of edges in $\hf_v$.

\begin{lem}
  \label{lem:combinatorialdegree}
  Let $\hv$ is a vertex in a Class \ref{2dETcls1} polygon $\hf$ of $\hK$ created after collapsing $\pi$ adjacent edges in $\hf$, where combinatorial changes are allowed.
  Let $\hf_e$ be a Class \ref{2dETcls2} polygon that contains one of these collapsed edges.
  If no $\hf_e$ is collapsed to an edge, then degree of $\hv$ is $~2\pi + 4~$ else $2\pi + 4 - m$ where $m$ is number of polygons similar to $\hf_e$ collapsed to an edge. 
\end{lem}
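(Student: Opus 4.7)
The plan is to count the edges incident to $\hv$ by grouping them according to the cell of $\hK$ they lie in, and then to track which pairs get identified when the quadrilateral collapses occur. I would first fix clean notation: suppose the boundary of $\hf$ (before combinatorial collapse) is the cyclic sequence of vertices $w_1,w_2,\ldots,w_n$ with edges $\he_i=\{w_i,w_{i+1}\}$, and without loss of generality the $\pi$ adjacent edges that collapse are $\he_1,\ldots,\he_\pi$, so the $\pi+1$ vertices $w_1,\ldots,w_{\pi+1}$ merge to the single vertex $\hv$. For each $i$, let $Q_i=\hf_{e_i}$ denote the Class \ref{2dETcls2} quadrilateral whose top side is $\he_i$, with bottom side $\he'_i$ lying in the adjacent Class \ref{2dETcls1} polygon $\hf'_i$.

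Next I would split the edges at $\hv$ into two groups. Group (a): edges of the collapsed polygon $\hf^{new}$ itself. Because the chain of collapsed edges is contiguous, only the two edges of $\hf^{new}$ flanking the chain meet $\hv$, namely the remnants of $\he_n$ (joining $w_n$ to $w_1=\hv$) and $\he_{\pi+1}$ (joining $w_{\pi+1}=\hv$ to $w_{\pi+2}$). This contributes $2$ edges. Group (b): non-$\hf$ edges inherited from Class \ref{2dETcls2} quads. By \cref{thm:deg4ET2d}, in the uncollapsed complex each of $w_1,\ldots,w_{\pi+1}$ has exactly two non-$\hf$ incident edges (the right side of $Q_{j-1}$ and the left side of $Q_j$ at $w_j$), so summing over all merged vertices yields $2(\pi+1)=2\pi+2$ non-$\hf$ edges at $\hv$ counted with multiplicity.

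Then I would determine identifications inside Group (b). The two ``boundary'' quads $Q_n$ and $Q_{\pi+1}$ are not in the collapsed chain; they contribute one distinct edge each to $\hv$ (their side at $w_1$ and $w_{\pi+1}$ respectively), and by hypothesis the other endpoints remain distinct. For each $i\in\{1,\ldots,\pi\}$, both the left side (at $w_i$) and the right side (at $w_{i+1}$) of $Q_i$ become incident to $\hv$; their other endpoints are the two copies of $v_i$ and $v_{i+1}$ in $\hf'_i$, which coincide if and only if $\he'_i$ also collapses, i.e., if and only if $Q_i$ collapses to an edge. Hence each uncollapsed $Q_i$ contributes $2$ distinct edges to $\hv$, while each collapsed $Q_i$ contributes only $1$, reducing the Group (b) count by exactly one per collapse.

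Combining: if $m$ of $Q_1,\ldots,Q_\pi$ collapse to edges, the number of non-$\hf$ edges at $\hv$ is $2(\pi-m)+m+2=2\pi+2-m$; adding the $2$ edges from $\hf^{new}$ gives $\deg(\hv)=2\pi+4-m$, which specialises to $2\pi+4$ when $m=0$. The main obstacle, and the step that deserves the most care, is the identification analysis for $Q_i$: one must argue rigorously that the two ``new'' sides of a non-collapsed $Q_i$ really do point to two distinct vertices of $\hf'_i$ (so they remain distinct edges at $\hv$), and conversely that collapse of both top and bottom is the \emph{only} way their endpoints can merge. All other steps are straightforward bookkeeping once this local picture at each $Q_i$ is pinned down.
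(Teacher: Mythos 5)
Your proof is correct and follows essentially the same route as the paper's: a direct count of the edges incident to the merged vertex $\hv$ (the paper tallies $3+3+2(\pi-1)=2\pi+4$ by distinguishing the two end vertices from the $\pi-1$ interior vertices of the collapsed path, while you tally $2+2(\pi+1)$ by separating the two surviving edges of $\hf$ from the Class \ref{2dETcls2} quad sides), followed by subtracting one edge for each Class \ref{2dETcls2} quadrilateral that degenerates to an edge. Your explicit identification analysis for the collapsed $Q_i$ is a slightly more careful spelling-out of a step the paper only asserts (``two edges sharing $\hv$ \ldots\ is replaced by one edge''), but it is the same argument.
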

\begin{proof}
  Since the polygon $\hf$ is allowed to have combinatorial changes in $\hK$, it will change degree of vertices in $\hf$.
  $\pi$ adjacent edges in $\hf$ have $\pi + 1$ vertices.
  Each end vertex of the path created by $\pi$ adjacent edges adds $3$ edges to $\hv$, and each interior vertex of the path adds $2$ edges to $\hv$ (\cref{fig:lemmaedgecontraction}).
  Hence $\hv$ has degree $2(\pi -1) + 3 + 3 = 2\pi + 4$ and $1$-skeleton of $\hK$ is still Euler.
  If $m > 0$ Class \ref{2dETcls2} polygons sharing one of these adjacent edges are collapsed into edges, then two edges sharing $\hv$ of each collapsed Class \ref{2dETcls2} polygon is replaced by one edge.
  Also, $m \leq \pi$ since each distinct edge in any Class \ref{2dETcls1} polygon is shared by a unique Class \ref{2dETcls2} polygon in the Euler transformation.
  This implies $\hv$ has degree $2\pi + 4 -2m + m = 2\pi + 4 - m$ and $1$-skeleton of $\hK$ is Euler depending on $m$ is even or odd (see \cref{fig:lemmaedgecontraction}).
\end{proof}

\begin{figure}[htp!] 
  \centering
  \includegraphics[scale=0.30]{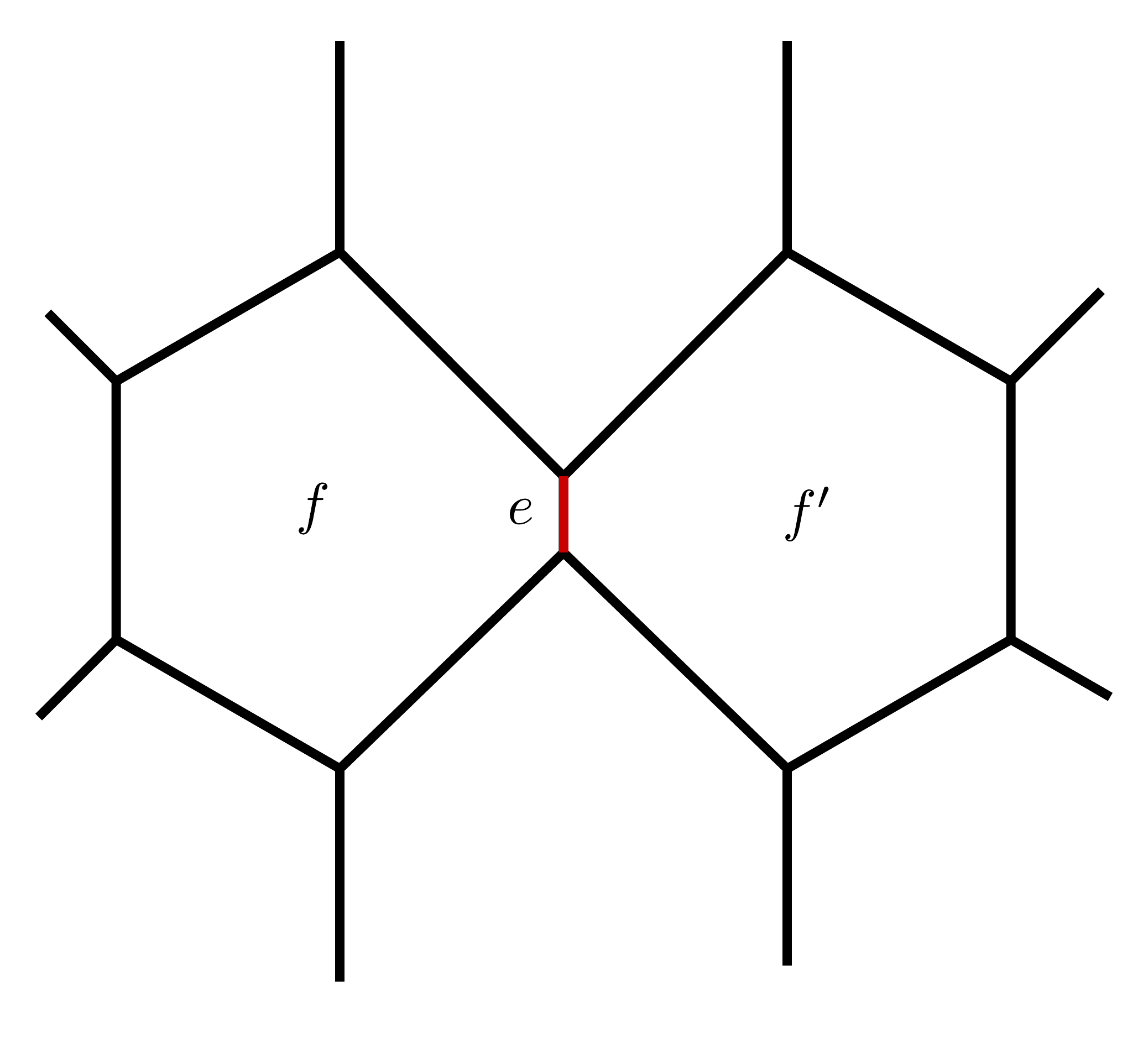}
  \quad
  \includegraphics[scale=0.35]{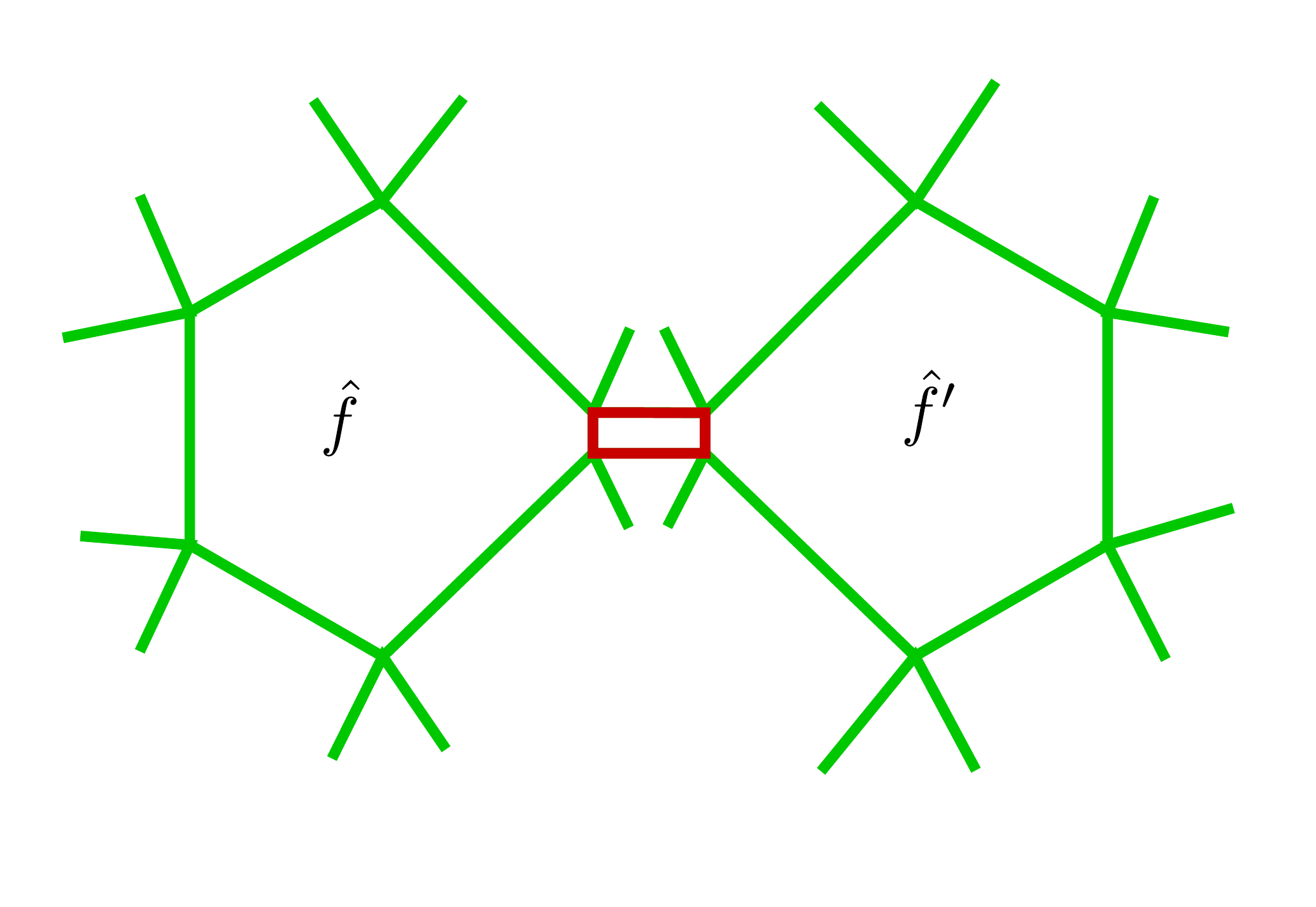} \\
  \includegraphics[scale=0.33]{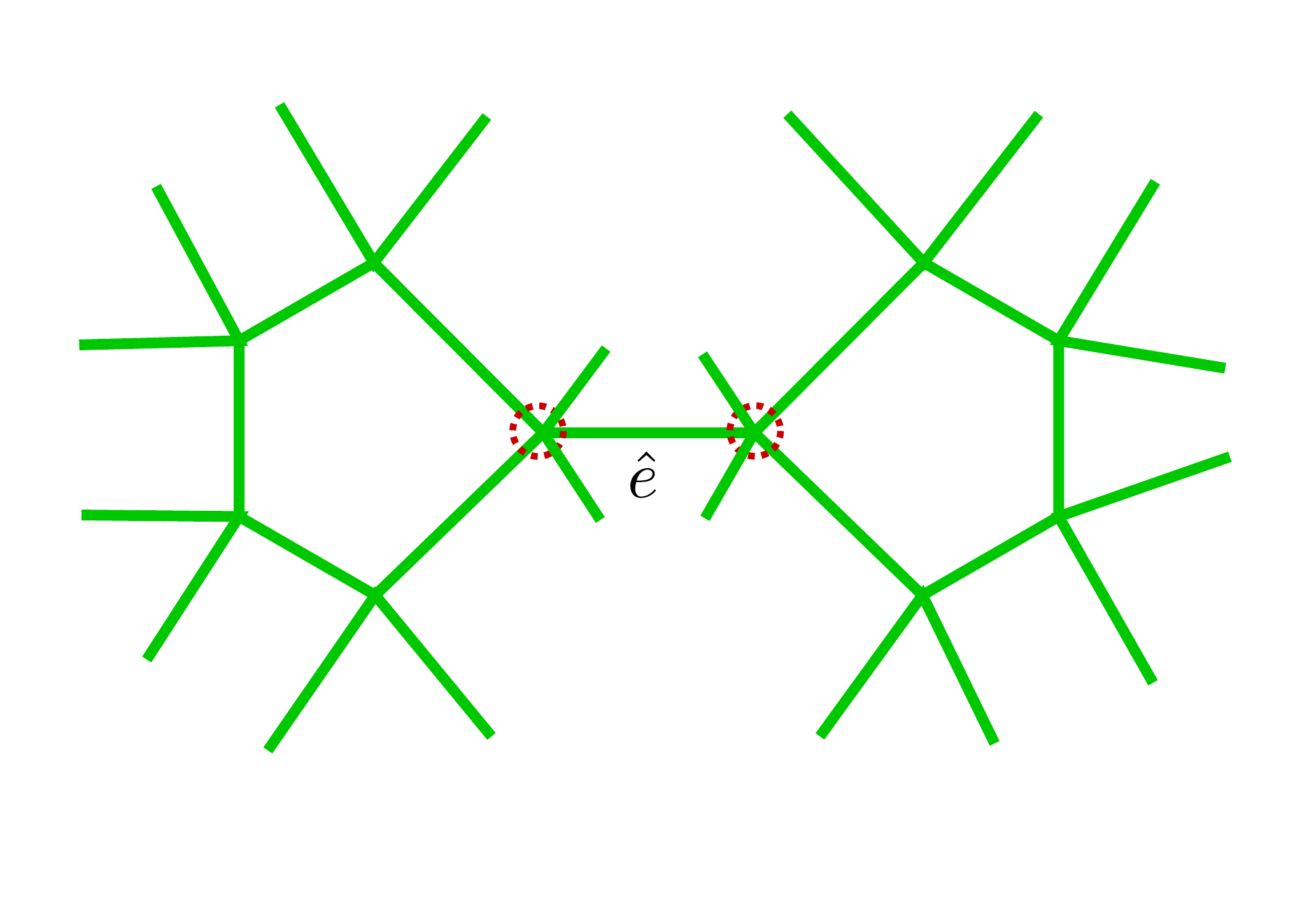}
  \caption{\label{fig:eulertransfecollapse}
    Two polygons of a $2$-complex $K$ in the plane (top left).
    Euler transformation of $K$ into $\hK$ (top right), where $\hf, \hf'$ are Class \ref{2dETcls1} polygons and $\hf_e$ (red) is a Class \ref{2dETcls2} polygon corresponding to edge $e$ (red) in $K$.
    With a higher mitered offset, $\hf_e$ is collapsed to edge $\he$; red circled vertices have odd degree (bottom).
  }
\end{figure}

If combinatorial changes are allowed in Euler transformation, we should apply Euler transformation to a local complex for any odd degree vertices created.
In the following lemma, we use the generalized Euler transformation to address the issue of odd degree vertices that may be created by combinatorial changes in $\hK$.

\begin{figure}[htp!] 
	\centering
	\includegraphics[scale=0.35]{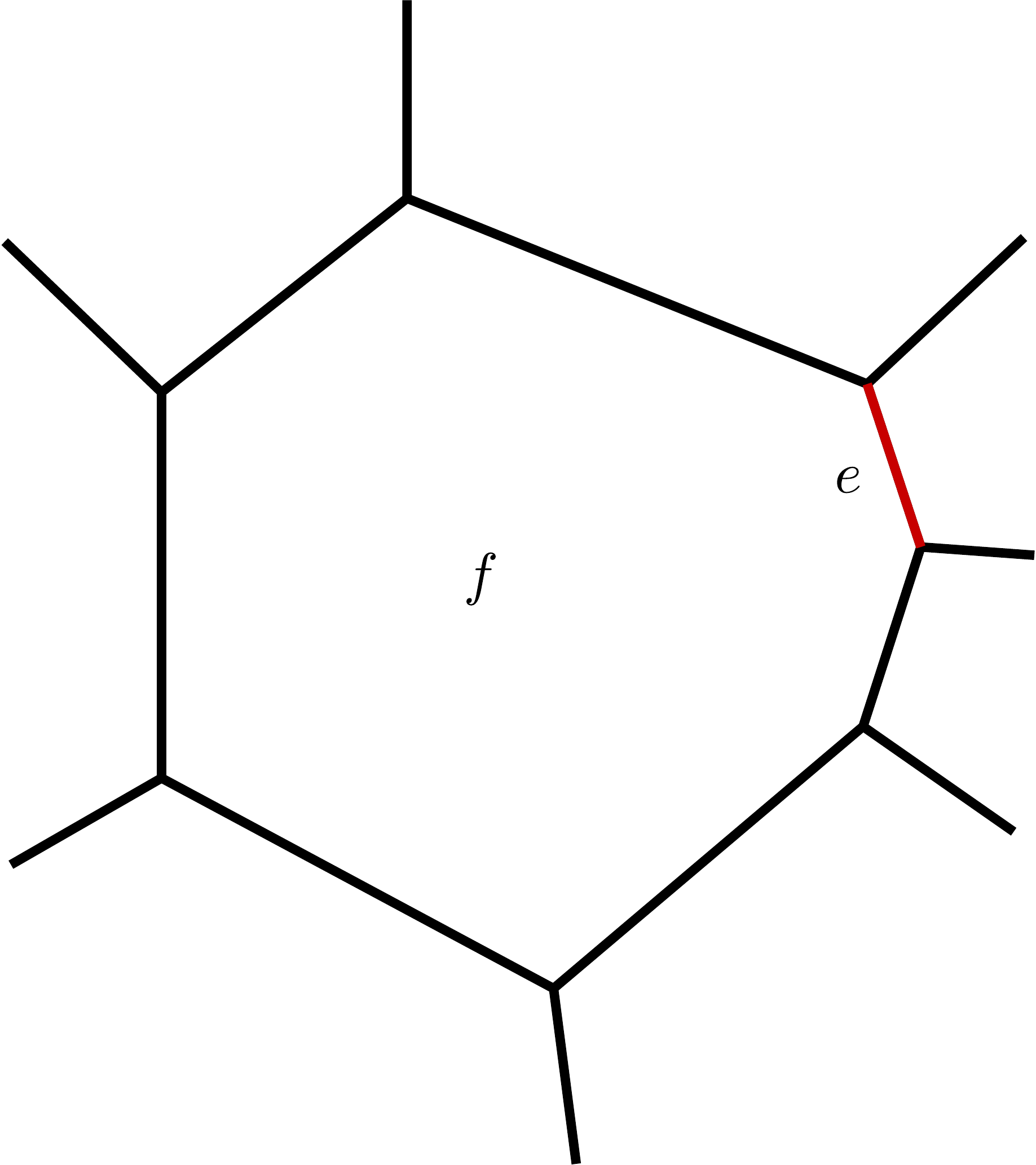}
	\hspace*{0.3in}
	\includegraphics[scale=0.40]{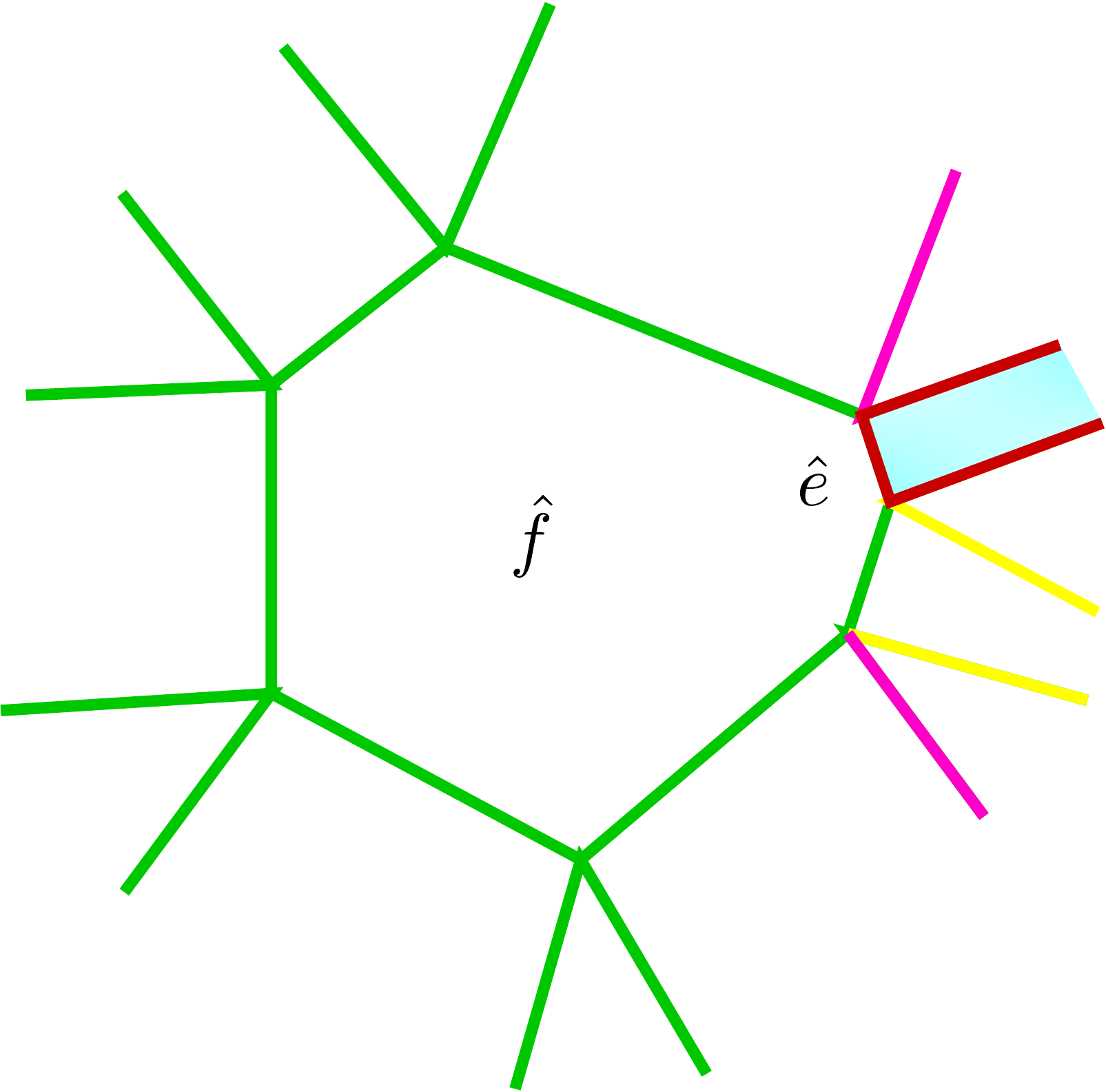}
        \vspace*{0.3in}\\
	\includegraphics[scale=0.40]{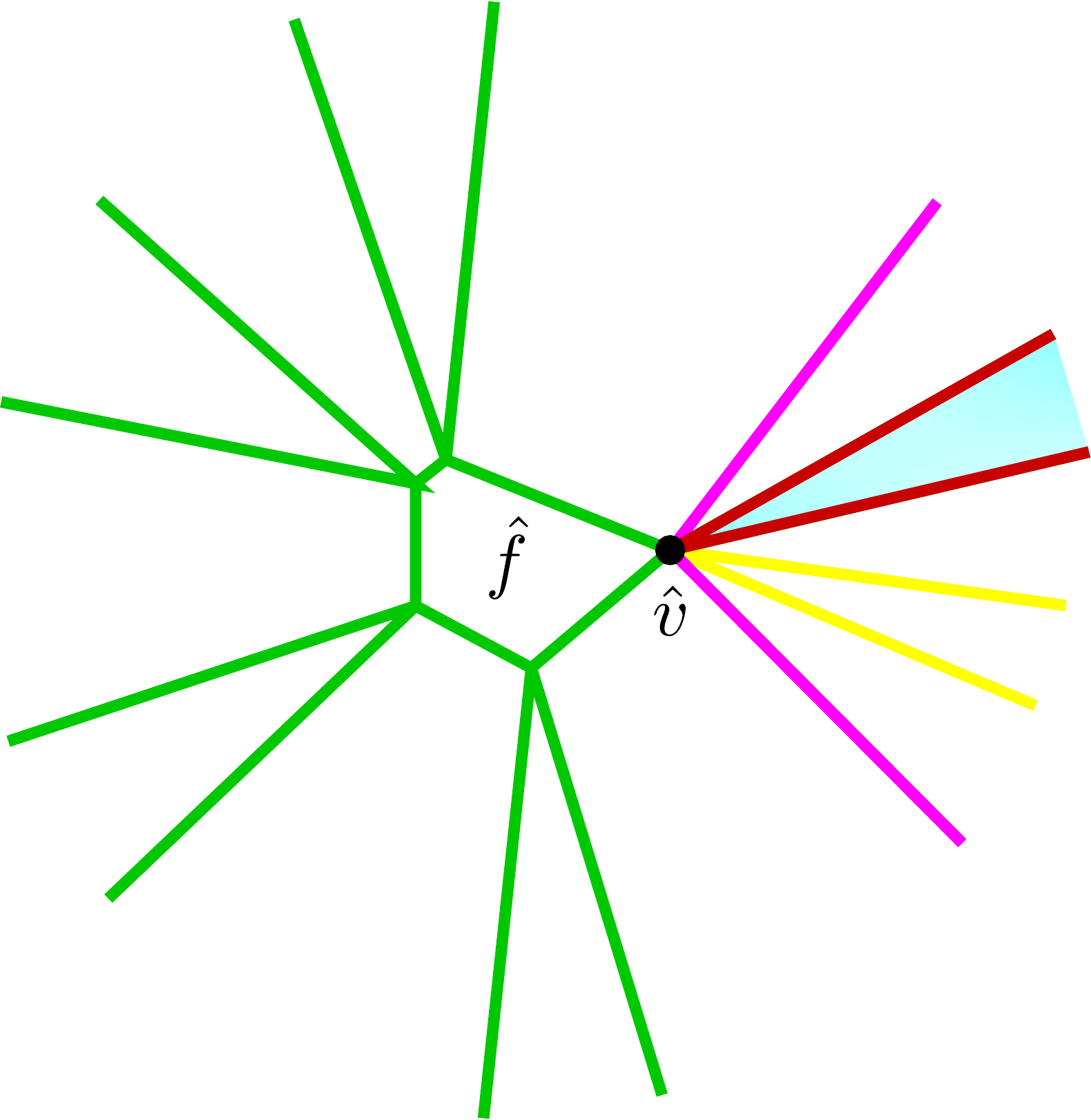}
	\hspace*{0.3in}
	\includegraphics[scale=0.40]{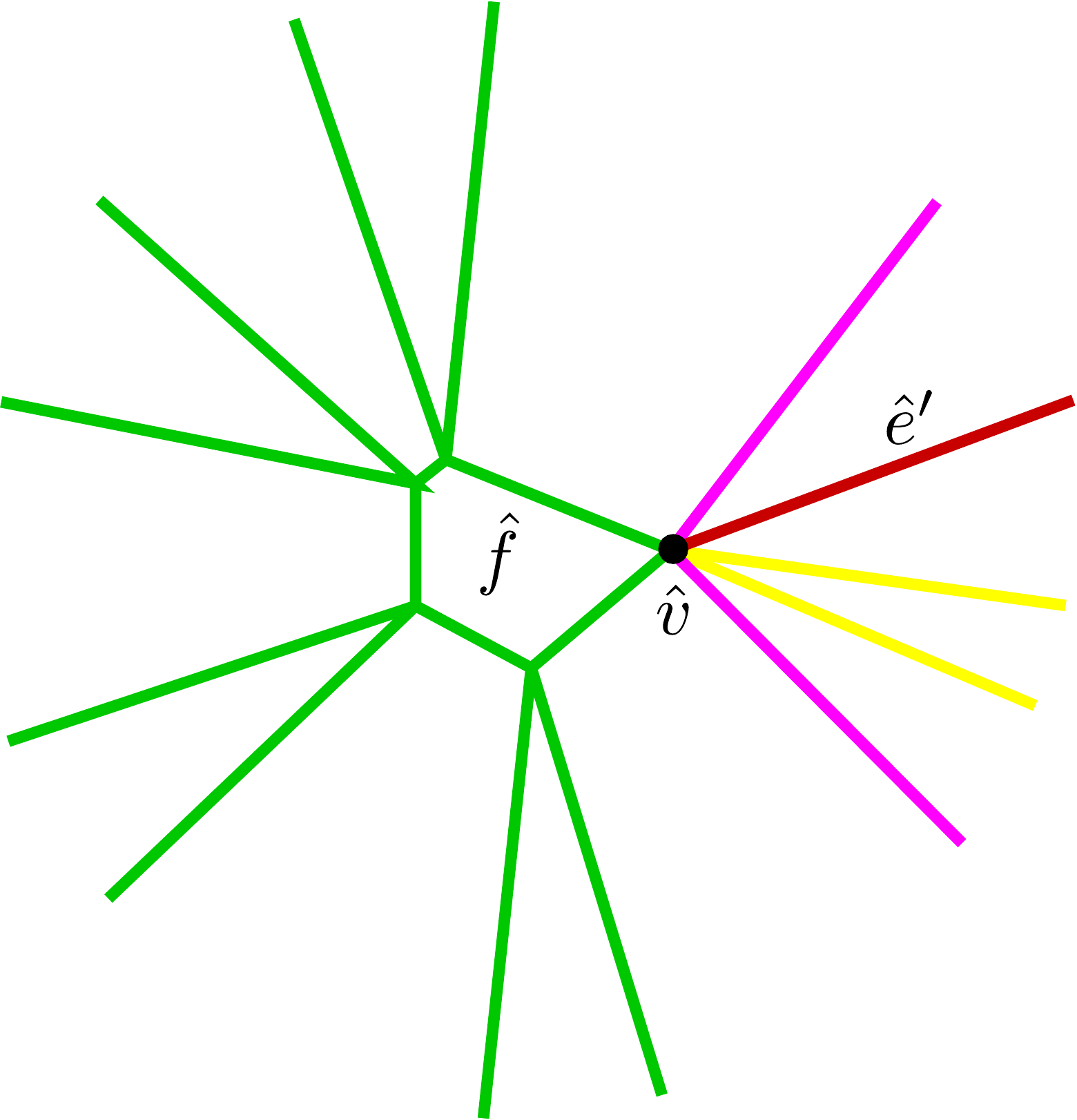}
	\caption{\label{fig:lemmaedgecontraction}
		$f$ is a polygon in $K$ (top left).
		Euler transformation of $K$ into $\hK$ (top right), where $\hf$ is the Class \ref{2dETcls1} polygon corresponding to $f$, and $\hf_e$ (blue) is the Class \ref{2dETcls2} polygon corresponding to edge $e$.
		Combinatorial changes are allowed in $\hK$ in the bottom left figure, and edge $\he$ of $\hf$ is collapsed to a point $\hv$ where $\hf_e$ (blue) is a triangle.
		As $\he \in \hf_e$ is collapsed to a point, the degree of $\hv$ in the $1$-skeleton of $\hK$ is $2(2) + 4 = 8$.
		In the version of $\hK$ shown in the bottom right figure, an edge of $\hf_e$ shared with some Class \ref{2dETcls1} polygon other than $\hf$ is also collapsed to a point.
		Here, $\hf_e$ is collapsed to an edge $\he'$ (red) and the degree of $\hv$ in the $1$-skeleton of $\hK$ is $2(2) + 4 - 1= 7$.
	}
\end{figure}

\begin{lem}
  \label{lem:localeulertransformation}
  Suppose $\hf_e$ in $\hK$ is collapsed to an edge $\he$, since combinatorial changes are allowed in $\hK$.
  Suppose $\hk$ is a sub $2$-complex of $\hK$ consisting of Class \ref{2dETcls3} polygons sharing any edge $\he$ in $\hK$,
  and let $\hk_j$ be a single component contained in $\hk$, since $\hk$ can have multiple components.
  Suppose $\hk_j = \cup \tilde{k}_i$, where $\tilde{k}_i$ is sub $2$-complex in $\hk_j, \hK$, and let $\tilde{k}_i, \tilde {k}_j$ do not share any edge in $\hK$, if $i \neq j$.
  If $\check{k}_i$ is the generalized Euler transformation of $\tilde{k}_i$ and $\bar{k}_j = \cup \check{k}_i$ is single component in $\bar{k}$, then the $1$-skeleton of $(\hK \smallsetminus \hk) \cup \bar{k}$ is Euler.  	
\end{lem}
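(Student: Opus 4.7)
The plan is to verify that every vertex in the $1$-skeleton of $(\hK \smallsetminus \hk) \cup \bar{k}$ has even degree, which is exactly what it means for the $1$-skeleton to be Eulerian. I would partition the vertex set into three groups and handle each separately: vertices lying entirely within $\hK \smallsetminus \hk$, vertices strictly interior to $\bar{k}$, and interface vertices where $\bar{k}$ is glued back to $\hK \smallsetminus \hk$.

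For a vertex $\hv$ strictly inside $\hK \smallsetminus \hk$, I would argue its local neighborhood is unchanged by the swap. By \cref{lem:combinatorialdegree}, the only vertices of $\hK$ whose degree was made odd by collapsing some $\hf_e$ to $\he$ are the endpoints of those $\he$'s, and by construction every Class \ref{2dETcls3} polygon incident to such an endpoint lies in $\hk$. Hence any vertex outside $\hk$ retains exactly its degree in $\hK$, which is $4$ by \cref{thm:deg4ET2d}.

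For vertices interior to $\bar{k}$, I would appeal directly to the generalized Euler transformation. By \cref{def:genET}, each $\check{k}_i$ is obtained by applying the Euler transformation twice to $\tilde{k}_i$: the first pass guarantees that the intermediate complex satisfies \cref{asmn:Kholesoutside} even if $\tilde{k}_i$ itself has adjacent boundary edges, and the second pass then makes every interior vertex have degree $4$ by \cref{thm:deg4ET2d}. Because the hypothesis asserts that distinct $\tilde{k}_i, \tilde{k}_j$ share no edges in $\hK$, their transformed versions $\check{k}_i, \check{k}_j$ can meet only at isolated vertices, and at such a vertex the degree contributions simply add, preserving evenness.

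The delicate step, which I expect to be the main obstacle, is the bookkeeping at interface vertices where $\bar{k}$ meets $\hK \smallsetminus \hk$. For such a vertex $\bar{v}$, its total degree equals the contribution from $\bar{k}$ plus the contribution from $\hK \smallsetminus \hk$. The plan is to exploit the fact that the generalized Euler transformation only refines the interior of a sub-complex via mitered offsets and duplication, and in doing so it preserves the combinatorial structure along the boundary edges that $\tilde{k}_i$ shared with $\hK \smallsetminus \hk$. Consequently, the parity contribution of $\check{k}_i$ at $\bar{v}$ equals the parity contribution $\tilde{k}_i$ made at $\bar{v}$ prior to replacement, while the contribution from $\hK \smallsetminus \hk$ is unchanged. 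Since the sub-complex $\hk$ consists precisely of those Class \ref{2dETcls3} polygons implicated in the odd-degree defect caused by the collapse, the remnant parity deficit at each interface vertex is exactly what the double Euler transformation restores, completing the proof. The technical hurdle is tracking how duplication and the Class \ref{2dETcls2}--\ref{2dETcls3} polygon additions of two successive Euler transformations interact with the truncated boundary of $\tilde{k}_i$; once that accounting is settled, evenness at every vertex follows.
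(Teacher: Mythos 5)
Your three-way decomposition of the vertices (outside $\hk$, interior to $\bar{k}$, and interface) is a reasonable skeleton, and your handling of the first two groups is essentially sound in spirit (though vertices outside $\hk$ need not have degree exactly $4$ --- by \cref{lem:combinatorialdegree} they may have degree $2\pi+4$ or similar --- only their parity matters). The genuine gap is in your interface-vertex step. You assert that ``the parity contribution of $\check{k}_i$ at $\bar{v}$ equals the parity contribution $\tilde{k}_i$ made at $\bar{v}$ prior to replacement.'' If that were true the lemma would fail: at the defective vertices $\hv$ --- the whole reason the repair is performed --- the contribution from $\tilde{k}_i$ is \emph{odd}, so preserving its parity would leave those vertices odd. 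The actual mechanism is that the generalized Euler transformation \emph{changes} this parity: every vertex of $\check{k}_i$, boundary vertices included, has an \emph{even} number of incident edges within $\check{k}_i$, so the inside contribution becomes even no matter what it was before. Your next sentence, that the double transformation ``restores the deficit,'' contradicts the parity-preservation claim, so the step is internally inconsistent as written.

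You are also missing the fact on which the paper's proof spends most of its effort: at \emph{every} vertex of $\hk_j$, the number of incident edges of $\hK$ that do \emph{not} lie in $\hk_j$ is even. Without this, knowing that the inside contribution is even tells you nothing about the total. The paper establishes it by exploiting edge-disjointness of Class \ref{2dETcls1} and Class \ref{2dETcls3} polygons: at the collapsed vertex $\hv$, the outside edges come from the Class \ref{2dETcls1} polygon $\hf$ together with any Class \ref{2dETcls3} polygons not contained in $\hk_j$, which jointly contribute an even count, and every other vertex of $\hk_j$ has exactly two outside edges. The correct argument is then simply (outside contribution even) $+$ (inside contribution even, summing over the edge-disjoint pieces $\check{k}_i$) $=$ even --- not parity preservation across the replacement.
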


\begin{proof}
  Since combinatorial changes are allowed in $\hK$, let $\hv$ be a vertex in $\hk_j$ created after $\pi$ adjacent edges of $\hf$ are collapsed to a vertex.
  $\hk_j$ is single component in $\hk$, and only Class \ref{2dETcls3} polygons share an edge with any collapsed Class \ref{2dETcls2} polygons $\hf_e$ is in $\hk_j$.
  Hence $\hv$ can be shared by some Class \ref{2dETcls3} polygon not in $\hk_j$.
  If any Class \ref{2dETcls3} polygon sharing $\hv$ is not contained in $\hk_j$, this Class \ref{2dETcls3} polygon does not share an edge with any Class \ref{2dETcls2} polygon.
  Hence such cells do not belong to any component $\hk_j$ in $\hk$.
  Since Class \ref{2dETcls1} polygons are edge-disjoint from any Class \ref{2dETcls3} polygons, and there are some Class \ref{2dETcls3} polygons and one Class \ref{2dETcls1} polygon ($\hf$) sharing vertex $\hv$ in $\hK$ but not contained in $\hk_j$, we get that $\hv$ is shared by an even number of additional edges not in $\hk_j$.
  Since Class \ref{2dETcls1} and Class \ref{2dETcls3} polygons are edge-disjoint in $\hK$, then any vertex ($\hv'$) in some Class \ref{2dETcls3} polygon in $\hk_j$ not similar to $\hv$ has two more edges, not in $\hk_j$ sharing $\hv'$.
  Hence all the vertices in $\hk_j$ have even number of additional edges in $\hK$ not contained in $\hk_j$ as shown in Figure \ref{fig:lemmalocaleulertransformation}.
  
  Let $\check{k}_i$ be the generalized Euler transformation of each sub $2$-complex $\tilde{k}_i \in \hk_j$.
  Then any vertex in $\hK \smallsetminus \hk \cup \bar{k}$ has an even number of edges connected to it, since each $\bar{k}_j = \cup \check{k}_i$ contributes even number of edges to any vertex shared by $\bar{k}_j$ and each vertex in $\hk$ has two more edges not contained in $\hk$.
  Hence the $1$-skeleton of $(\hK \smallsetminus \hk) \cup \bar{k}$ is Euler. 
\end{proof}

\begin{figure}[htp!] 
  \centering
  \begin{subfigure}[t]{3in}
    \centering
    \includegraphics[scale=0.33]{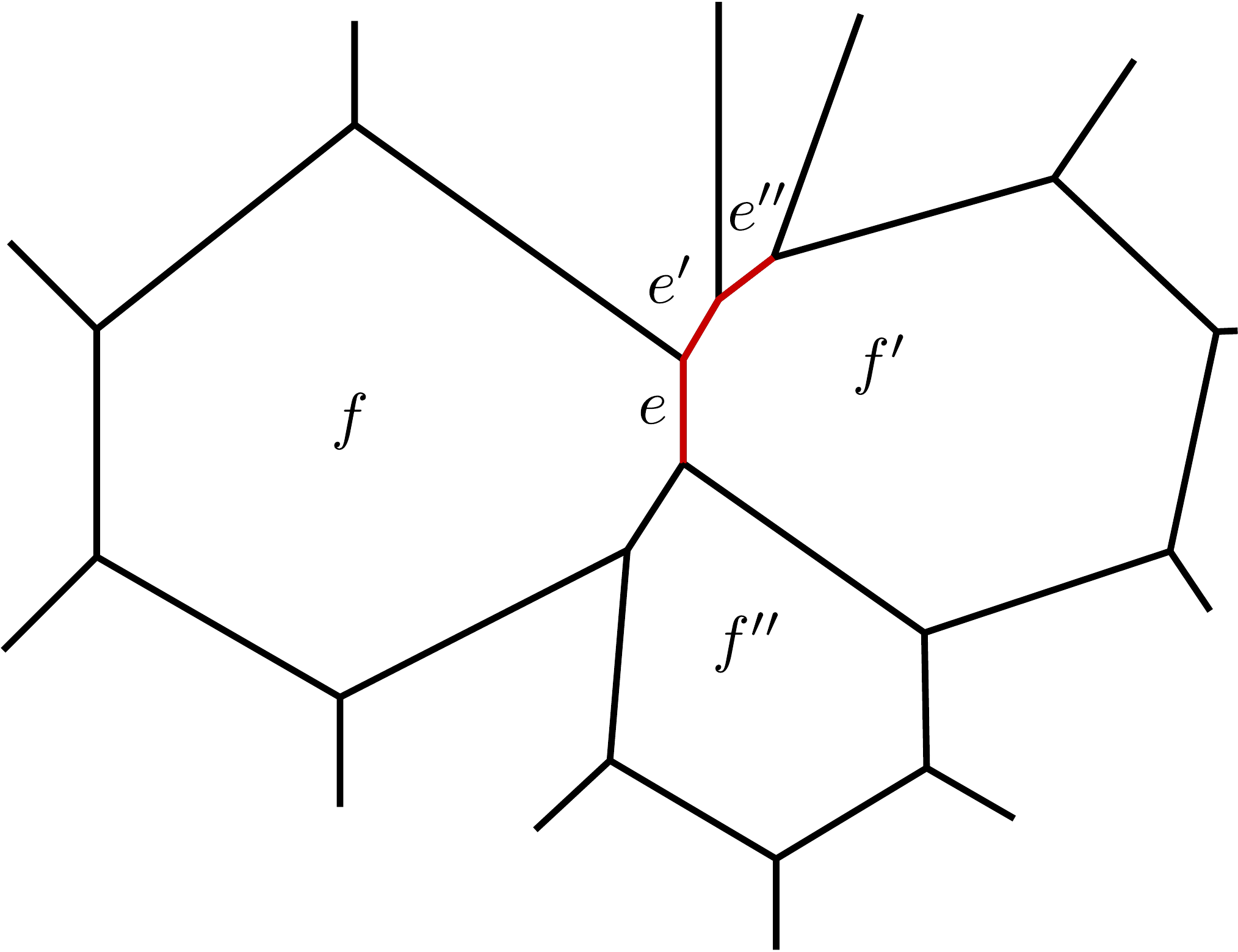}
    \caption{\label{fig:lemmalocaleulertransformationa}}
  \end{subfigure}
  \begin{subfigure}[t]{3in}
    \centering
    \includegraphics[scale=0.33]{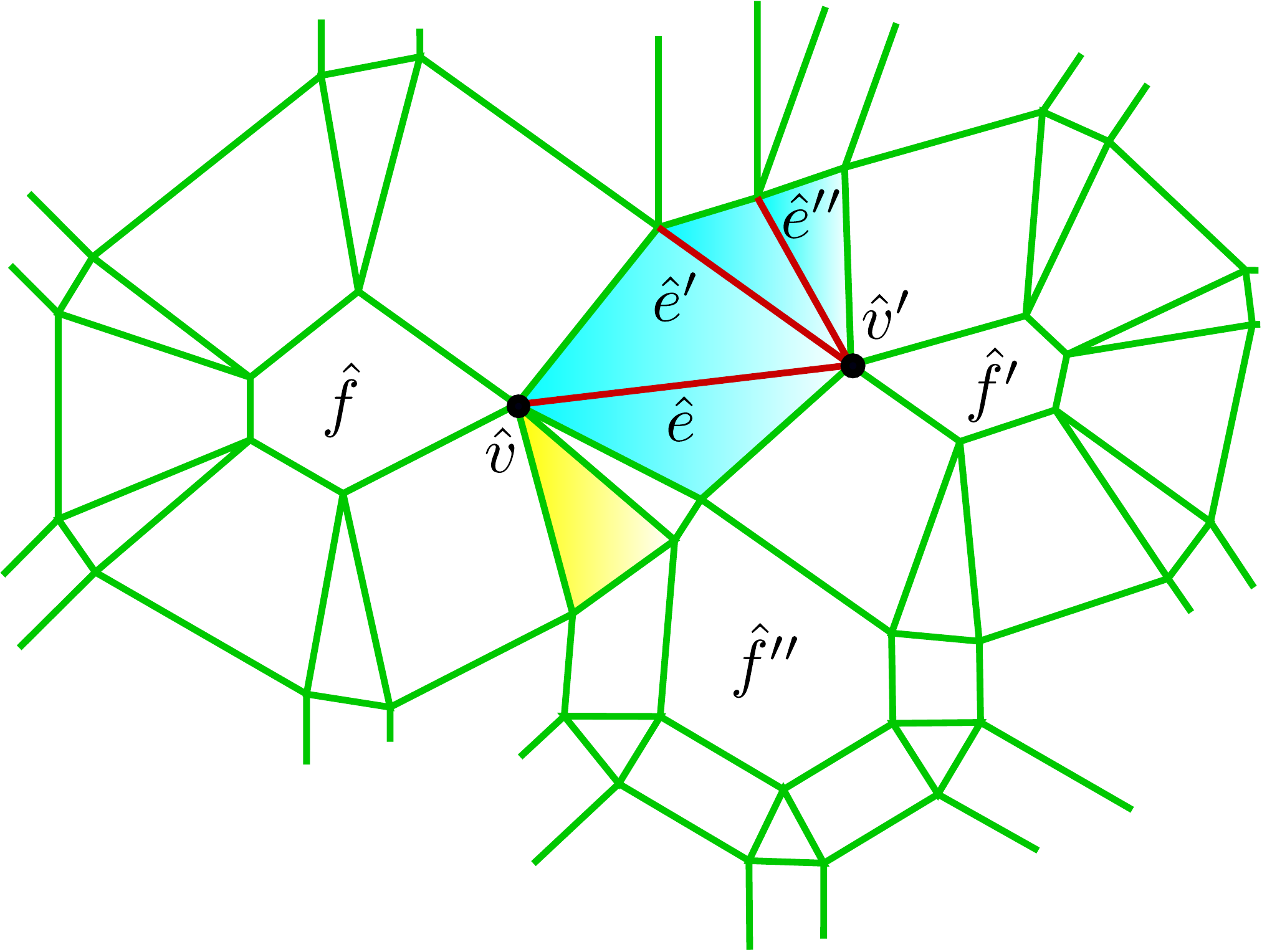}
    \caption{\label{fig:lemmalocaleulertransformationb}}
  \end{subfigure}
  \vspace*{0.2in}\\
  \begin{subfigure}[t]{3in}
    \centering
    \includegraphics[scale=0.33]{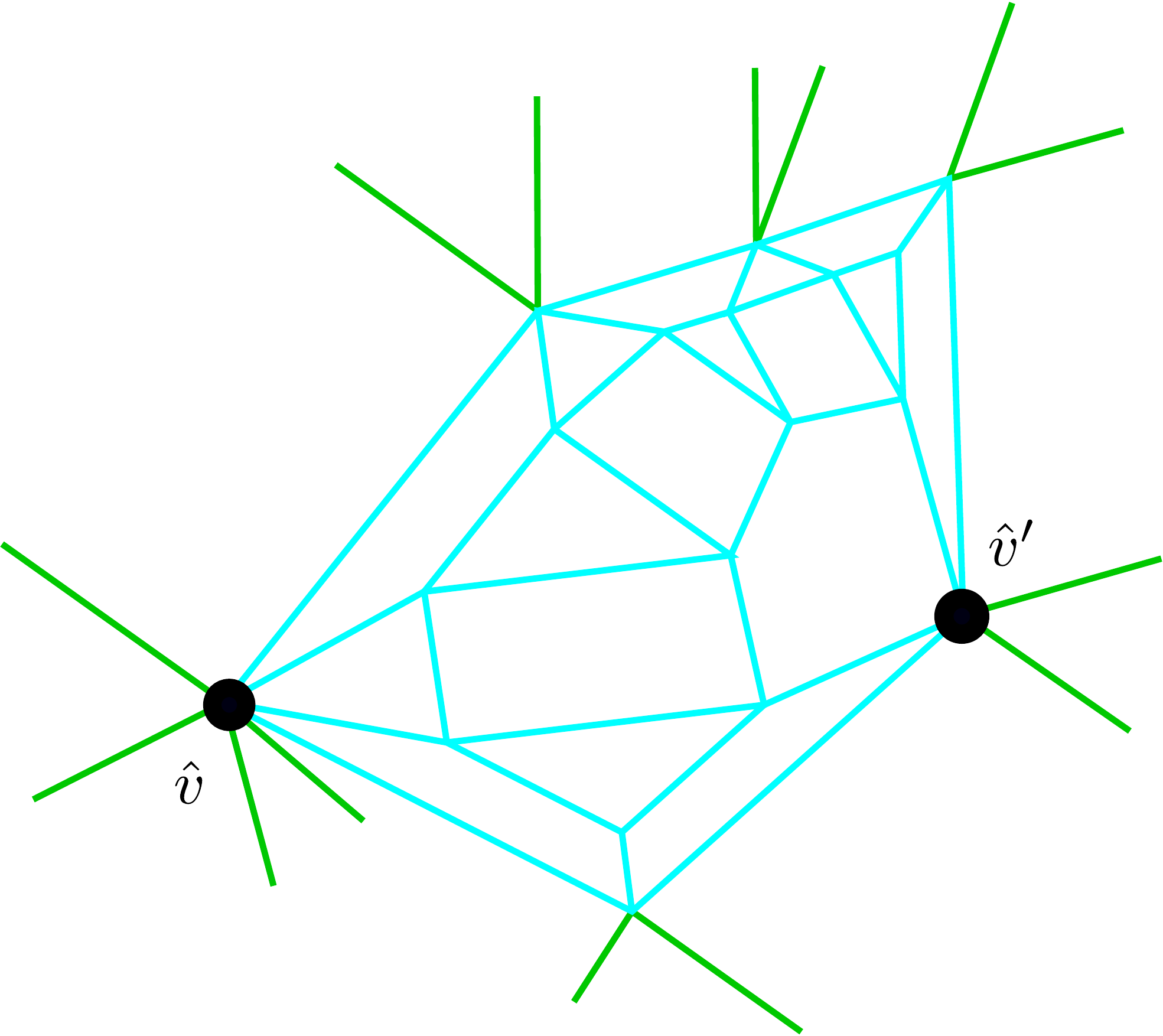}
    \caption{\label{fig:lemmalocaleulertransformationc}}
  \end{subfigure}
  \begin{subfigure}[t]{3in}
    \centering
    \includegraphics[scale=0.33]{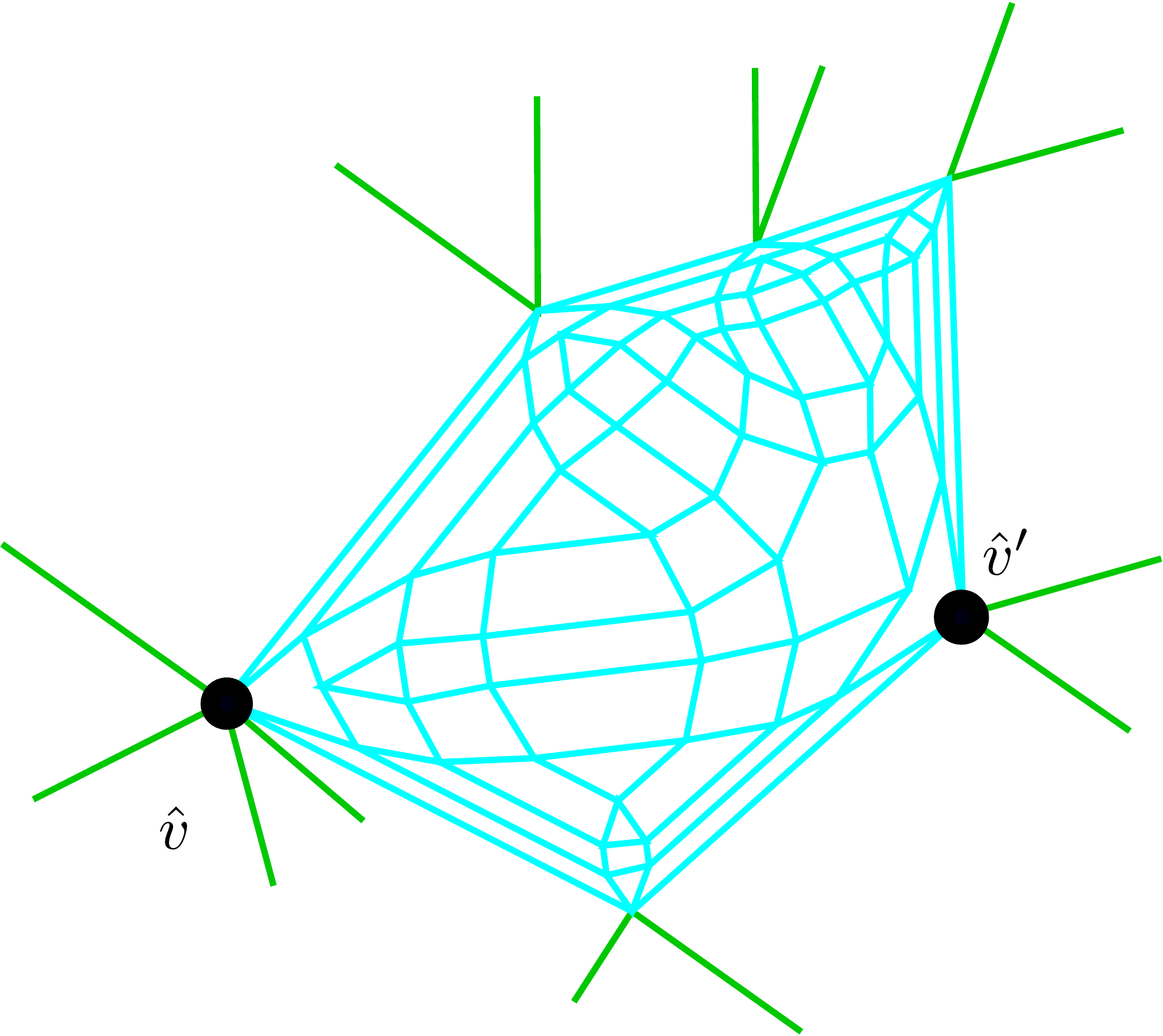}
    \caption{\label{fig:lemmalocaleulertransformationd}}
  \end{subfigure}
  \caption{  \label{fig:lemmalocaleulertransformation}
    (a) $f, f', f''$ are polygons in $2$-complex $K$.
    (b) $2$-complex $\hK$ after Euler transformation with combinatorial changes to some polygons in $\hK$. Class \ref{2dETcls2} polygons $\hf_e,\hf_{e'}, \hf_{e''}$ corresponding to $e, e', e''$ in $K$ are collapsed to edges $\he, \he', \he''$.
    Sub $2$-complex $\tilde{k}_i$ (in blue) of some $\hk_j$ in $\hK$ consists of Class \ref{2dETcls3} polygons sharing edges $\he, \he', \he''$ in the $1$-skeleton of $\tilde{k}_i$, and has vertices $\hv, \hv'$ with odd degree $7$.
    The number of edges (green) at each vertex of $\tilde{k}_i$ not in $\tilde{k}_i$ are even, and $\hv$ has one Class \ref{2dETcls3} polygon (yellow) not contained in $\tilde{k}_i$.
    (c) and (d) show generalized Euler transformation of $\tilde{k}_i$.
    (d) shows $\check{k}_i$ (blue), the generalized Euler transformation of $\tilde{k}_i$.
    Now $\hv, \hv'$ and other vertices of $\check{k}_i$ have even degrees in the $1$-skeleton of $\hK \smallsetminus \tilde{k}_i \cup \check{k}_i$.
  }
\end{figure}

Let vertex $\hv$ in the sub $2$-complex $\tilde{k}_i$ of single component $\hk_j$ in $\hk$ have odd  degree in $\tilde{k}_j$.
We apply generalized Euler transformation to any such sub $2$-complex $\tilde{k}_i$.
Then by Lemma \ref{lem:localeulertransformation}, the new $2$-complex is Euler.

\subsection{Euler Transformation with Topological Changes}\label{sec:eulertransformwithTopolog}

When a polygon $f$ in $K$ is split into multiple Class \ref{2dETcls1} polygons in $\hK$ after mitered offset, it is said to undergo a \emph{topological change} (see \cref{fig:topologychangelemma}). 
If a polygon in $K$ is concave, then its mitered offset could create topological changes.
Without loss of generality we assume the $2$-complex $K$ consists of polygons satisfying \cref{asmn:Kholesoutside}, and its polygons can be convex or concave.

\begin{lem}
  Let the complex $\hK$ be created by Euler transformation with some polygons undergoing only topological (no combinatorial) changes.
  Then the degree of vertices in $\hK$ is even.
  Furthermore, if $K$ is connected, then so is $\hK$. 
\end{lem}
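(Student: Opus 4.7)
The plan is to handle the two assertions separately, each by extending the arguments from \cref{thm:deg4ET2d} and \cref{prop:hKcnctd2d} to accommodate topological splits. The one new phenomenon, compared to those earlier proofs, is that the mitered offset of a concave polygon $f \in K$ may split into several pairwise disjoint pieces $\hf^{(1)},\dots,\hf^{(m)}$, each of which is itself a simple closed polygon whose boundary edges are offsets of edges of $f$.

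For the degree claim, the ``no combinatorial changes'' hypothesis guarantees that no edge of any polygon of $K$ is collapsed to a point, so every vertex of every offset piece corresponds to a unique original vertex $v \in K$, occurring as a corner of exactly one piece $\hf^{(j)}$. The local picture at $\hv$ is then identical to that in the proof of \cref{thm:deg4ET2d}: $\hv$ is incident to two boundary edges of $\hf^{(j)}$ (the offsets of the two edges of $f$ meeting at $v$), plus two new edges contributed by the Class \ref{2dETcls2} polygons generated by those same two edges of $f$. Hence $\deg(\hv)=4$, which is even. If a pinch point where two incipient pieces meet is nonetheless treated as a shared vertex at the critical offset distance, it is incident to two boundary edges from each piece and to no Class \ref{2dETcls2} edges, giving degree $4$ again.

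For the connectivity claim, \cref{prop:hKcnctd2d} already handles offsets of distinct polygons that share an edge or a vertex in $K$, so the only new case is to show that the pieces of a single split polygon are mutually connected in $\hK$. I would use the fact that the cycle $\partial f$ generates a ``ring'' of Class \ref{2dETcls2} and Class \ref{2dETcls3} polygons: each edge $e \in \partial f$ produces $\hf_e$, which shares an edge with whichever piece $\hf^{(j)}$ contains the offset of $e$; and consecutive polygons $\hf_{e_i},\hf_{e_{i+1}}$ are glued through the Class \ref{2dETcls3} polygon $\hf_v$ at their common vertex $v$. Tracing this ring attaches every piece $\hf^{(j)}$ to a single connected subcomplex, extending connectivity from $K$ to $\hK$.

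The main obstacle is bookkeeping at the pinch point of a topological split: one must check that no odd-degree vertex is hidden there. I would address this either by working with offset amounts strictly past the critical split distance, so that pieces are strictly disjoint and only carry copies of original vertices of $f$, or by verifying directly that the pinch vertex has even degree. A secondary subtlety is ensuring the ring argument still reaches every piece when $f$ has multiple interacting splits; this follows because $\partial f$ remains a single cycle no matter how $f$'s interior fragments, so the ring visits an edge of each piece $\hf^{(j)}$ at least once.
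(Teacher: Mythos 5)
Your connectivity argument is fine---the ``ring'' of Class \ref{2dETcls2} and Class \ref{2dETcls3} polygons around $\partial f$ is in fact more explicit than the paper's one-sentence justification---but your degree argument rests on a false premise. You assert that, absent combinatorial changes, every vertex of every offset piece is the copy of a unique original vertex of $K$, so that the local picture of \cref{thm:deg4ET2d} applies verbatim. That is exactly what a topological split destroys. At the critical offset distance the split point $\hat{v}'$ is the \emph{merger} of $q$ copies of original vertices of $f$, and each such copy carries, by construction, two new edges that are facets of the Class \ref{2dETcls2} polygons of the two edges of $f$ meeting at that original vertex. So $\hat{v}'$ has degree $4q$ (degree $8$ for $q=2$ in \cref{fig:topologychangelemma}), and your claim that the pinch is incident to ``no Class \ref{2dETcls2} edges'' and has degree $4$ is incorrect. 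Strictly past the critical distance---your proposed remedy---the situation is worse for your premise, not better: the pieces acquire genuinely \emph{new} vertices $\hat{v}'_1,\dots,\hat{v}'_q$ at the split locus, which correspond to no original vertex of $K$ and meet no Class \ref{2dETcls2} polygon at all. The paper obtains even degree there by a different mechanism: the $q$ Class \ref{2dETcls3} polygons formerly sharing $\hat{v}'$ fuse into a single polygon $F$, and each $\hat{v}'_i$ is shared by the edge-disjoint pair consisting of $F$ and one Class \ref{2dETcls1} piece, giving degree $4$.

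The missing idea is therefore the case analysis at the split locus itself: (i) at the critical offset, count the merged original-vertex copies together with their Class \ref{2dETcls2} edges to get degree $4q$; (ii) beyond it, identify the new split vertices and the merged Class \ref{2dETcls3} polygon $F$ that gives them degree $4$. Both counts happen to be even, so the lemma is true, but neither follows from ``every vertex is an original-vertex copy with the standard degree-$4$ neighborhood.'' Away from the split locus your argument, and your connectivity ring, are sound and match the paper's reasoning.
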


\begin{figure}[htp!] 
	\centering
	\includegraphics[scale=0.30]{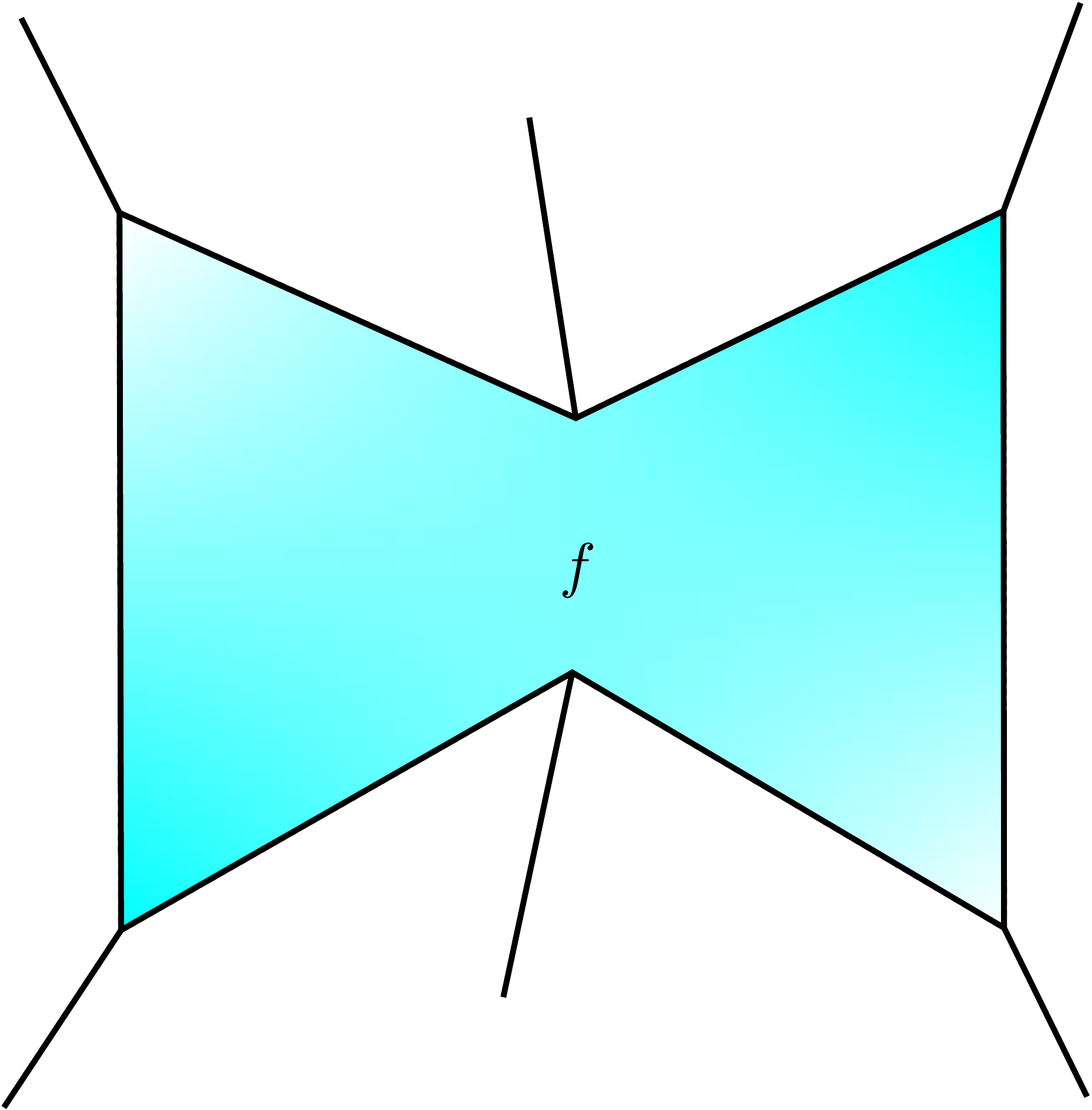}
	\quad\quad
	\includegraphics[scale=0.30]{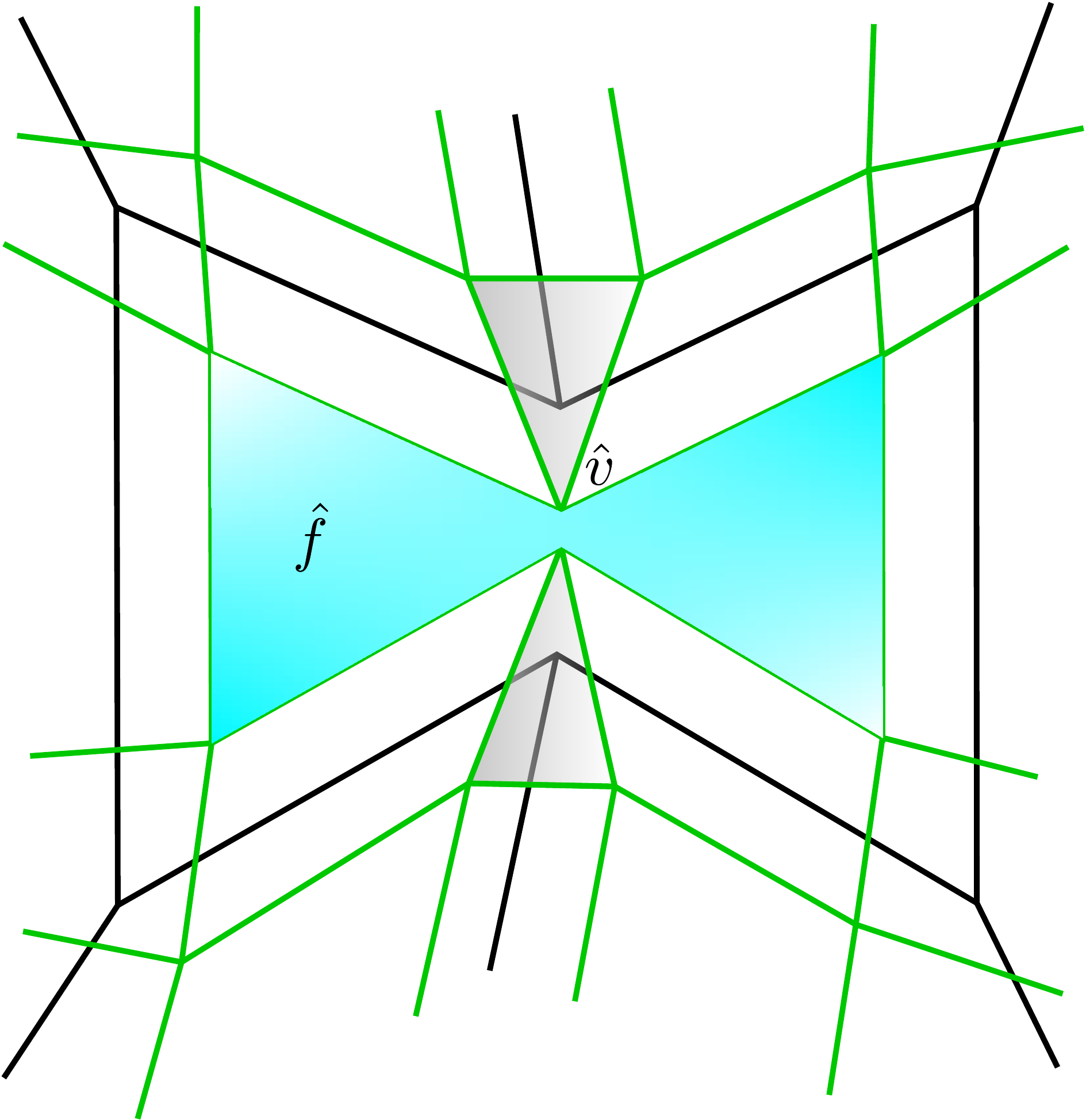}
        \vspace*{0.2in}\\
	\includegraphics[scale=0.30]{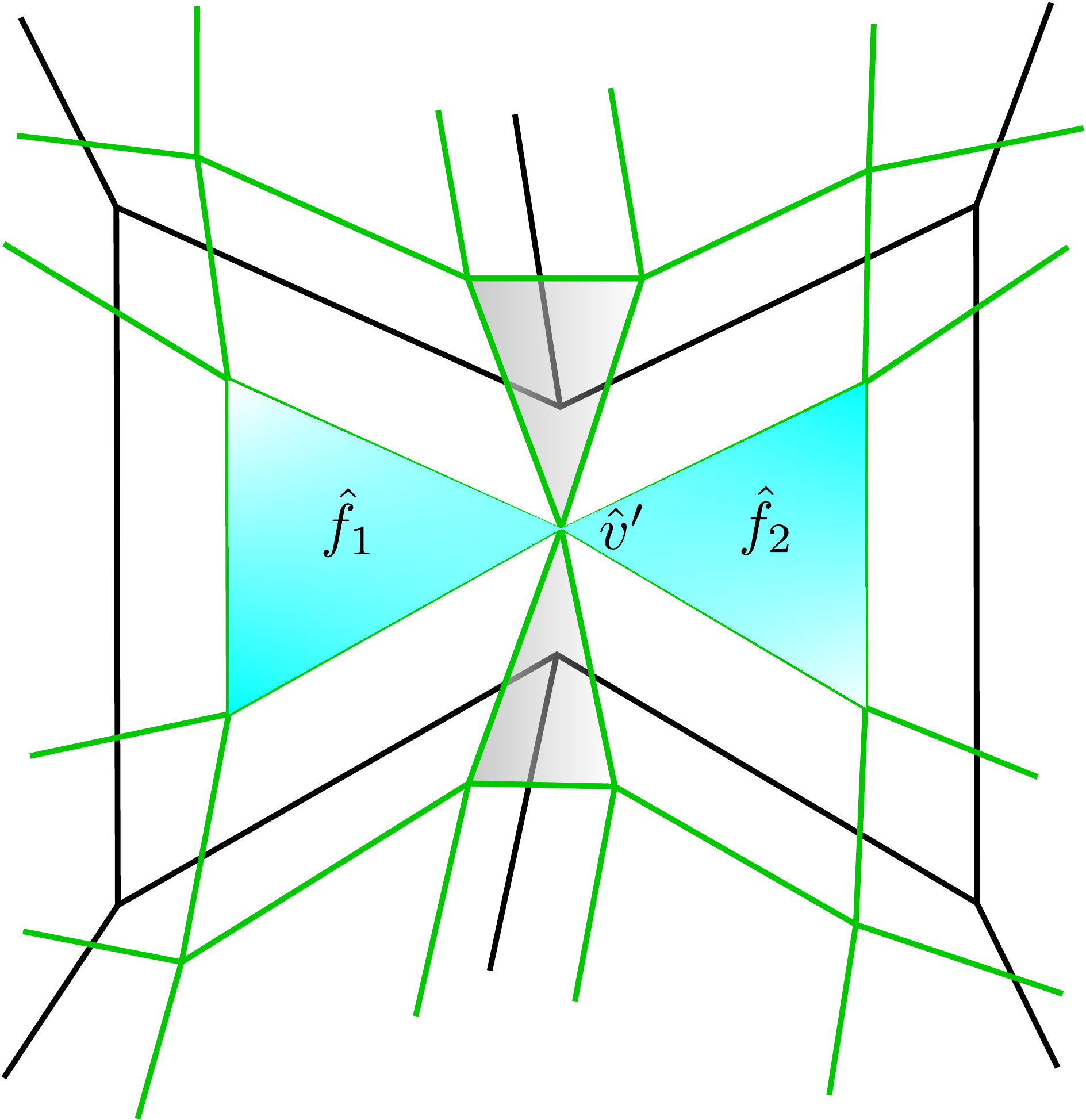}
	\quad\quad
	\includegraphics[scale=0.30]{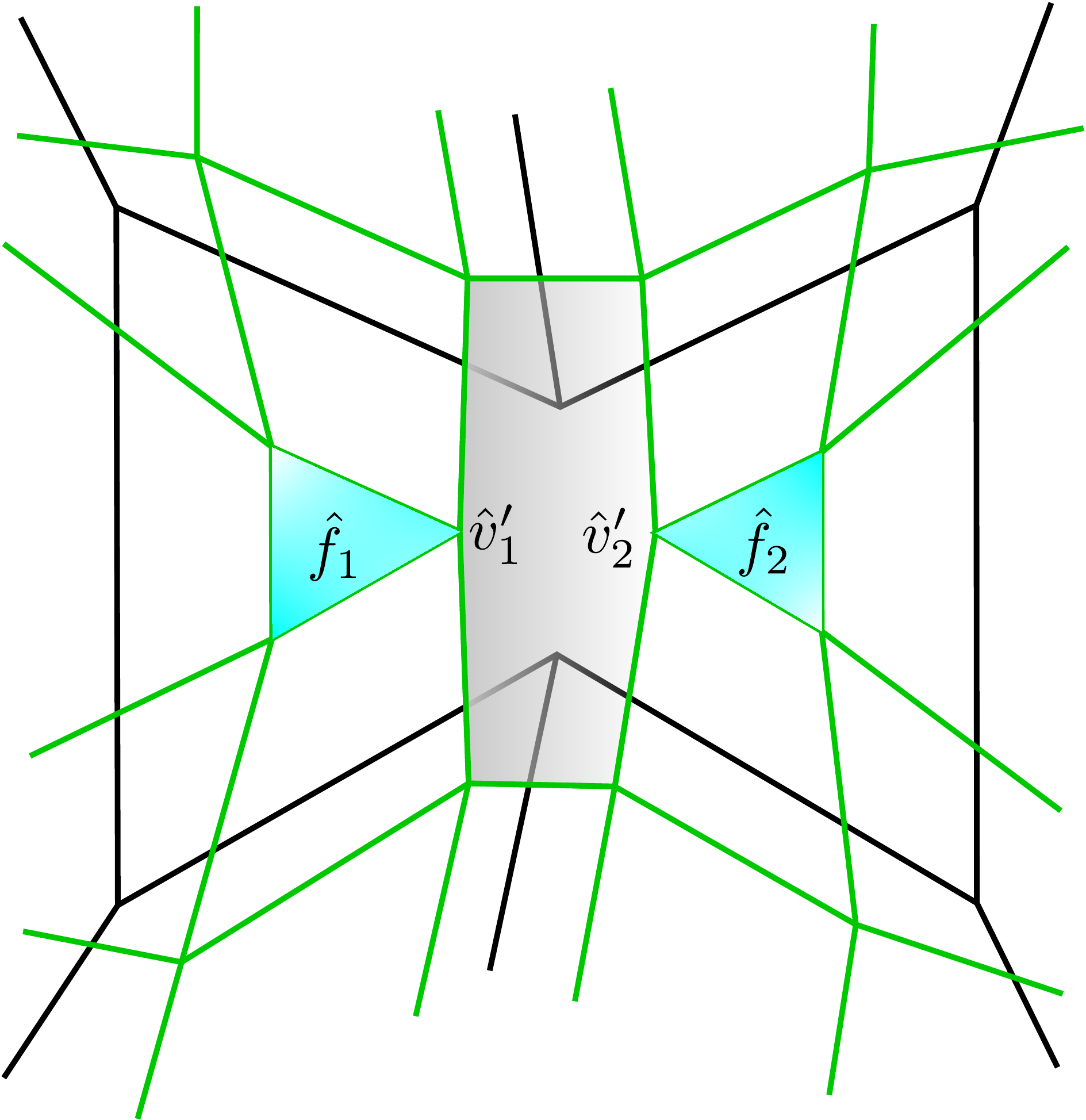}
	\caption{\label{fig:topologychangelemma}
          Top left: Non-convex polygon $f$ (blue) in $K$.
          Top right: Polygon $\hf$ (blue) and Class \ref{2dETcls3} polygons (gray) in $\hK$. 
          Bottom left: $\hf$ split into two polygons $\hf_1, \hf_2$ (blue) by joining two vertices of $\hf$ to $\hat{v}'$.
          Since $\hf_1, \hf_2$ and Class \ref{2dETcls3} polygons (gray) at $\hat{v}'$ are edge-disjoint, $\hat{v}'$ has degree $8$ and $q=2$.
          Bottom right: With higher mitered offset, $\hf_1, \hf_2$ are completely disjoint and $\hat{v}'$ split up into $\hat{v}'_1, \hat{v}'_2$ creating one polygon $F$ (gray), where $\hat{v}'_1, \hat{v}'_2$ has degree $4$.
          }
\end{figure}  

\begin{proof}
  When there are no topological changes, any vertex $\hv$ in polygon $\hf$ is shared by edge-disjoint Class \ref{2dETcls1} and \ref{2dETcls3} polygons in $\hK$.  Let us allow topological changes to $\hf$.
  There are two possible cases.
  In the first case, $\hf$ splits into multiple polygons at vertex $\hat{v}'$ by joining two or more vertices in $\hf$ due to mitered offset.
  Then $\hat{v}'$ is still shared by edge-disjoint Class \ref{2dETcls3} polygons as shown in \cref{fig:topologychangelemma}.
  In the second case, if we further increase the offset distance then $\hv'$ will split into $q$ new vertices ($\hat{v}'_1, \dots, \hat{v}'_q$), where $q$ is the number of Class \ref{2dETcls3} polygons sharing $\hat{v}'$. 
  It will also join $q$ Class \ref{2dETcls3} polygons sharing vertex $\hat{v}'$ into one polygon ($F$) as shown in \cref{fig:topologychangelemma}.
  Then any $\hat{v}_i'$ is shared by edge-disjoint $F$ and Class \ref{2dETcls1} polygons.
  Hence degree of vertices in $\hK$ is even. 	   
  There exists a path from any vertex of the split polygon ($\hf_i$) to any other vertex of $\hf_j$, and hence $\hK$ is connected. 
\end{proof}

We have discussed cases when only combinatorial or only topological changes occur after transformation.
But if \emph{both} combinatorial and topological changes occur, then odd degree vertices may be created due to combinatorial changes. 
In this case, we apply local Euler transformation to some subcomplex of $\hK$, and then by \cref{lem:localeulertransformation} the $1$-skeleton of $\hK$ is again Euler.

\section{Slicing}\label{sec:slicing}

The goal of our 3D printing approach is to have maximum continuous print path and minimum travel path (i.e., non-print path) in each layer.
Further, when printing multiple layers on top of each other, we want to ensure there is no printing in free space.
Ensuring we avoid printing in free space depends crucially on the geometric complexity of the object as well as on the first round of slicing.
We first formalize the condition that the sequence of layers generated by slicing must satisfy in order to prevent printing in free space (Section \ref{subsec:contilayers}).
We assume this condition is satisfied by the layers of the input to our \emph{clipping} procedure, which produces meshes for each polygon in a layer that are guaranteed to be Euler (Section \ref{subsec:clipping}).
 
\subsection{$\epsilon$-Continuous Layers}\label{subsec:contilayers}
Let $\Ps_i = \{P_{ij}\}$ and $\Ps_{i+1} = \{P_{i+1,j}\}$ are sets of the polygons in two consecutive layers created by slicing.
The two layers are said to be $\epsilon$-continuous if for every point $\vx \in P_{i+1,j}$ there exists a point $\vy$ in \emph{some} $P_{ij} \in \Ps_i$ such that $ d(\vx, \vy) \leq \epsilon$ for \emph{all} $P_{i+1,j} \in \Ps_{i+1}$, where  $\epsilon = c r$ with $0 \leq c \leq 1$ and $r$ being the radius of extruder. 
The parameter $c$ determines the maximum \emph{overhang} allowed for the material deposited in a layer over the material in the layer immediately below.
We assume there are sufficient numbers of perimeters in each layer to support the boundary edges in the layer above.
Value of $c$ is chosen based on various design and material considerations.
There are alternative approaches to handle overhangs in specific cases, e.g., using self-supporting rhombic infill structures \cite{WuWaZhWe2016}.
For general applicability of our framework, we assume the output of the slicing step in the design process produces layers that are $\epsilon$-continuous in consecutive pairs.

%
\subsection{Clipping}\label{subsec:clipping}

Suppose $\hK$ is the Euler transformation of $K$, which meshes the union of polygons $\displaystyle \cup_i \Ps_i = \cup_{i,j} P_{ij}$ from all layers, where $P_{ij} \in \Ps_i$ is the $j$th polygon in $i$th layer.
Each polygon $P_{ij}$ has a region $R_{ij}$ to be filled with infill lattice (note that $R_{ij} \subset P_{ij}$ can happen as some polygons may have edges along the boundary of the print domain).
Suppose $\tR_{ij}$ is the inward Minkowski offset with a ball of radius $r$, the extruder radius, of the region $R_{ij}$.
We will use $\tR_{ij}$ instead of $R_{ij}$ to generate the infill lattice for $P_{ij}$.
The reason behind this step is explained in Step \ref{itm:support} (to print the Support Perimeter).
Let polygons $R_{ij}$ and $\tR_{ij}$ be represented by the clockwise-ordered sequence of vertices $\{v_1, .... , v_n\}$ and $\{\tv_1, \dots , \tv_n\}$, respectively.
We define the \emph{clip} operation for intersecting (or clipping) a $2$-complex with a polygon, which may produce a $2$-complex that may have multiple components, and may not be pure.
We also define the \emph{patch} operation that converts the $2$-complex produced by a clip operation back into a connected pure $2$-complex.

\begin{defn} \label{def:clip}
  \emph{({\bfseries Clip})}
  We define how to construct $\tK$, the output of clipping the $2$-complex $\hK$ with polygon $\tR_{ij}$.
  Add  to $\tK$ the polygons, edges, and vertices of $\hK$ contained in $\tR_{ij}$.
  For edges in $\hK$ cut by the boundary of $\tR_{ij}$, add to $\tK$ the portions inside $\tR_{ij}$ as new edges, and their points of intersection on the boundary of $\tR_{ij}$ as new vertices.
\end{defn}
Note that the result of a clip operation may not necessarily be a pure $2$-complex, and can have multiple components (see \cref{fig:multicomponent}).

\begin{defn} \label{def:patch}
  \emph{({\bfseries Patch})}
  Let $\tK$ be the output of a clip operation as specified in \cref{def:clip}.
  Suppose $S= \{\tv_{n+1}, \tv_{n+2}, .... , \tv_{n+m}\}$ is a clockwise ordered sequence of all points of intersection of $\tR_{ij}$ and the $1$-skeleton of $\hK$ with odd degrees in the $1$-skeleton of $\tK$
  (note that $m$ will be even).
  Since $\tR_{ij}$ can intersect edges in $\hK$ between or at their end point(s), vertices in $S$ can be terminal or boundary vertices in the $1$-skeleton of $\tK$.
  Join alternate pair of vertices in $S$ by a clockwise path on $\tR_{ij}$ as shown in Figure \ref{fig:printingprocess}.
  There are two possible choices of joining alternate pairs of vertices ($1$-$2$, $3$-$4,\dots$ or $2$-$3$, $4$-$5,\dots$).
  Pick the option that ensures the end vertices of components represented by subsequences of $S$ are connected by these paths to end vertices of adjacent components.
  Also add new polygons to $\tK$ whose edges include the edges in new paths added as described above, new edges added by the clip operation on $\hK$, and the edges of $\hK$ contained in $\tR_{ij}$.	
\end{defn}

The Patch operation restores the Euler and connected nature of the input complex, except when the Clip step produces isolated simple paths.
In the latter case, the Patch operation still leaves each component Euler.

\begin{lem}\label{lem:singlecomponent}
  Let $\hK$ be a connected pure $2$-complex and its $1$-skeleton is Euler.
  Then $\tK$ produced by the patch operation on $\hK$ is a connected pure $2$-complex and its $1$-skeleton is Euler, assuming none of the components in $\tK$ after the Clip step is a simple path. 
\end{lem}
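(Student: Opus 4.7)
The plan is to verify three properties of the patched complex $\tK$ in turn: that its $1$-skeleton is Eulerian, that it remains a pure $2$-complex, and that it is connected. The Euler property and purity follow from tracking degree contributions and from the construction of new patching polygons, while connectedness is where the exception on simple-path components will bite.

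First I would handle the Euler property by a local bookkeeping argument at each vertex of $\tK$. There are three types. Interior vertices of $\hK$ that sit strictly inside $\tR_{ij}$ retain all their original incident edges, so their degree remains as it was in $\hK$, namely even. Intersection points on $\partial \tR_{ij}$ that have \emph{even} degree after the Clip step are not in $S$; any clockwise boundary path added by Patch that passes through such a vertex contributes exactly two new edges there (one entering, one leaving), preserving parity. Finally, each vertex $\tv_k \in S$ has odd degree after clipping and is an endpoint of exactly one of the added clockwise boundary arcs, so it gains exactly one new edge and becomes even. I would also note that the paths $\tv_1$--$\tv_2$, $\tv_3$--$\tv_4$,\dots are pairwise edge-disjoint, because they traverse disjoint arcs of $\partial \tR_{ij}$ between consecutive vertices of $S$.

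Next, purity: by the last clause of \cref{def:patch}, every edge added by the Clip step (including the short fragments of cut edges of $\hK$) and every arc added along $\partial \tR_{ij}$ in the patching step is declared a facet of a newly added polygon. Together with the polygons of $\hK$ entirely contained in $\tR_{ij}$, this ensures every edge and every vertex of $\tK$ lies on some $2$-cell, giving purity directly from the definition.

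The last and hardest step is connectedness, where I would exploit the choice between the two candidate pairings. Using planarity of $\hG$ (\cref{lem:hGplanar}) and the fact that the boundary $\partial \tR_{ij}$ is a simple closed curve, I would argue that the intersection points contributed by a single connected component of $\tK$ after the Clip step appear as a \emph{contiguous} subsequence of $S$ in clockwise order, so that $S$ is partitioned into blocks, one block per component that meets $\partial \tR_{ij}$. A block has even length because $\hK$ has no odd-degree vertices (a component enters and exits $\tR_{ij}$ in matched pairs along $\partial \tR_{ij}$). The pairing $1$--$2$, $3$--$4$,\dots connects vertices within each block and thus keeps components separate, whereas the shifted pairing $2$--$3$, $4$--$5$,\dots bridges the last vertex of one block to the first vertex of the next block, fusing all components that meet $\partial \tR_{ij}$ into one; the Patch rule selects whichever of these produces the bridging behavior. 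The excluded case where a component after Clip is a simple path is precisely the case where both endpoints of that component lie in $S$ with no other vertices between them, so bridging would turn the path into a closed loop without adding a supporting polygon on one side; this is the main obstacle, and it is exactly why the hypothesis excludes it. Assembling these three pieces gives that $\tK$ is a connected pure $2$-complex whose $1$-skeleton is Eulerian.
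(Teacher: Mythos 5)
Your proposal is correct and follows essentially the same route as the paper's proof: parity bookkeeping showing each vertex of $S$ gains exactly one new edge (yours is slightly more careful in also checking the even-degree intersection points that the added arcs pass through), the new polygons of \cref{def:patch} for purity, and the choice between the two alternate pairings so that the end vertices of each component's block get bridged to the adjacent components. One small correction to your closing aside: for a simple-path component it is the \emph{within-block} pairing of its two endpoints, not the bridging pairing, that would close it into an isolated loop and break connectedness; since the hypothesis excludes that case, this does not affect your argument.
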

\begin{proof}
  Clipping $\hK$ with region $\tR_{ij}$ can create multiple components in the infill lattice if $\tR_{ij}$ intersects any polygon in $\hK$ more than two times, or an edge more than once, or all edges connected to a vertex (see \cref{fig:multicomponent}).
  Each component created in this process has an even number of odd degree vertices (by handshake lemma).

  Let $S' =\{\tv_1, \dots, \tv_p\}$ be a clockwise ordered subsequence of vertices in $S$ of some component (with $p < m$).
  As specified in \cref{def:patch}, we join alternate pairs of vertices in $S'$ by a path on $R_{ij}$ (also see the Clipping Step \ref{itm:patch}) such that edge $\{\tv_1, \tv_2\}$ is not included.
  Since $p$ is even, edge $\{\tv_{p-1}, \tv_{p}\}$ is also not included.
  Hence the first and last vertices in $S'$ ($\tv_1$ and $\tv_p$) are left unpaired, but all intermediate vertices are now connected to $\tK$.
  In this case, the patch operations (in the Clipping step) will necessarily pair $\tv_1$ with a similar unpaired end vertex of the previous component, and also pair $\tv_p$ with the unpaired start vertex of the next component.
  Hence the extra edges added by the patch steps ensure that we get a single connected component.
  Also note that each odd degree vertex gets one additional edge, thus making its degree even.
  Hence the $1$-skeleton of $\tK$ is Euler.
  Finally, the new polygons added to $\tK$ (as specified at the end of \cref{def:patch}) ensure that the resulting complex is pure.
\end{proof}

\begin{figure}[htp!]
  \centering
  \begin{subfigure}[t]{1.5in}
    \centering
    \includegraphics[scale=0.25]{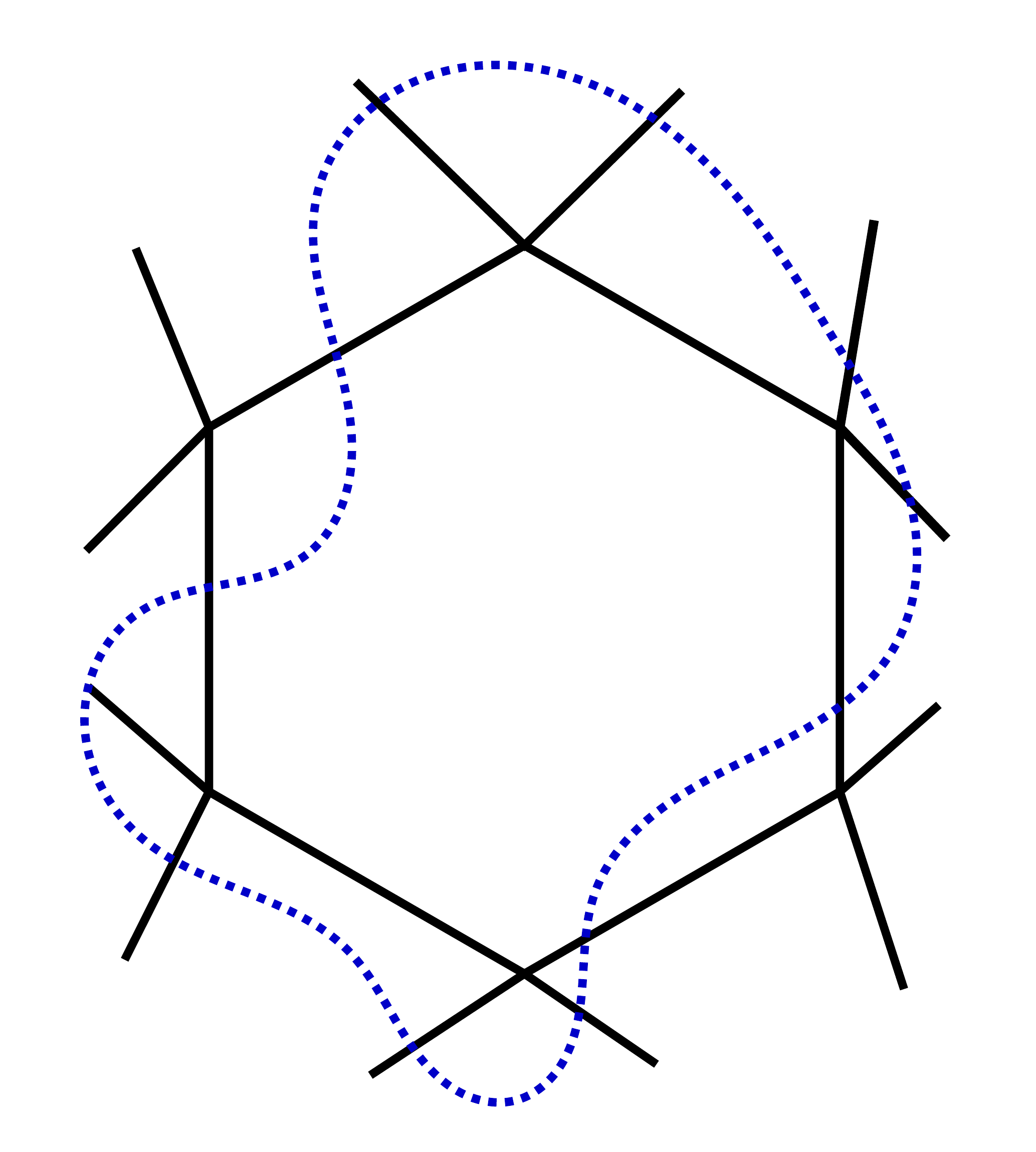}
    \caption{\label{fig:multicomponenta}}
  \end{subfigure}
  \hspace*{1in}
  \begin{subfigure}[t]{1.5in}
    \centering
    \includegraphics[scale=0.25]{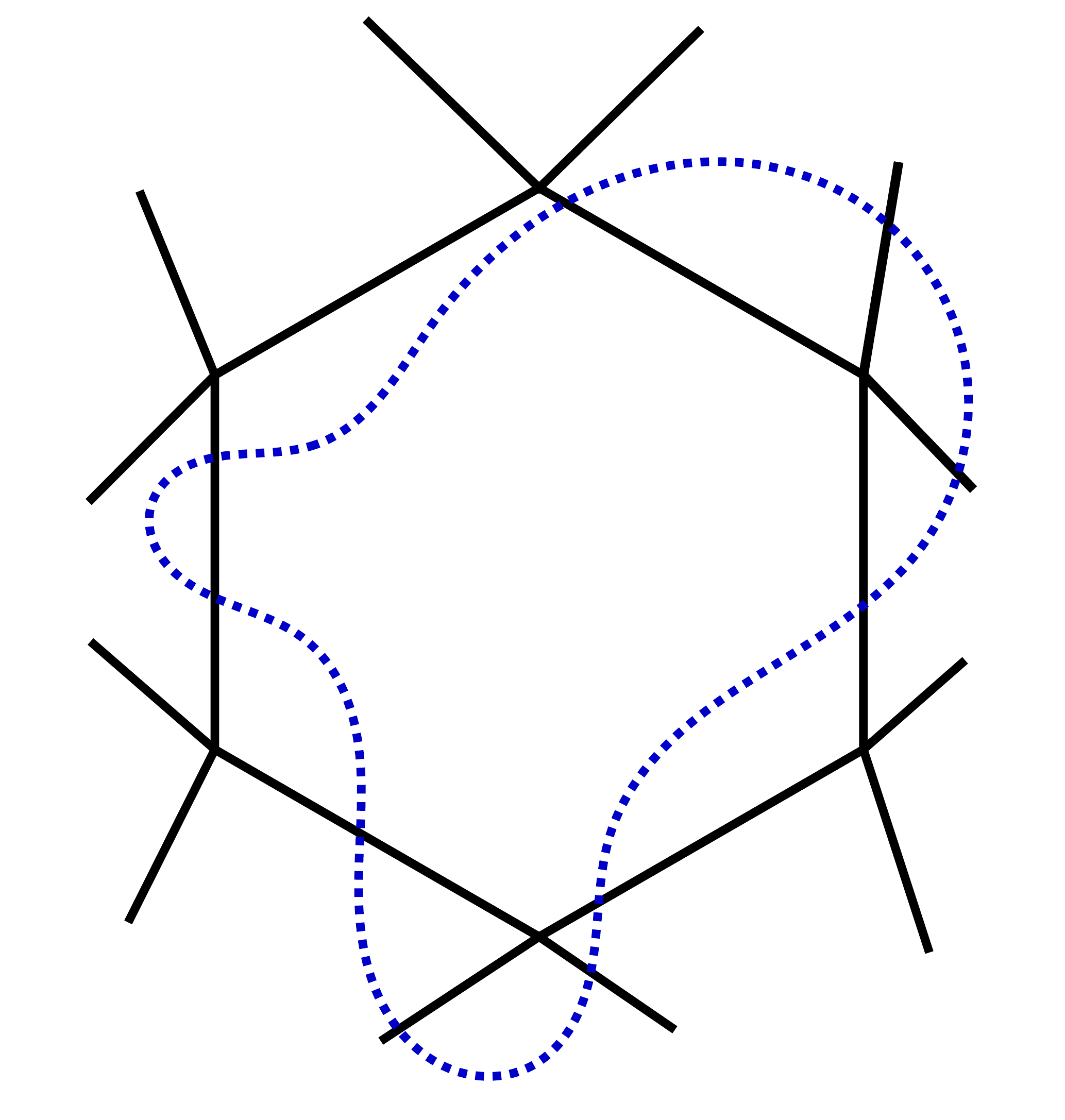}
    \caption{\label{fig:multicomponentd}}
  \end{subfigure}

  \begin{subfigure}[t]{1.5in}
    \centering
    \includegraphics[scale=0.25]{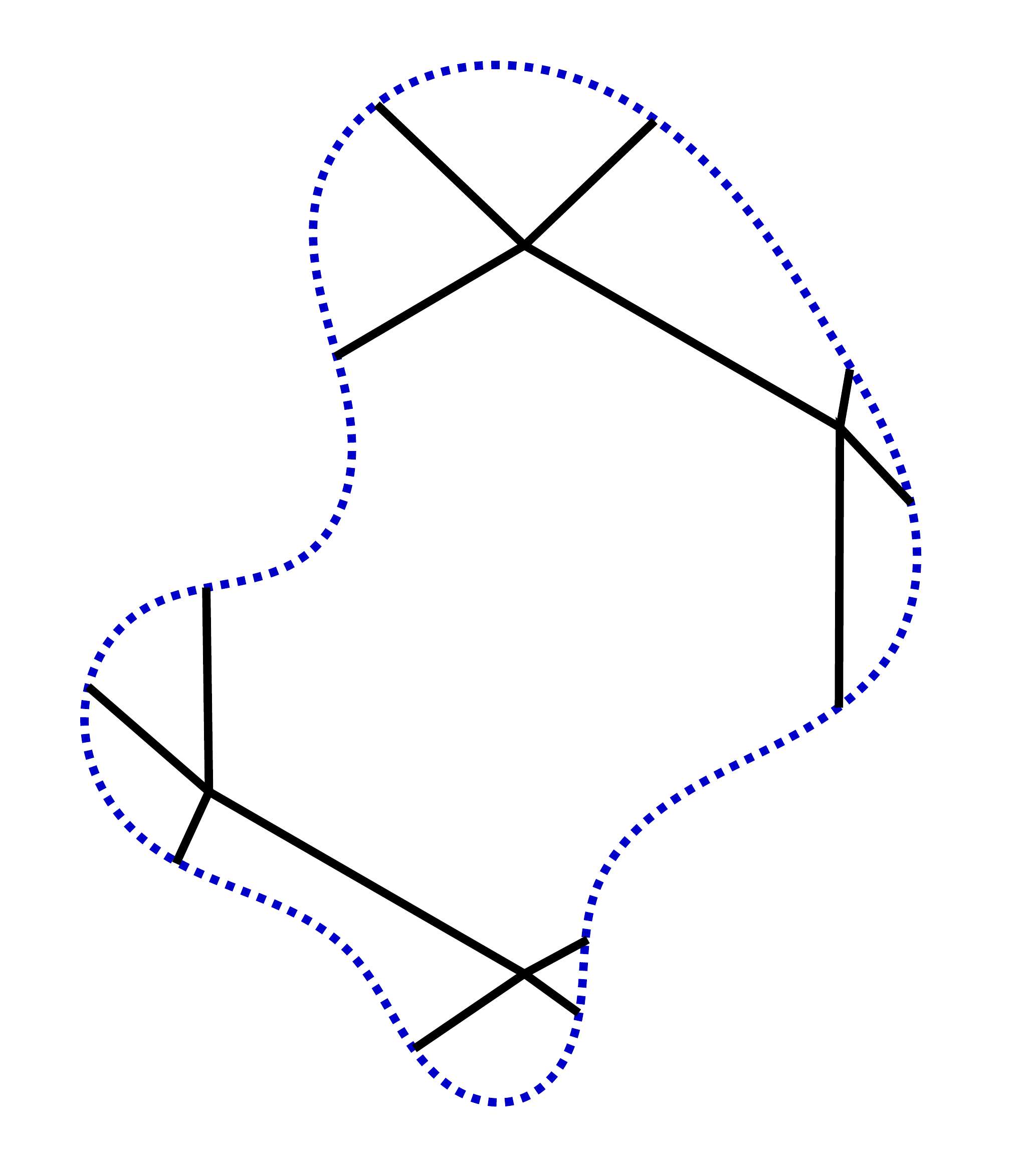}
    \caption{\label{fig:multicomponentb}}
  \end{subfigure}
  \hspace*{1in}
  \begin{subfigure}[t]{1.5in}
    \centering
    \includegraphics[scale=0.25]{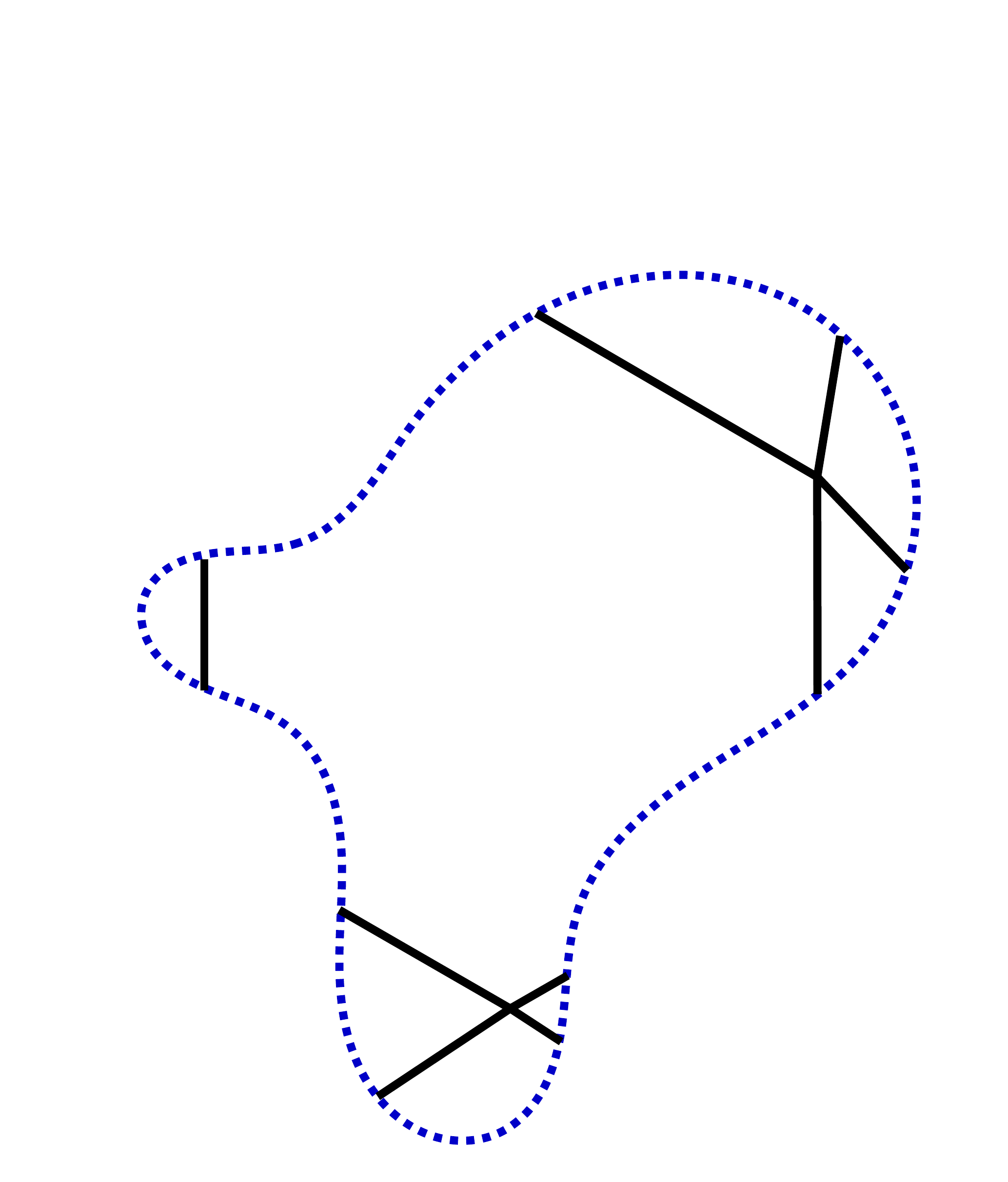}
    \caption{\label{fig:multicomponente}}
  \end{subfigure}

  \begin{subfigure}[t]{1.5in}
    \centering
    \includegraphics[scale=0.25]{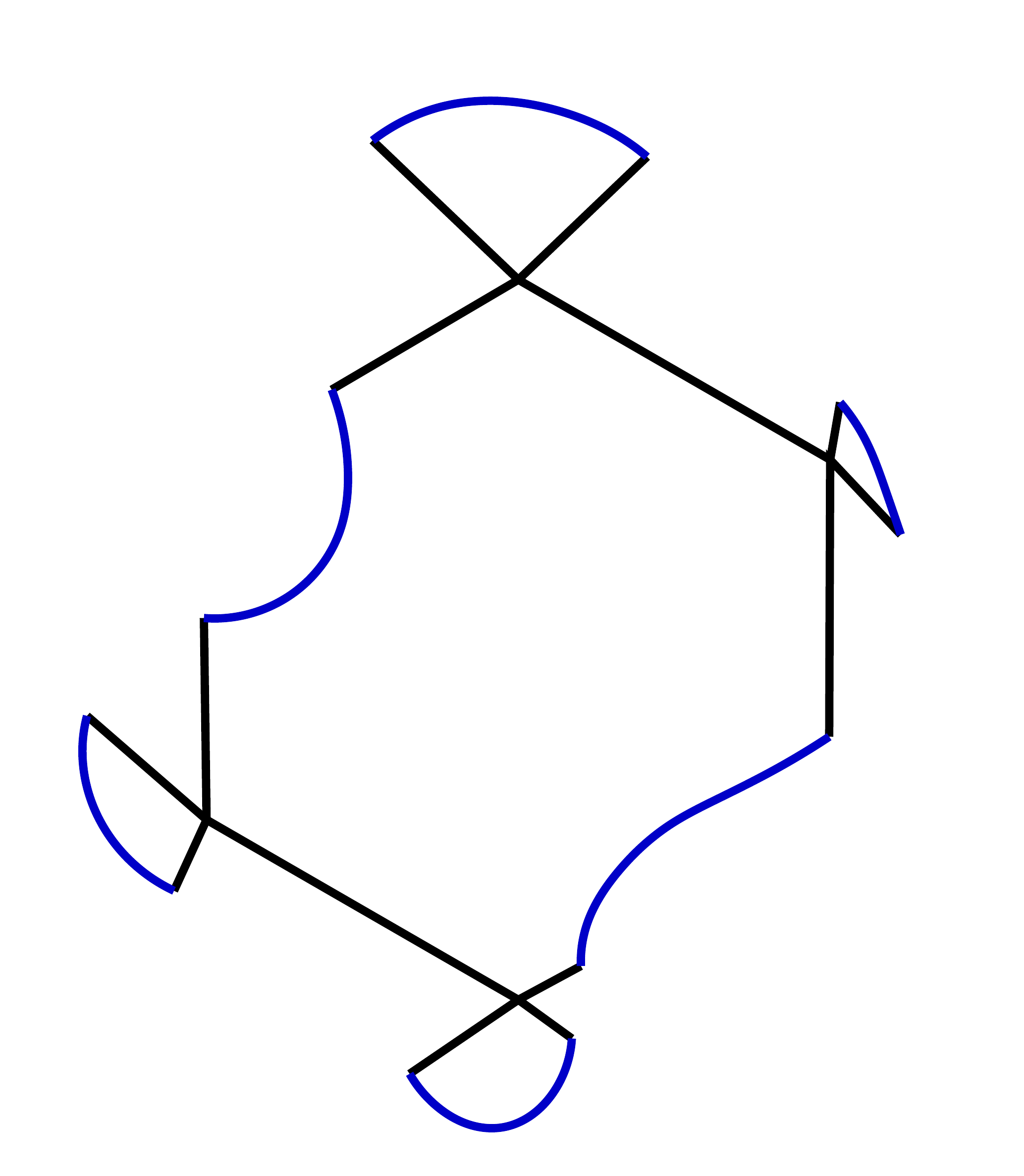}
    \caption{\label{fig:multicomponentc}}
  \end{subfigure}
  \hspace*{1in}
  \begin{subfigure}[t]{1.5in}
    \centering
    \includegraphics[scale=0.25]{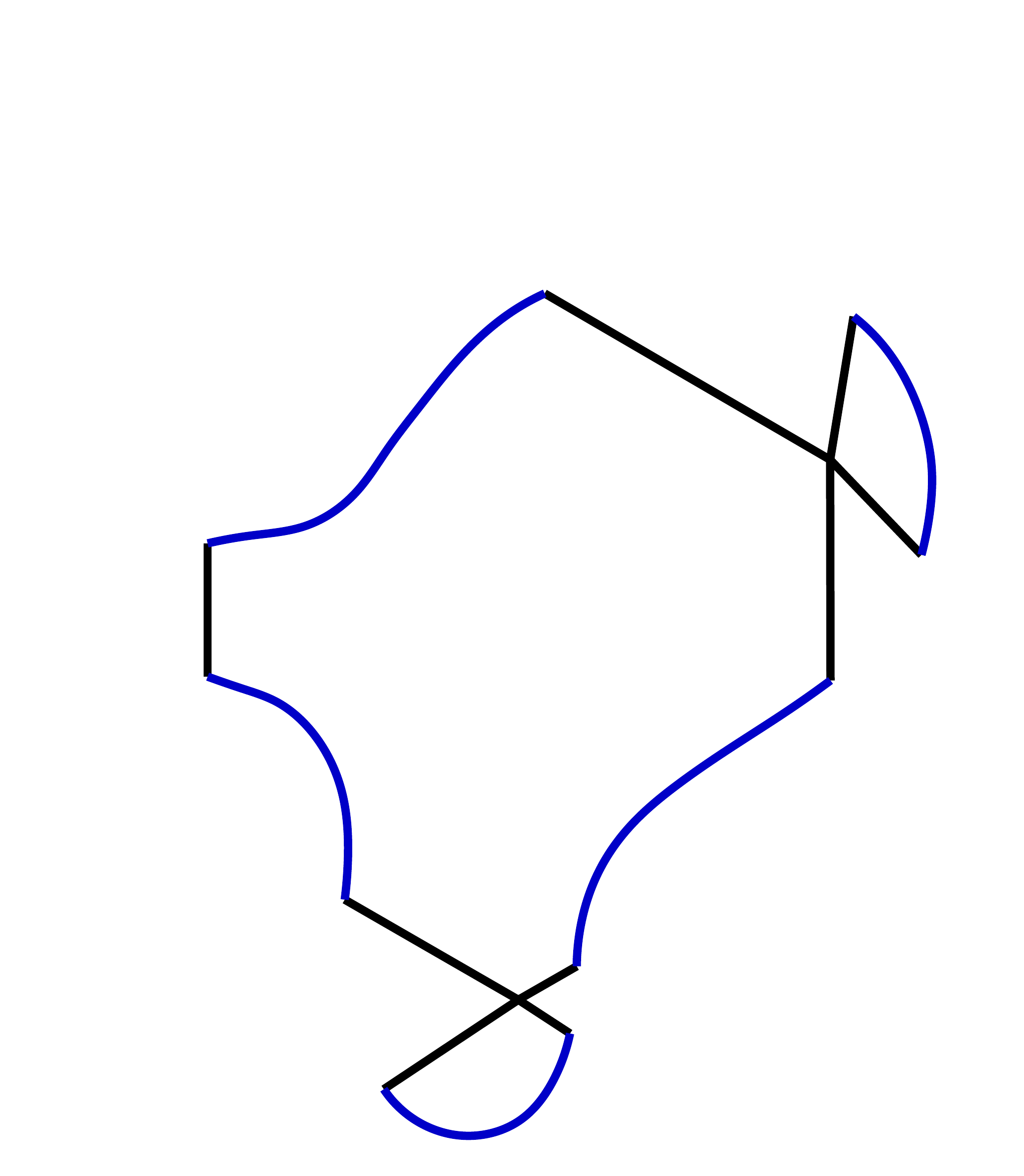}
    \caption{\label{fig:multicomponentf}}
  \end{subfigure}

  \caption{\label{fig:multicomponent}
    Figures (a) and (b) show a $2$-cell (black) of $\hK$.
    Figures (c) and (d) show multiple components after Clip operations on $\hK$ with respective polygons (dotted blue).
    Figures (e) and (f) show subsequent Patch operations connecting the multiple components with solid blue lines.
  }
\end{figure}

\subsection{Continuous Tool Path Planning Framework: Steps} \label{ssec:steps}

Our framework for continuous tool path planning consists of the following steps.
\begin{enumerate}
  \item {\bf{\textit{Slicing:}}}
    Slice an STL file of the design.
    This step creates a sequence of layers, and each layer can have multiple polygons.
    Let $\Ps_i = \{P_{ij}\}$ be the set of all the polygons in layer $i$ with or without holes.
    We assume the layers generated by slicing are $\epsilon$-continuous.
    
  \item{\bf{\textit{Projecting:}}}
    Project all polygons $\{P_{ij}\}$ in each layer $\Ps_i$ on to the horizontal plane.
    Take the {\bfseries union} of all projected polygons (from all layers).
    This union can have an irregular shape depending on the input.
    Let $P$ be the convex hull of the union of projected polygons.
    Note that taking the convex hull will avoid irregularities.
    We are assuming the input design has a single component.
    If not, we can repeat the procedure for each component.

  \item{\bf{\textit{Meshing:}}}
    Mesh $P$ with a pure $2$-complex $K$.
    We assume $K$ satisfies \cref{asmn:Kholesoutside}.
    
  \item \label{itm:eulertransform}{\bf{\textit{Euler Transformation:}}}
    Create $\hK$ by Euler Transformation on $K$.
    We assume that the traversal of edges in $\hK$ will not be affected by extruder size considerations (see \cref{sec:boundaryedges}). 
    
  \item\label{itm:patch}{\bf{\textit{Clipping:}}} 
    $\tK$ is a $2$-complex contained in $\tR_{ij}$.
    It is generated by Clip (\cref{def:clip}) and Patch (\cref{def:patch}) operations on $\hK$ with respect to $\tR_{ij}$ (see \cref{fig:printingprocess}).

    \begin{figure}[htp!] 
      \centering
      \begin{subfigure}[t]{2in}
        \centering
        \vspace*{-2.7in}
        \includegraphics[scale=0.22]{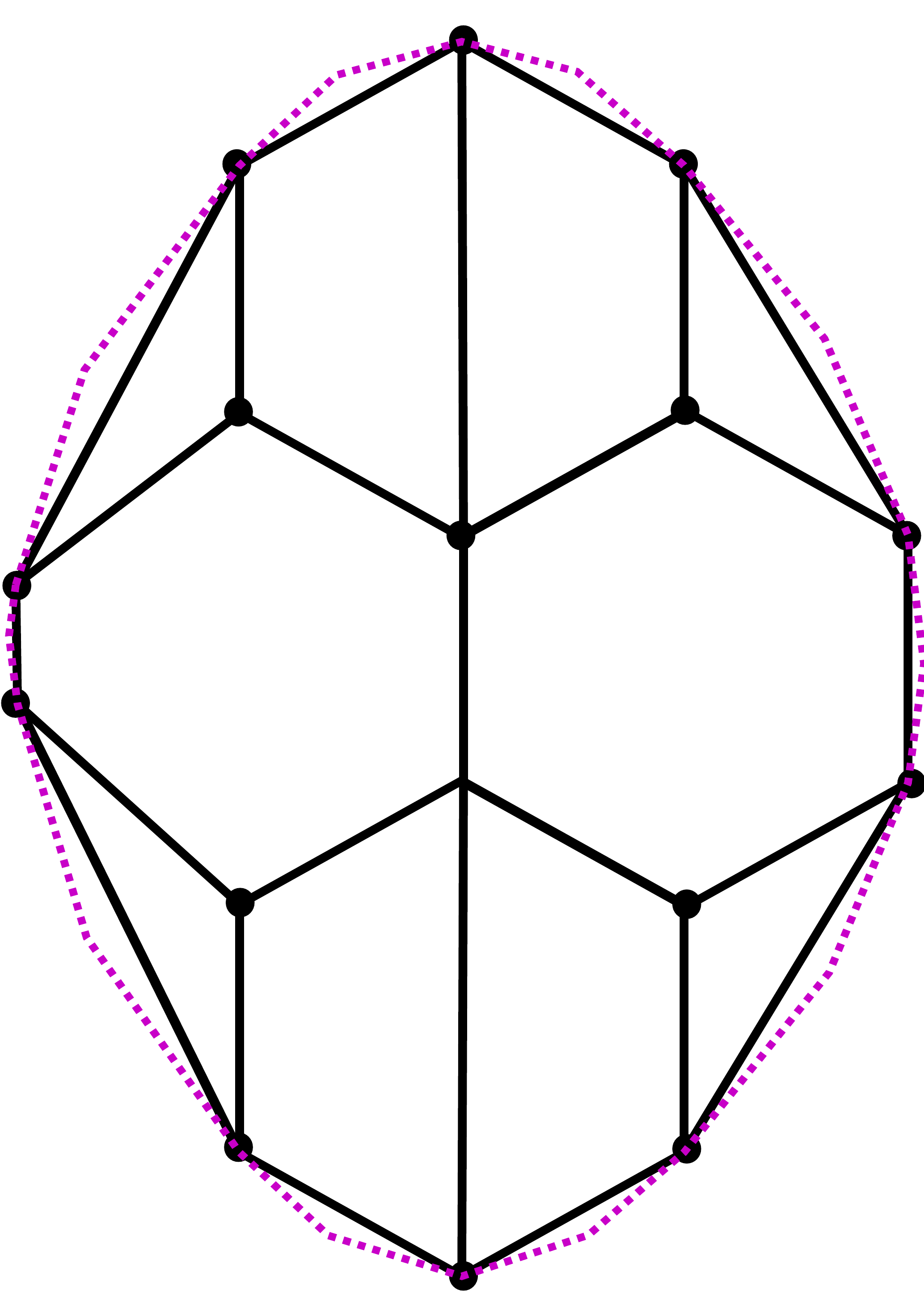}
        \vspace*{0.4in}
        \caption{\label{fig:printingprocessa}}
      \end{subfigure}
      \hspace*{0.2in}
      \begin{subfigure}[t]{2in}
        \centering
        \vspace*{-2.7in}
        \includegraphics[scale=0.22]{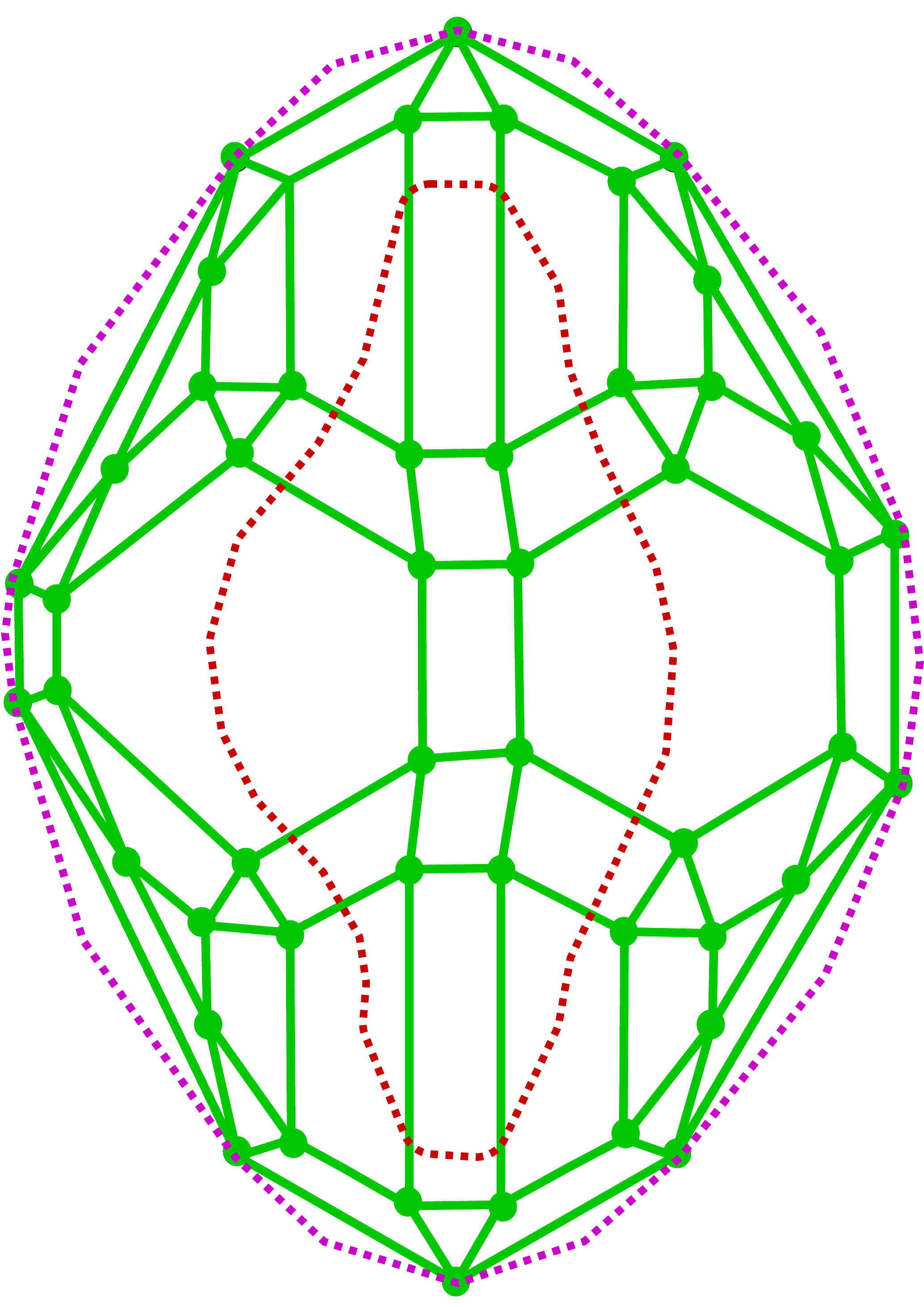}
        \vspace*{0.4in}
        \caption{\label{fig:printingprocessb}}
      \end{subfigure}
      \hspace*{0.2in}
      \begin{subfigure}[t]{1.5in}
        \centering
        \includegraphics[scale=0.4]{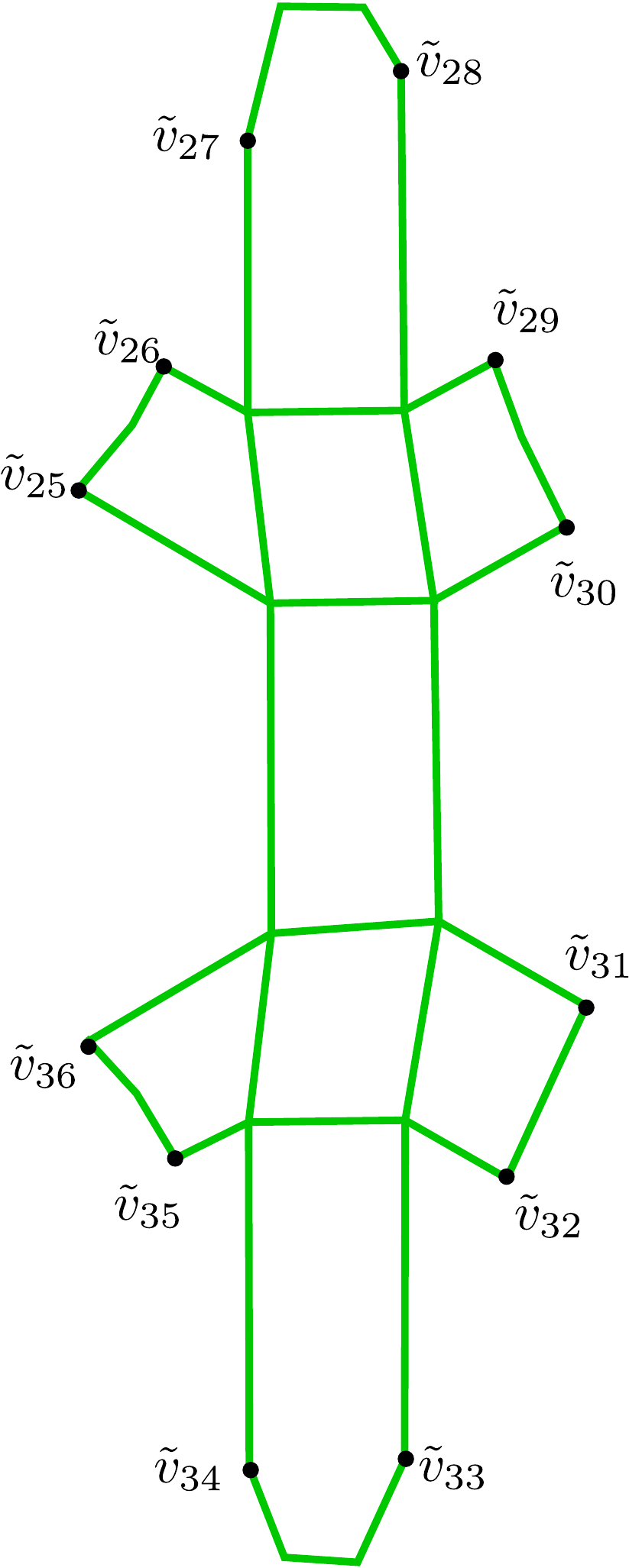}
        \caption{\label{fig:printingprocessc}}
      \end{subfigure}
      \caption{\label{fig:printingprocess}
        (a) Projected polygon $P$ (purple) and initial $2$-complex $K$ (black).
        (b) Euler transformation $\hK$ (green) and Region $\tR_{ij}$ (red dots) of polygon $P_{ij}$.
        (c) $\hK$ is clipped using $\tR_{ij}$ and patched to $2$-complex $\tK$ (green; enlarged for better visibility) after Clipping Step \ref{itm:patch}, where $\tR_{ij}$ has  $\{\tv_1, \dots, \tv_{24}\}$ sequence of vertices and $\{\tv_{25}, ..... , \tv_{36}\}$ are points of intersection of $\tR_{ij}$ and $1$-skeleton of $\hK$.
      }
    \end{figure}

    By \cref{lem:singlecomponent}, we get that $\tK$ is connected and its $1$-skeleton is Euler since $\hK$ has a single component and its $1$-skeleton is Euler. 
    There are two possible choices of pairing alternate vertices for patch operation.
    We can choose either option when $\tK$ is a single component.
    If $\tK$ has multiple components, either option leaves all resulting components Euler.
    
    Printing the infill lattice in each layer amounts to printing edges in the $1$-skeleton of $\tK$.
    We clip the $2$-subcomplex $\tK$ for each layer from $\hK$ to prevent printing in free space.
    Each edge is supported by an edge in the layer below it except for boundary edges in $\tK$.
    We will add a support perimeter to support boundary edges in $\tK$, as discussed in the next step.

  \item\label{itm:support}
    {\bf{\textit{Support Perimeter:}}}
    Let $\tR_{ij} \in P_{ij}$ and $\tR_{i+1,j} \in P_{i+1,j}$ be such that $\tR_{i+1,j}$ is supported by $\tR_{ij}$ through $\epsilon$-continuity as shown in \cref{fig:step_6_reasoning}.
    There are two possible ways we can select alternate pairs of vertices for subcomplex $\tK$ to join in $\tR_{i+1,j}$: $\{\{12, 13\}, \{14, 15\}, \{16, 11\}\}$ or $\{\{11, 12\}, \{13, 14\},$ $\{15, 16\}\}$. 
    Then the edges $\{12, 13\}$ in the first case and $ \{11, 12\}, \{13, 14\}$ in second case are not supported if $\{2, 3\},\{4, 5\}, \{6, 7\}, \{8, 9\},$ and $\{10, 1\}$ are the vertices pairs selected for $\tR_{i+1,j}$.
    To solve this problem we need a way to print all the edges at boundary of the polygons.
    \begin{figure}[htp!] 
  	\centering
  	\includegraphics[scale=0.425]{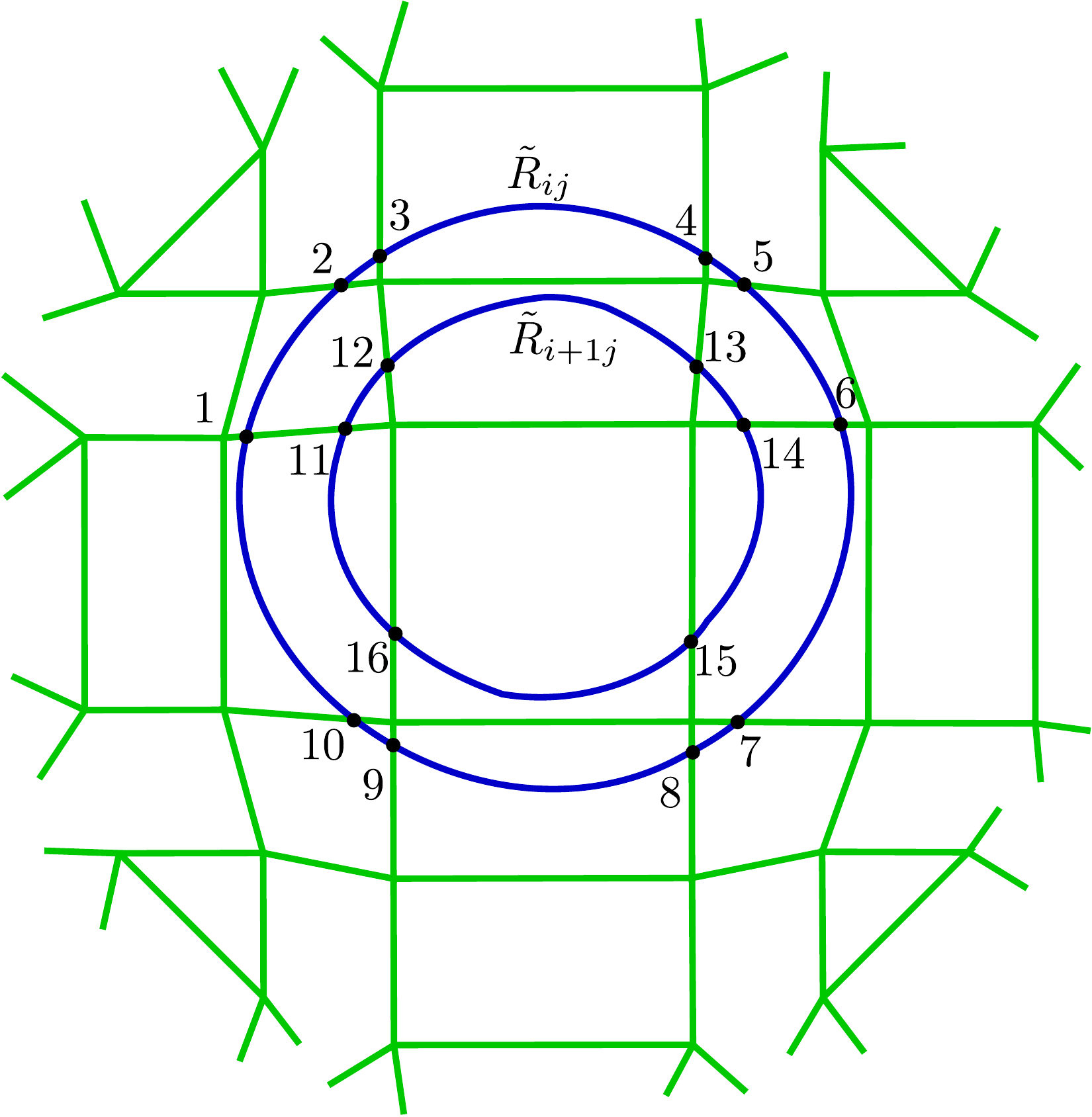}
  	\caption{\label{fig:step_6_reasoning}
          $\tR_{ij}$ (blue) in $P_i$, $\tR_{i+1j}$ (blue) in $P_{i + 1}$ intersect the complex $\hK$ (green) at points $\{1, \dots,10\}$ and $\{11, \dots, 16\}$, respectively.
        }
    \end{figure}  
  
    Since we are not printing all the edges at the boundary of $\tR_{ij}$, we could have some overhanging boundary edges in $\tK$.
    Let $\tP$ be the nonprinted path on $\tR_{ij}$ between $\tv_{n + l}$ and  $\tv_{n+l+1}$, which are vertices of $S$ on $\tR_{ij}$.
    Add circles of radius $r$, the extruder radius, on $\tv_{n+l}$ and $\tv_{n+l+1}$ and on path $\tP$ such that neighboring circles do not intersect.
    We assume the circles only intersect neighboring line segments on the path.
    Suppose $\eta$ is the maximum number of circles of radius $r$ that can be added on path $\tP$, assuming there are $2$ circles of radius $r$ centered at end points of the path.
    Add $\eta$ possible circles on path $\tP$, where the center of the $j^{\mbox{th}}$ circle is $\tv'_j$.
    The total gap we can have between the circles is $2r - \delta$, where $0 < \delta < 2r$ as shown in \cref{fig:supporta}.
    We can uniformly distribute the gap of $(2r - \delta)/(\eta + 1)$ between the circles as shown in \cref{fig:supportb} assuming $\eta$ is at least one.
    Since $\tR_{ij}$ is the inward Minkowski offset of $R_{ij}$ with a $2$-ball of radius $r$, for any $\tv'_j$ there exists a point $a$ such that line segment $\{\tv'_j, a\}$ is perpendicular to $\tR_{ij}$ and $R_{ij}$, and  $d(\tv'_j, a) = 2r$.
    Let $v_{n+k}$ and $v_{n+k+1}$ be points of intersection with $R_{ij}$ of the circle of radius $r$ centered at $a$.
    Create a corner by adding edges $\{v_{n+k}, a\}$, $\{v_{n +k +1}, a\}$ as shown in Figure \ref{fig:supportc}.
    Connect end points of the corners by a path on $R_{ij}$ if corner line segments do not intersect each other.
    Else change end points to the points of intersection so as to form a simple closed polygon as shown in \cref{fig:supportd}.
    We illustrate in \cref{fig:supportperimeter} the support perimeter around $\tK$ shown in \cref{fig:printingprocessc}.
    
 \begin{figure}[htp!]    
   \begin{subfigure}[t]{2.75in}
     \centering
     \includegraphics[scale=0.42]{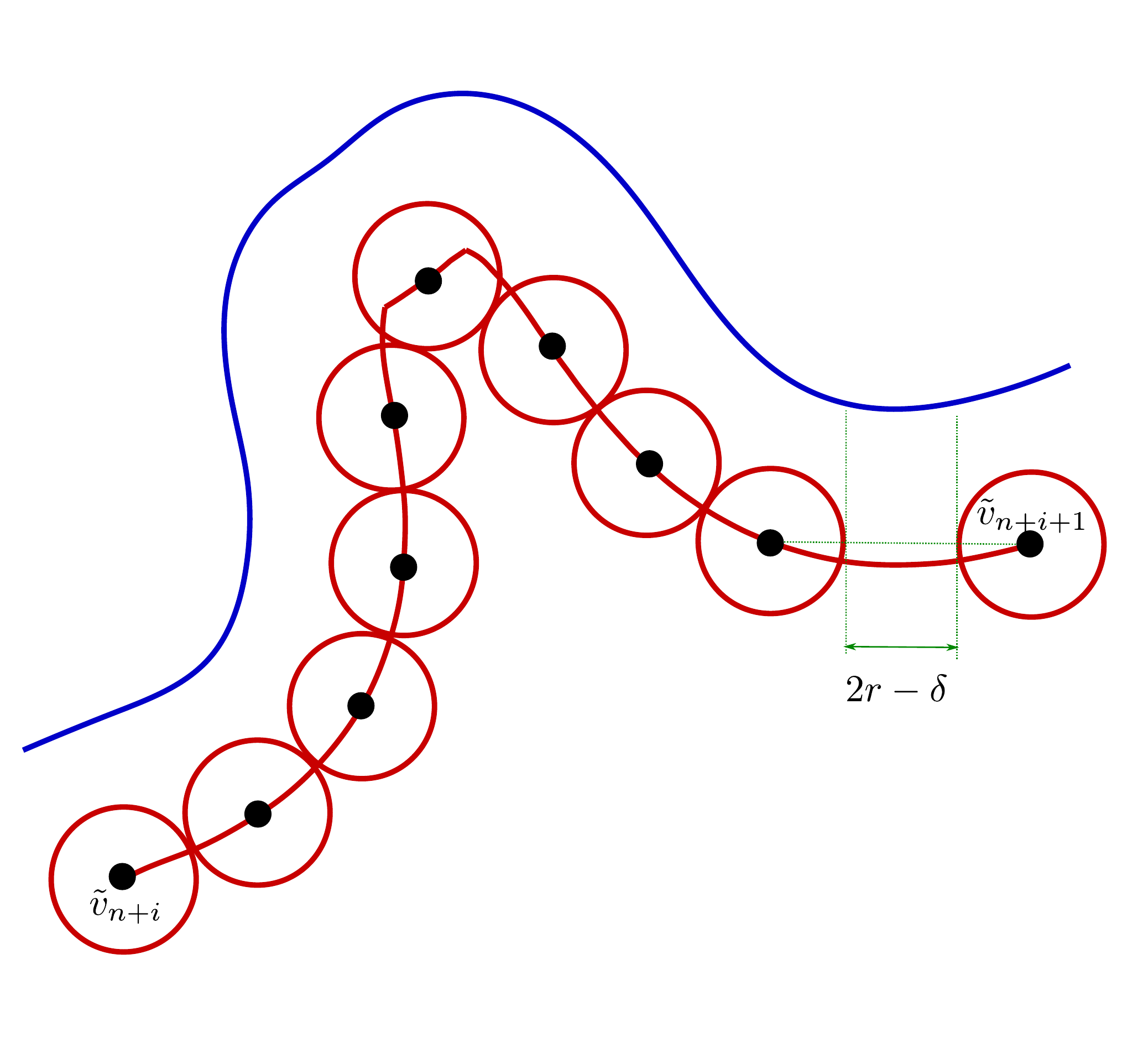}
     \vspace*{-0.1in}
     \caption{\label{fig:supporta}}
   \end{subfigure}
   \hspace*{0.3in}
   \begin{subfigure}[t]{2.75in}
     \centering
     \includegraphics[scale=0.42]{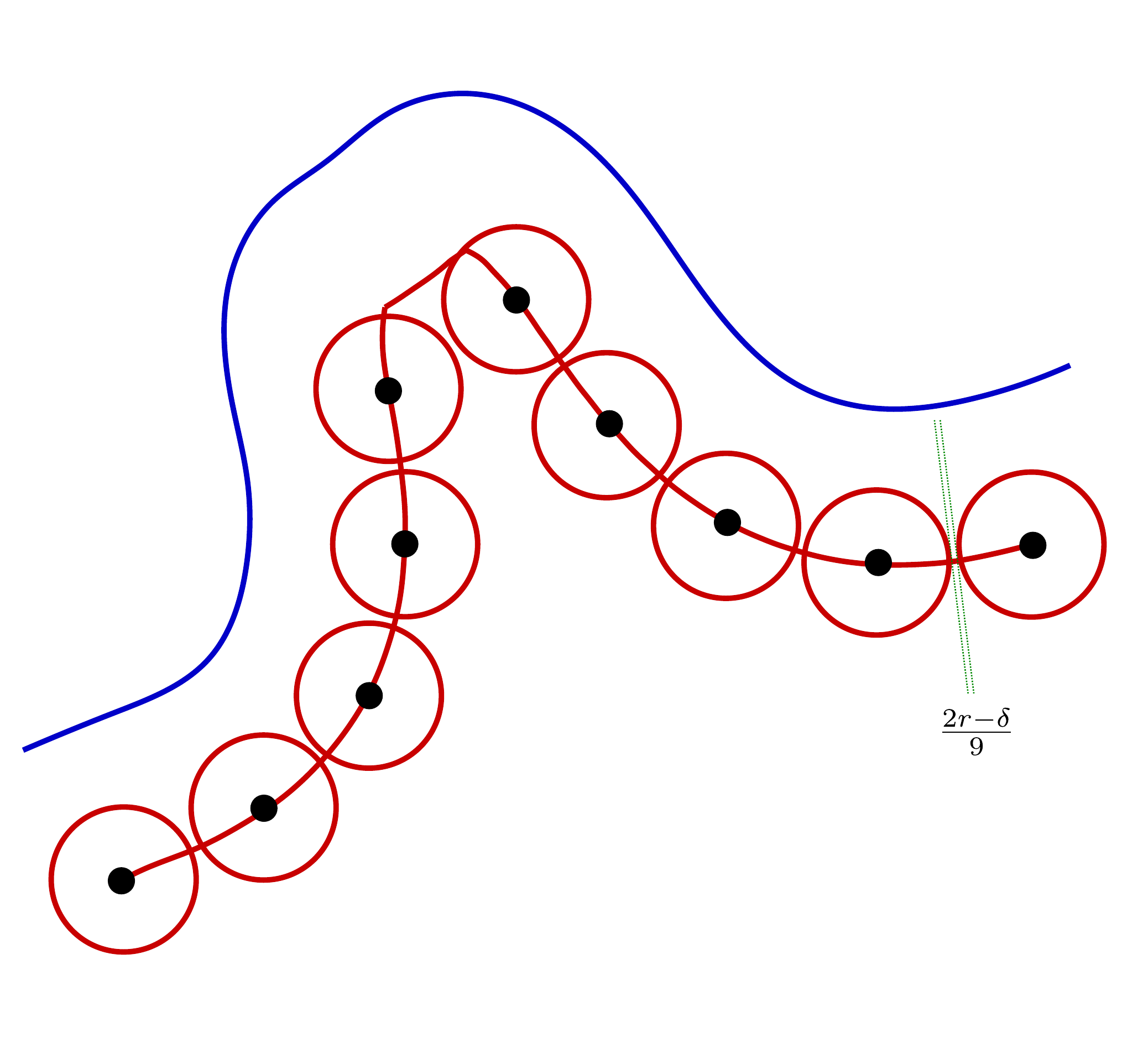}
     \vspace*{-0.1in}
     \caption{\label{fig:supportb}}
   \end{subfigure}
 
   \begin{subfigure}[t]{2.75in}
     \centering
     \includegraphics[scale=0.42]{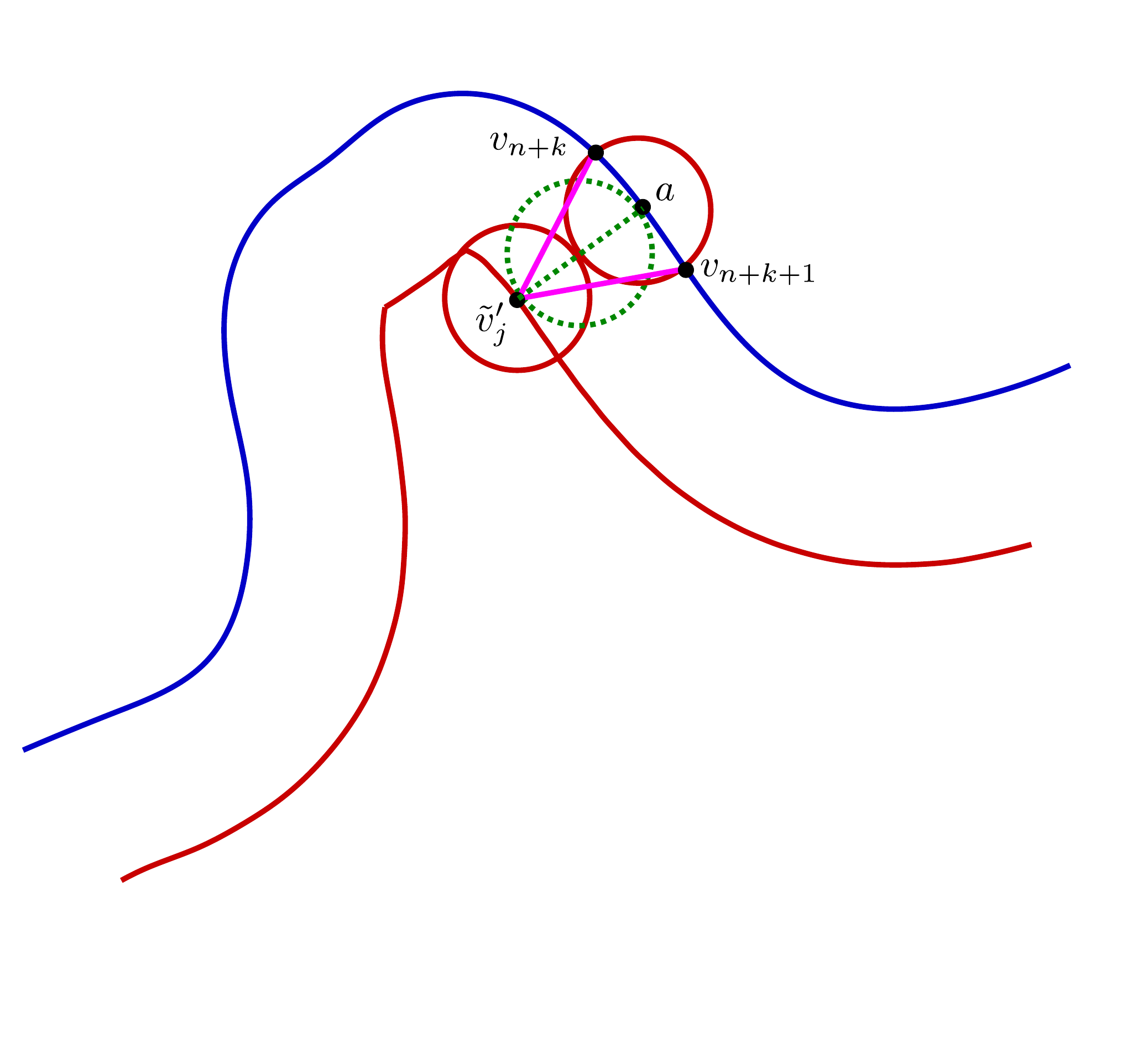}
     \vspace*{-0.1in}
     \caption{\label{fig:supportc}}
   \end{subfigure}
   \hspace*{0.3in}
   \begin{subfigure}[t]{2.75in}
     \centering
     \includegraphics[scale=0.42]{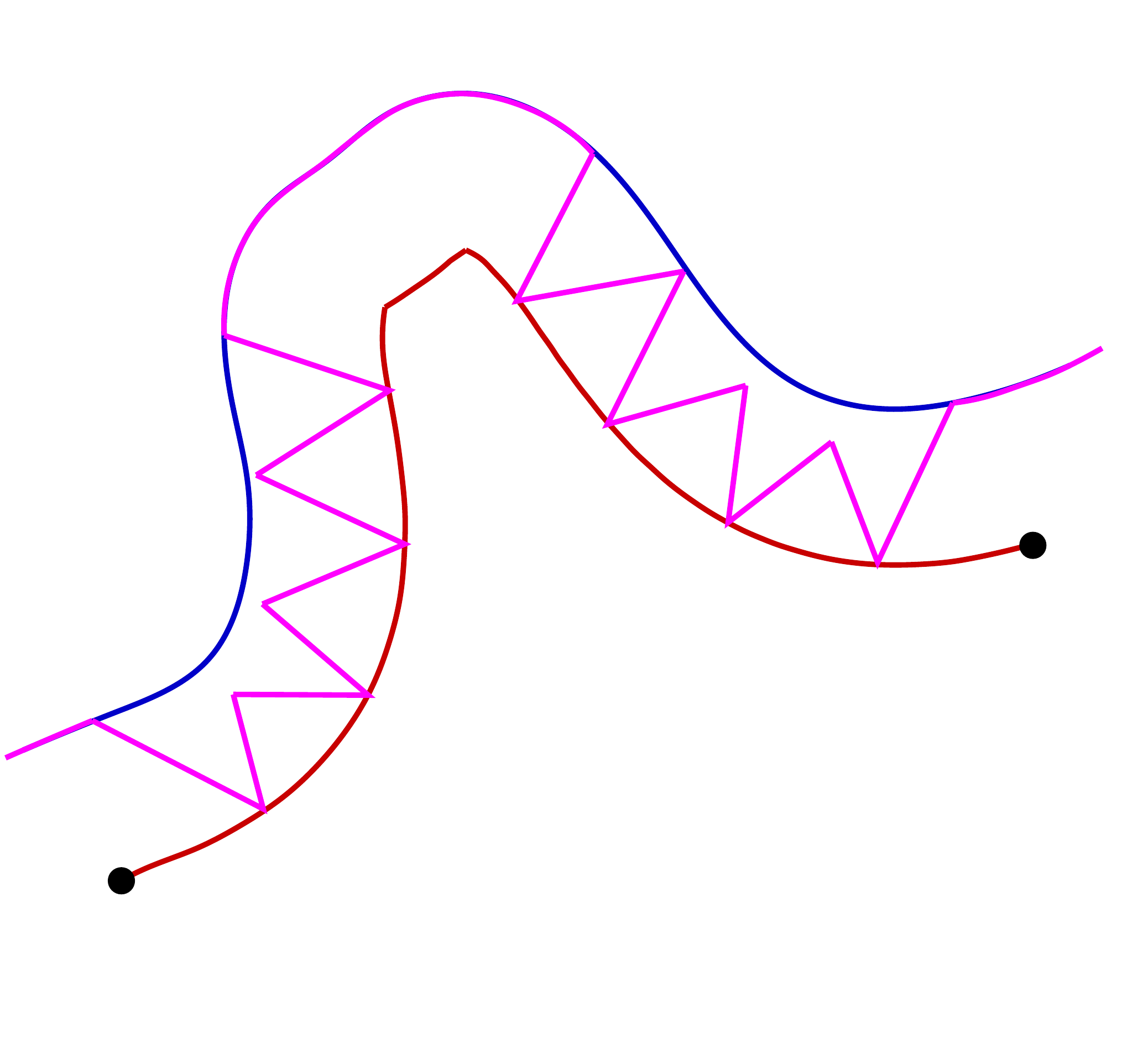}
     \vspace*{-0.1in}
     \caption{\label{fig:supportd}}
   \end{subfigure}

   \caption{\label{fig:support}
     (a) Portion of $R_{ij}$ (blue) and $\tR_{ij}$ (red), $\tP$ (red), with total gap between circles being $2r - \delta$.
     (b) Uniformly distribute the gap into $(2r - \delta)/9$ parts between neighboring circles.
     (c) $\{v'_j, a\}$ is perpendicular to $\tR_{ij}$ and $R_{ij}$, circle centered at $a$ intersects $R_{ij}$ at $v_{n+k}, v_{n+k+1}$, and corner (pink) after adding edges $\{v'_j, v_{n+k}\}, \{v'_j, v_{n + k + 1}\}$.
     (d) Neighboring corners joined (pink) to form a simple closed polygon.
     }
 \end{figure}    

 \begin{figure}[htp!] 
   \centering
   \includegraphics[scale=0.325]{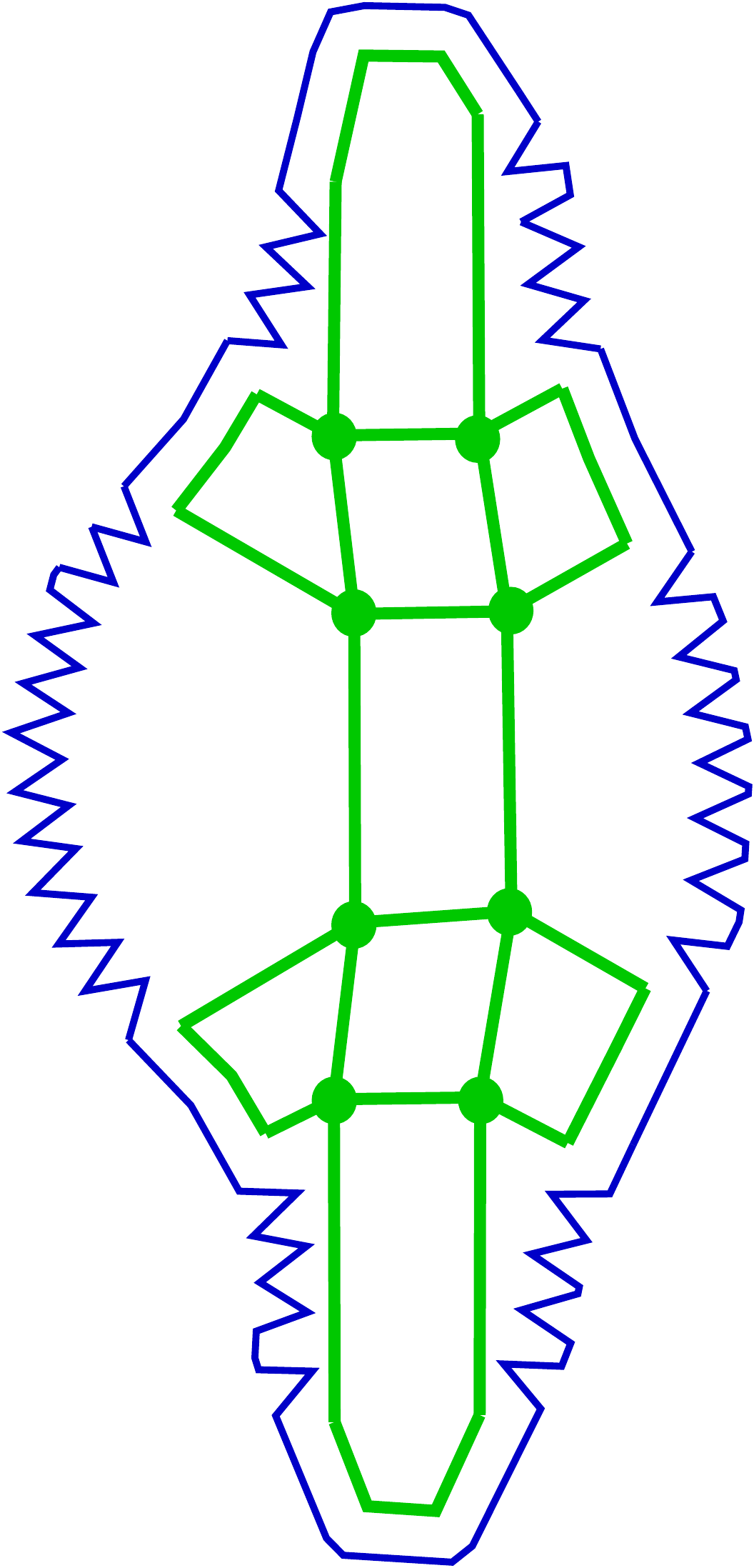}
   \caption{Support perimeter for $\tK$ shown in \cref{fig:printingprocessc} is shown in blue here. }
   \label{fig:supportperimeter}
 \end{figure}   	 
\end{enumerate}


\begin{rem}
  {\rm 
  In Step \ref{itm:support}, we add corners only if we can add circles of radius $r$ on path $\tP$, given there are circles of radius $r$ at the end point of the path.
  It is not guaranteed that line segments of $\tP$ will be covered by the support, since the coverage depends on the curvature of $\tP$ as shown in Figure \ref{fig:supportd}.
  An alternative approach to printing the support is to print the individual non-printed sections in $\tR_{ij}$ while making non-print travel moves in between.
  But this approach will have $m/2$ starts and stops.
  If $\hK$ is a highly dense $2$-complex, then $m$ will be large and we will have a large number of starts and stops in this case.
  }
\end{rem}

\begin{rem}
  {\rm
    A continuous tool path exists only if any polygon in $\tK$ is shrinkable (\cref{def:shrinkable}) with no topological changes.
    But boundary polygons in $\tK$ due to clip and patch (\cref{def:clip,def:patch}) operations can be shrinkable with topological changes or unshrinkable as mentioned in \cref{sec:boundaryedges}.
  }
\end{rem}

\section{Tool Path Algorithm}\label{sec:toolpathalgo}
Since $\tG$, the $1$-skeleton of $\tK$, is Euler, we can construct a tool path that consists only of the print path, i.e., all edges with none of them repeated.
In general, such a tour can cross over itself at a vertex, creating a special case of material collision termed \textit{crossover}.  
We present a tool path algorithm that chooses the subcycles in an Eulerian tour of $\tG$ carefully so as to avoid all crossovers.
First we construct a circuit tree that represents $\tG$, with the vertices of the tree representing edge-disjoint circuits in $\tG$.
Second, we add edge traversal restrictions in order to avoid crossovers in the tool path, which is described by specifying a traversal order of the circuit tree. 
\subsection{Circuit Tree}

\begin{algorithm}[htp!] 
	\caption{CircuitTree}
	\label{alg:circuittree}
	\begin{algorithmic}[1]
		\State Unmark all the edges in $\tK$
		\State CircuitList =  FindBoundaryCircuits$(\tK$, $\emptyset)$
		\Comment{\textit{Initial } CircuitList \textit{contains one circuit} $C_{\rm Init}$}
		\State Mark all the edges in $\tK$ of $C_{\rm Init}$
		\State pred$(C_{\rm Init}) = \emptyset$ \Comment{\textit{Predecessor of} $C_0$ \textit{is empty}}
		\While{CircuitList $\neq \emptyset$}
		\State $C$ = CircuitList.pop$(0)$ 
		\State Circuits = FindBoundaryCircuits$(\tK$, $C)$ \Comment{\textit{Returns list of circuits}}
		\State CircuitList = CircuitList $\cup$ Circuits
		
		\While{Circuits $\neq \emptyset$}
		
		\State $C_s$ = Circuits.pop$(0)$ 
		\State pred$(C_s)$= $C$ \Comment{\textit{predecessor of $C_s$ is $C$}}
		\State Mark all the edges in $\tK$ of $C_s$
		
		\EndWhile
		  	
		\EndWhile
	\end{algorithmic}
\end{algorithm}

\begin{algorithm}[htp!] 
	\caption{FindBoundaryCircuits$(\tK,C_0)$}
	\label{alg:findboundarycircuit}
	\begin{algorithmic}[1]
	\State Find collection of edges $H$ based on $C_0$ \Comment{\textit{edges of cells intersecting $C_0$ not in $C_0$}}
	\State Circuits = \{\}	
     \While{$H \neq \emptyset $}
     
     \State Find circuit $C$ using Modified Hierholzer's algorithm
     \State Remove all edges of $C$ from $H$
     \State Circuits = Circuits $\cup$ \{C\}
     
     \EndWhile			
	
	\State \Return Circuits
	\end{algorithmic}
	
\end{algorithm}

Algorithm \ref{alg:circuittree} constructs the circuit tree.
It works by finding the outermost circuit in $\tK$ first, and continues to find all inner circuits in $\tK$, finishing with the innermost circuits.
The outermost circuit in the $1$-skeleton of $\tK$ consists of boundary edges in $\tK$.
An innermost circuit in the $1$-skeleton of $\tK$ contains no other circuit in its underlying space in $\tK$.  
The outermost circuit corresponds to the root node in the circuit tree, while innermost circuits are leaf nodes.
The union of all circuits in the circuit tree is $\tK$.
Every node in the Circuit tree represents a circuit.
The predecessor of a circuit $C$ (pred$(C)$) is another circuit connected to $C$ at one or more vertices.
All interior vertices in $\tK$ have even degree with at least degree $4$ (some vertices may have even degree at least $6$ due to applications of local Euler Transformation).
Some vertices at the boundary of $\tK$ may have degree $2$ as mentioned in step \ref{itm:patch} of Section \ref{subsec:clipping}.
But there is at least one boundary vertex of $\tK$ with degree at least $4$, if there is more than one node in the circuit tree.
This implies every circuit $C$ can have at most $d$ consecutive descendants in a path in the circuit tree, where $2d$ is the maximum degree of a vertex in $C$.
Based on its construction, any two circuits in the circuit tree are disjoint if they are not on the same path starting at the root node in the circuit tree.
An example circuit tree is shown in Figure \ref{fig:circuittree}.
%

Given a circuit $C_0$, the algorithm finds inner circuits as follows.
Suppose $\tf$ is a $2$-cell in $\tK$ sharing edges $\{\te_i\}$ in $C_0$ and none of its edges are marked in $\tK$ except $\{\te_i\}$.
Then all other edges of $\tf$ except $\{\te_i\}$ are part of successor circuits in the circuit tree.
Let $H$ be the collection of all the edges in all $2$-cells of the form $\tf$, except edges in $C_0$ (see Algorithm \ref{alg:findboundarycircuit}).
Then $H$ represents the next ``onion layer'' of boundary circuits.
If $C_0$ is empty, then $H$ is the  collection of all the boundary edges in $\tK$.

\begin{figure}[hbp!] 
  \centering
  \vspace*{-0.1in}
  \includegraphics[scale=0.35]{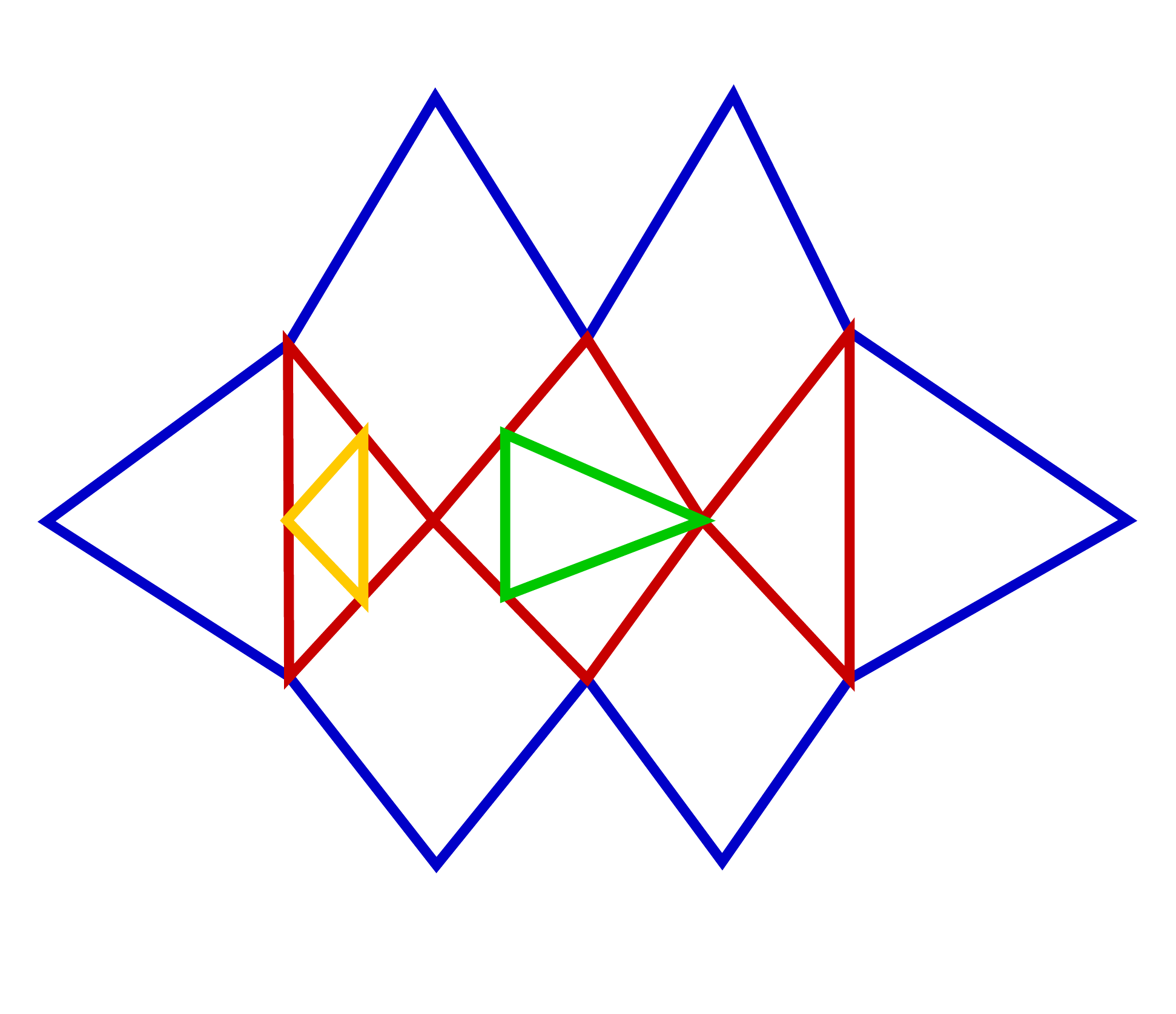}
  \hspace*{0.4in}
  \includegraphics[scale=0.32]{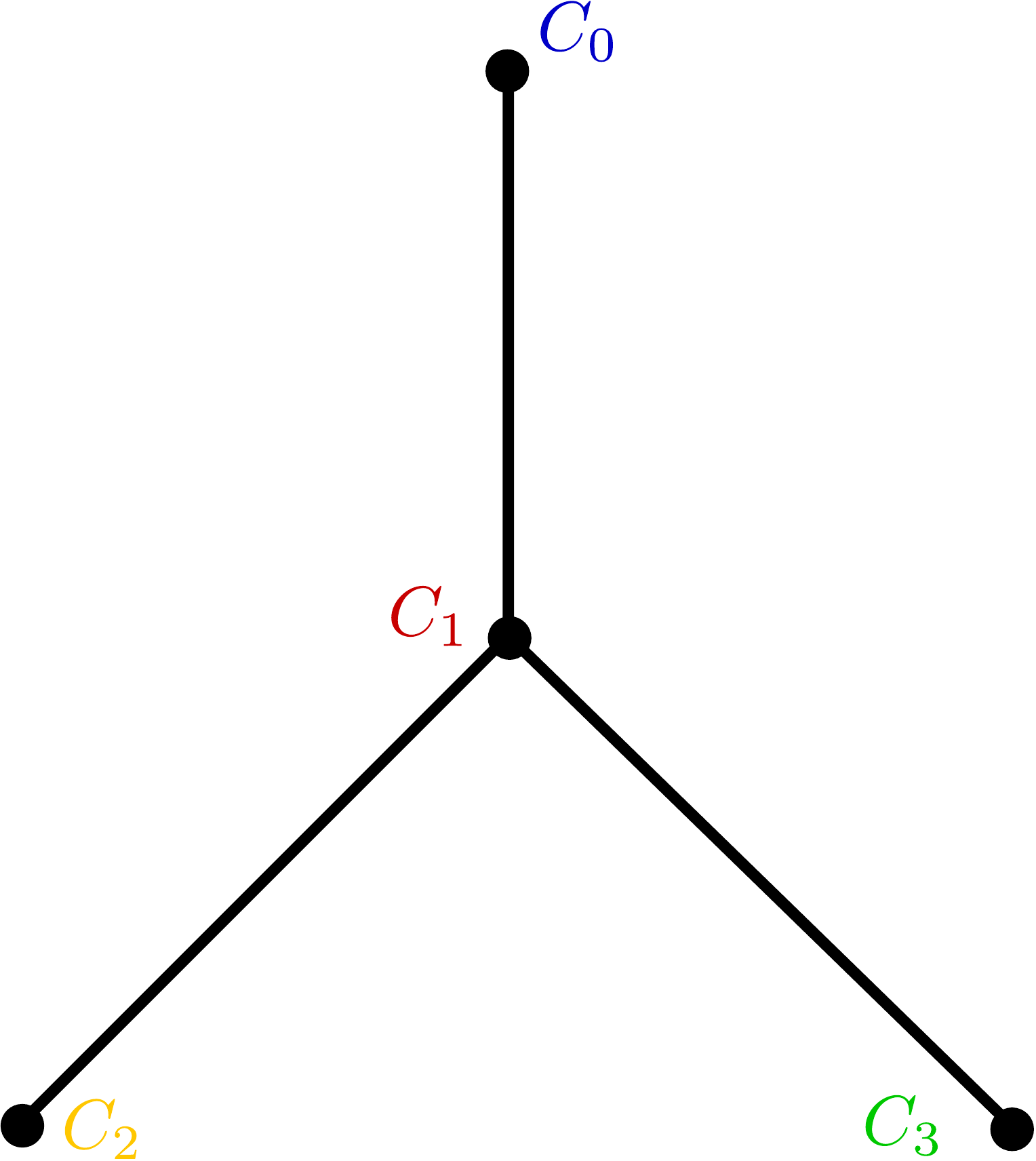}
  \caption{
    $\tK$ (left) has $C_0$ (blue), $C_1$ (red), $C_2$ (yellow), $C_3$ (green) circuits.
    In the circuit tree (right), child circuits $C_2, C_3$  of $C_1$ are disjoint.
  }
  \label{fig:circuittree}
\end{figure}

\noindent \textit{Modified Hierholzer's algorithm}: The original Hierholzer's algorithm \cite{HiWi1873} was designed for a connected graph, but in our case $H$ can have multiple components.
We want to orient the circuits in order to identify the tool path avoiding crossovers.
We assume without loss of generality that each circuit is oriented clockwise as shown in Figure \ref{fig:clockwiseeulercircuit}.
Hence subtours of this circuit are oriented clockwise as well.
Pick a vertex in $H$, find all connected subtours $\{S_j\}$ and join them to obtain a circuit.
Delete all the edges and vertices in this circuit from $H$.
Repeat the process until $H$ is empty.

\begin{figure}[htp!] 
  \centering
  \includegraphics[scale=0.3]{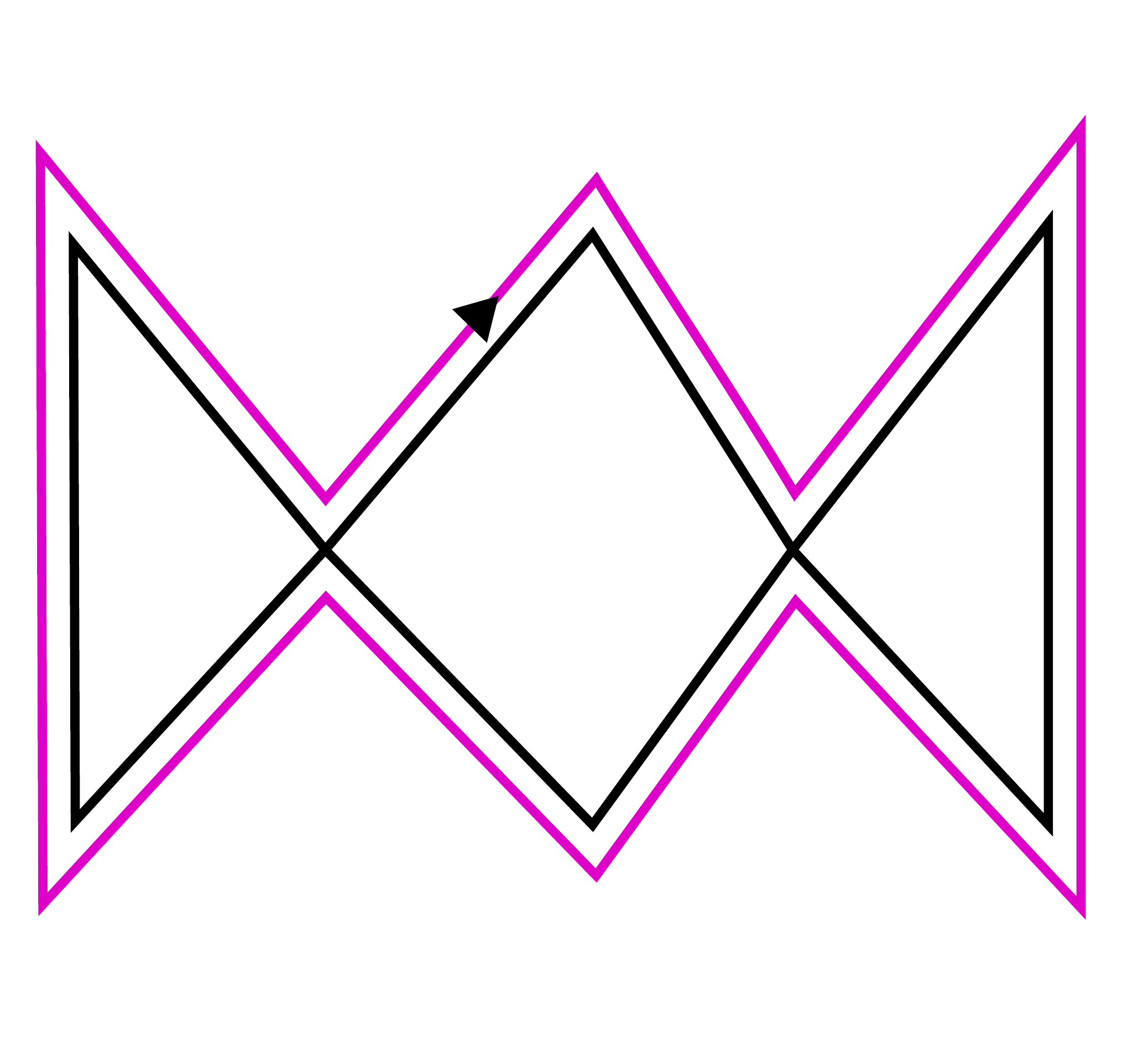}
  \caption{Traversal of $C_1$ (black) from circuit tree in \cref{fig:circuittree} in clockwise orientation with no crossovers is shown in pink.} 
  \label{fig:clockwiseeulercircuit}
\end{figure} 

\emph{Correctness:} Since the $1$-skeleton of $\tK$ is Euler, $H$ consists only of circuits, and hence Algorithm \ref{alg:findboundarycircuit} is guaranteed to terminate.
It runs in $O(|E|)$ time, where $E$ is set of edges in $\tK$.

\medskip
\textit{Complexity:} Identification of $H$, the collection of boundary edges in $\tK$, takes $O(|E|)$ time.
For a given $H$, the modified Hierholzer's algorithm runs in $O(|E|)$ time.
and we can have at most $|E|/3$ iterations of the outermost {\bfseries while} loop in Algorithm \ref{alg:circuittree}.
Hence the circuit tree algorithm runs in $O(|E|^2)$ time.

\medskip
\subsection{Traversal}\label{subsec:traversal}

Recall that we assume all circuits are oriented clockwise.
To identify the tool path traversal that avoids crossovers, we traverse edges along the circuits in the circuit tree such that each parent and child circuit pair is traversed in opposite orientation.
Let $\tv$ be a vertex in the circuit $C_1$ that is shared by two or more circuits in the tree, and has a degree $2d$.
Then there are $q \leq d$ circuits sharing vertex $\tv$.
Further, there exists a path $P = C_1 \to C_2 \to \dots \to C_q$ in the circuit tree sharing vertex $\tv$ where $C_1$ is the ancestor and $C_q$ is the descendant of all circuits in $P$.
Orientations for all other circuits in the tree are uniquely determined, if orientation of root circuit is fixed.
We assume without loss of generality that the traversal of edges in the root circuit in $\tK$ is clockwise.

There are two types of possible subpath crossovers while traversing the circuit tree (our goal is to avoid all crossovers).
The first type of crossover is within a circuit (\textit{type-1}) and the second type of crossover occurs while traversing from a parent to a child circuit (\textit{type-2}).  
If any circuit $C_i$  of the circuit tree is traversed along its orientation (as shown in Figure \ref{fig:clockwiseeulercircuit}, for instance), then it is guaranteed that there is no subpath crossover of (\textit{type-1}).
To prevent subpath crossovers of \textit{type-2}, traverse edges of $C_i$ and pred$(C_i)$ in the circuit tree in \emph{opposite} orientations along with certain edge traversal restrictions.
Thus we traverse edges of $C_q$ clockwise when $q$ is odd, else counterclockwise.
The tool path traversal steps are detailed below.

\begin{enumerate}
  \item\label{itm:traversrestrict}
   Let $(\te_{2j-1}, \te_{2j})$ be a pair of clockwise ordered adjacent edges on a clockwise circuit path of $C_j$, sharing vertex $\tv$.
   Let $\te_i \longrightarrow \te_j$ imply we traverse $\te_j$ immediately after we traverse $\te_i$ in $\tK$.
   Add the following edge traversal restrictions in $\tK$ if $q$ is odd: $\te_1 \longrightarrow \te_{2q}, \te_{2q-1} \longrightarrow \te_{2q -3}, \te_{2q-2} \longrightarrow \te_{2q -4}, \dots,  \te_{5} \longrightarrow \te_{3},\te_{4} \longrightarrow \te_{2}$, where $\te_1 \longrightarrow \te_{2q}$ means we traverse edge $\te_1$ of cycle $C_1$ followed by $\te_{2q}$ of cycle $C_q$.
   Similarly, $\te_{2(q-i)-1} \longrightarrow \te_{2(q-i-1)-1}$ means we traverse edge $\te_{2(q-i)-1}$ of $C_{q-i}$ followed by $\te_{2(q-i-1)-1}$ of $C_{q-i-1}$ and  $\te_{2(q-i)} \longrightarrow \te_{2(q-i-1)}$  means we traverse edge $\te_{2(q-i)}$ of $C_{q-i}$ followed by $\te_{2(q-i-1)}$ of $C_{q-i-1}$.  
   If $q$ is even, we add the restrictions $\te_1 \longrightarrow \te_{2q -1},\te_{2q} \longrightarrow \te_{2q -2}, \te_{2q-3} \longrightarrow \te_{2q -5}, \dots, \te_{5} \longrightarrow \te_{3}, \te_{4} \longrightarrow \te_{2}$.
   Mark all the edges of the circuit tree on path $P$.
   An example with $q=4$ is shown in Figure \ref{fig:edgerestrictq34}.     
   \begin{figure}[hbp!]
     \centering
     \includegraphics[scale=0.35]{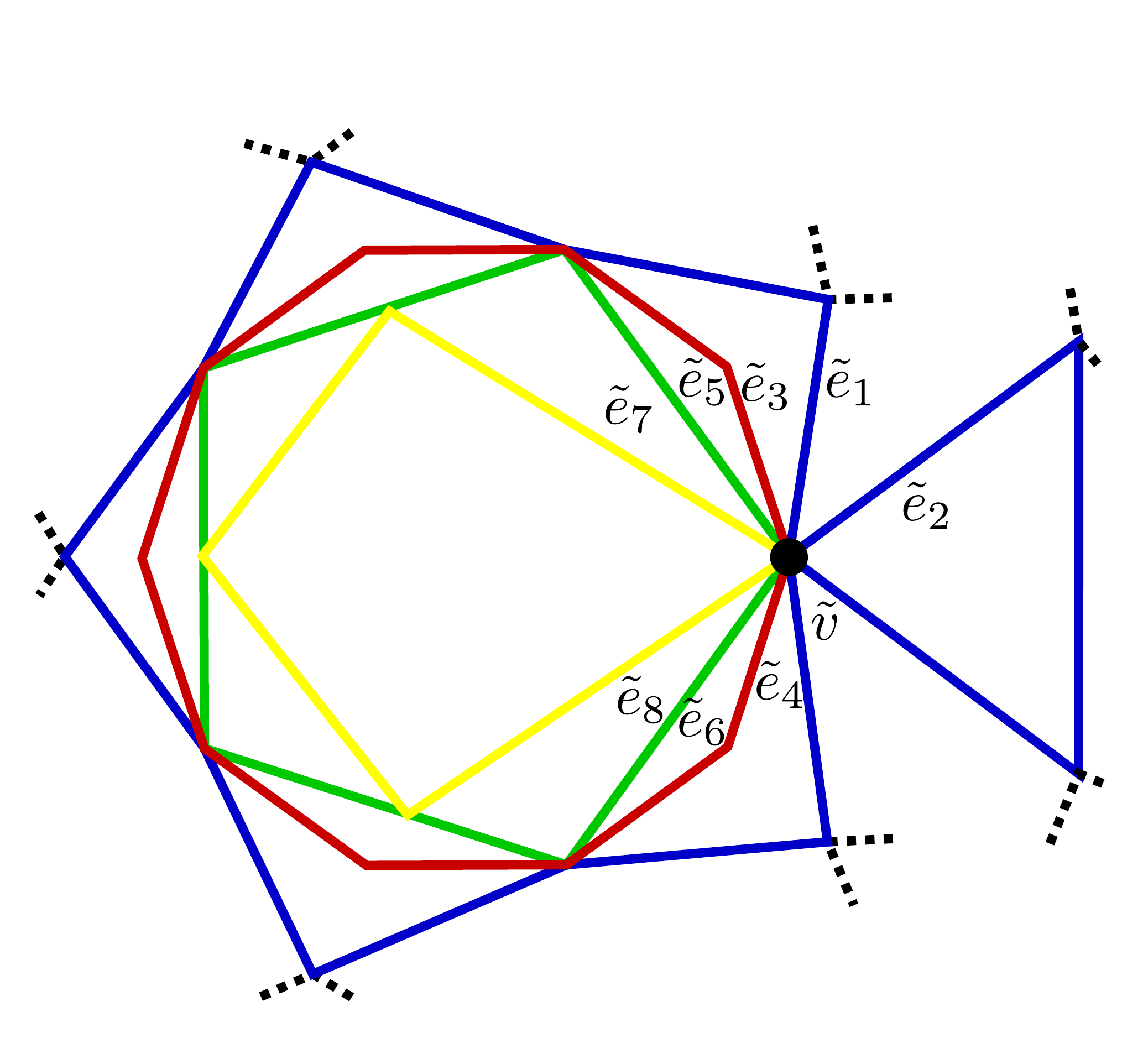}
     \includegraphics[scale=0.35]{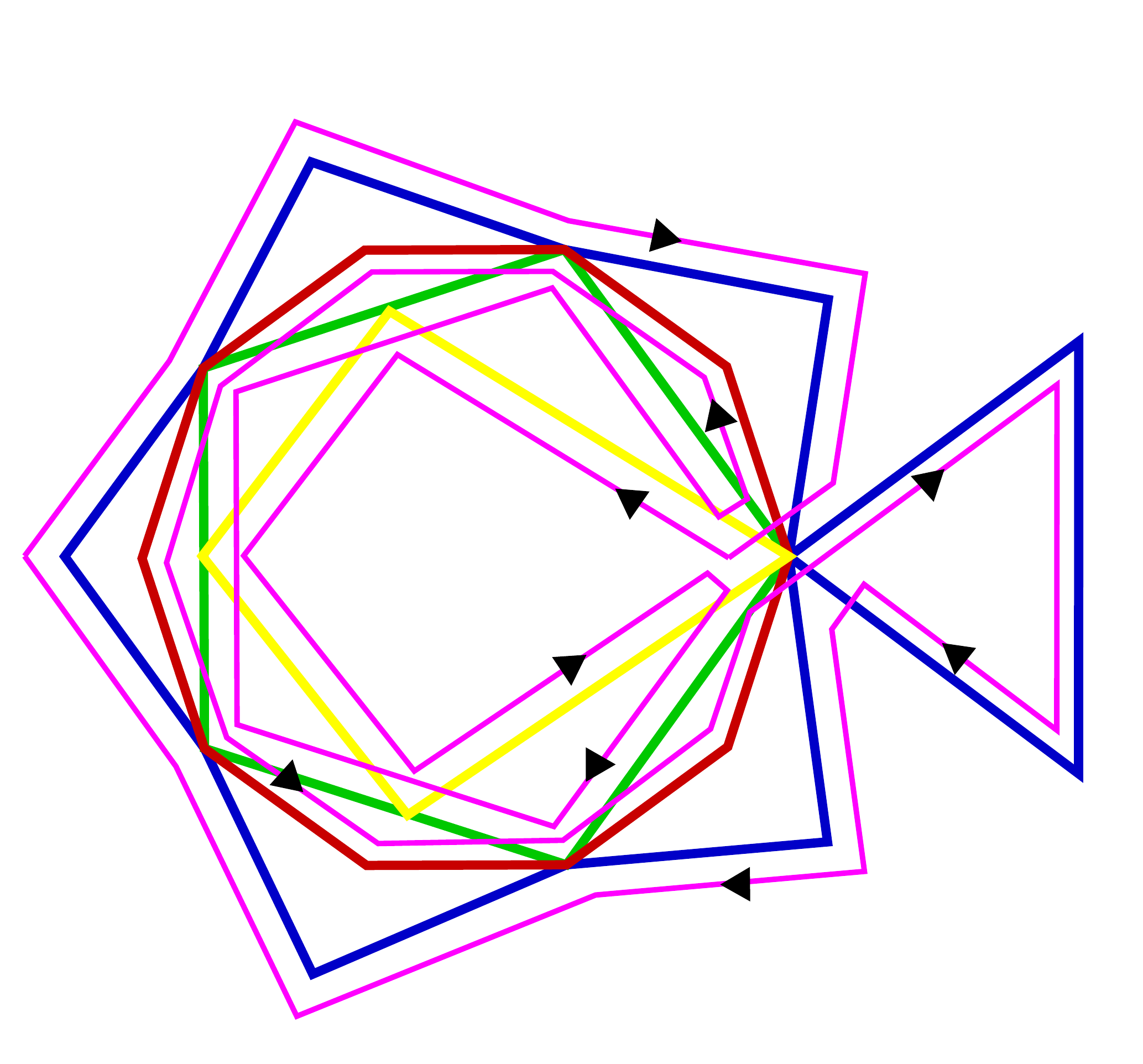}
     \caption{\label{fig:edgerestrictq34}
       Left figure shows a $q=4$ case, with $C_1$ (blue), $C_2$ (red), $C_3$ (green), and $C_4$ (yellow) being the only circuits in a circuit tree sharing vertex $\tv$, where $C_1$ is an ancestor and $C_4$ is a descendant of all circuits in the path $p = \{C_1, C_2, C_3, C_4\}$ on the circuit tree.
       Right figure shows traversal (pink) of edges in $\tK$ of circuits $C_1, C_2, C_3, C_4$ with no crossover where $\te_1 \longrightarrow \te_7, \te_8 \longrightarrow \te_6, \te_5 \longrightarrow \te_3, \te_4 \longrightarrow \te_2$ are edge traversal restrictions.
       Traversal of edges in $C_1$ is clockwise and $C_4$ is counterclockwise for $\tK$.}
   \end{figure}

 \item Repeat Step \ref{itm:traversrestrict} until all edges are marked in the circuit tree.
   
 \item Start by traversing edges of the root circuit in clockwise direction, and follow traversal restrictions.        
\end{enumerate}

\noindent \textit{Complexity:}
We examine $O(|E|)$ edges in $\tK$ to add each edge restriction. 
Since the circuit tree can have at most $(|E|/3) - 1$ edges, the runtime of traversal restriction algorithm is $O(|E|^2)$.

\section{Effect of Extruder Size and Boundary Edges in $\tilde{K}$}\label{sec:boundaryedges}

In our related manuscript \cite{GuKr2018}, we present various measures of geometric quality of the Euler transformed complex $\hK$, and detail how the these measures could be controlled by choosing a small set of user defined parameters.
We present here a couple aspects related to the geometry of $\hK$ that is specific to 3D printing.
In particular, these aspects may affect the generation of continuous print paths.

Recall we denote by $r$ the radius of the extruder.
Depending on the infill lattice and the value of $r$, we could get discontinuous print paths as discussed below.
\begin{itemize}
   \item {\bfseries Edge Covering:}
     An edge in $\hK$ could be covered without traversing it, as shown in Figure \ref{fig:extruder_size_effect}.
     In such cases, we might not be able to generate a continuous print path even when all vertices have even degree.
     In particular, if the mitered offset by at least $r$ of a polygon in undergoes combinatorial changes, then some of its edges can be covered without actually traversing them.
     \begin{figure}[htp!] 
       \centering
       \includegraphics[width=70mm,height=50mm]{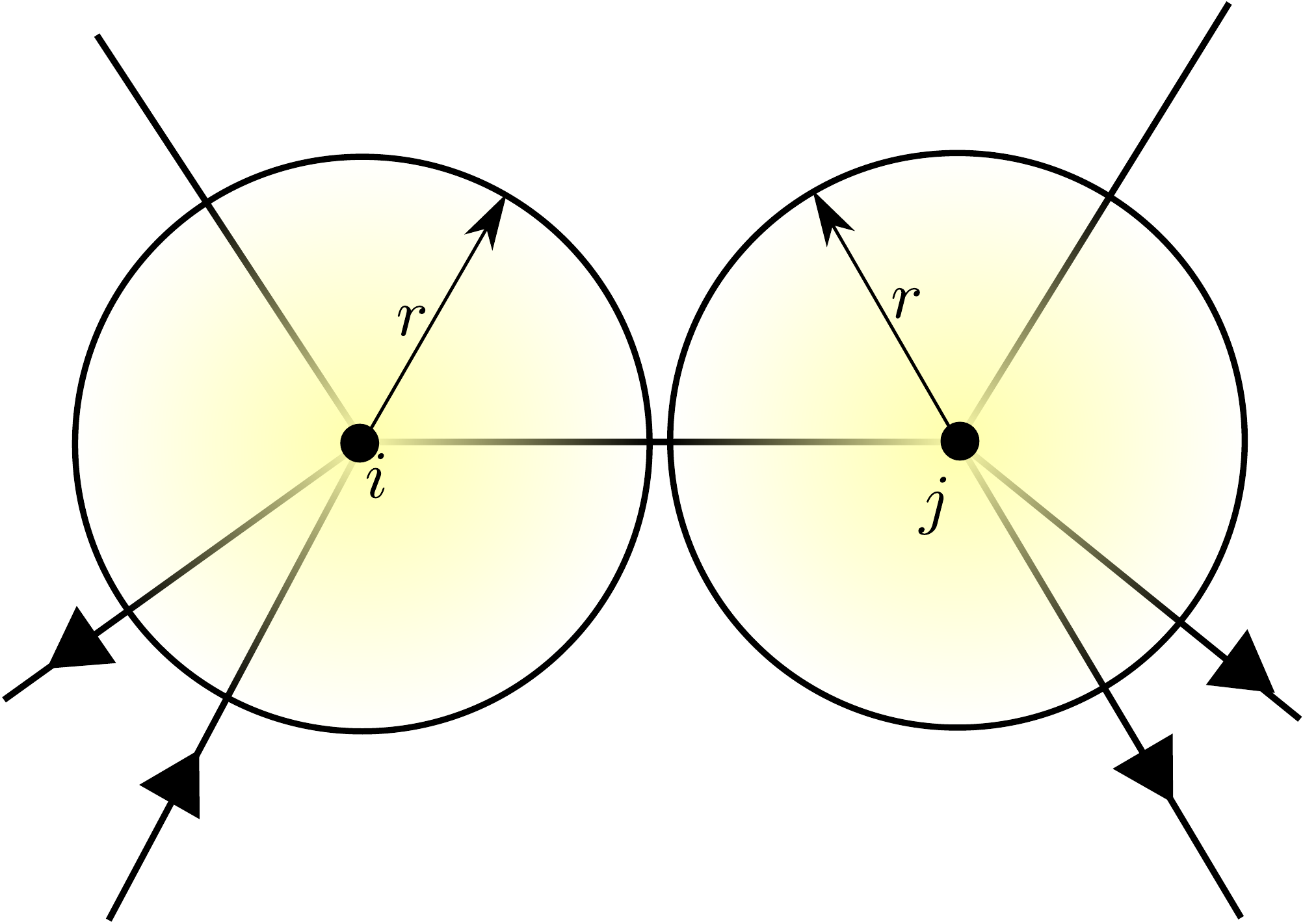}
       \quad\quad
       \includegraphics[width=55mm,height=50mm]{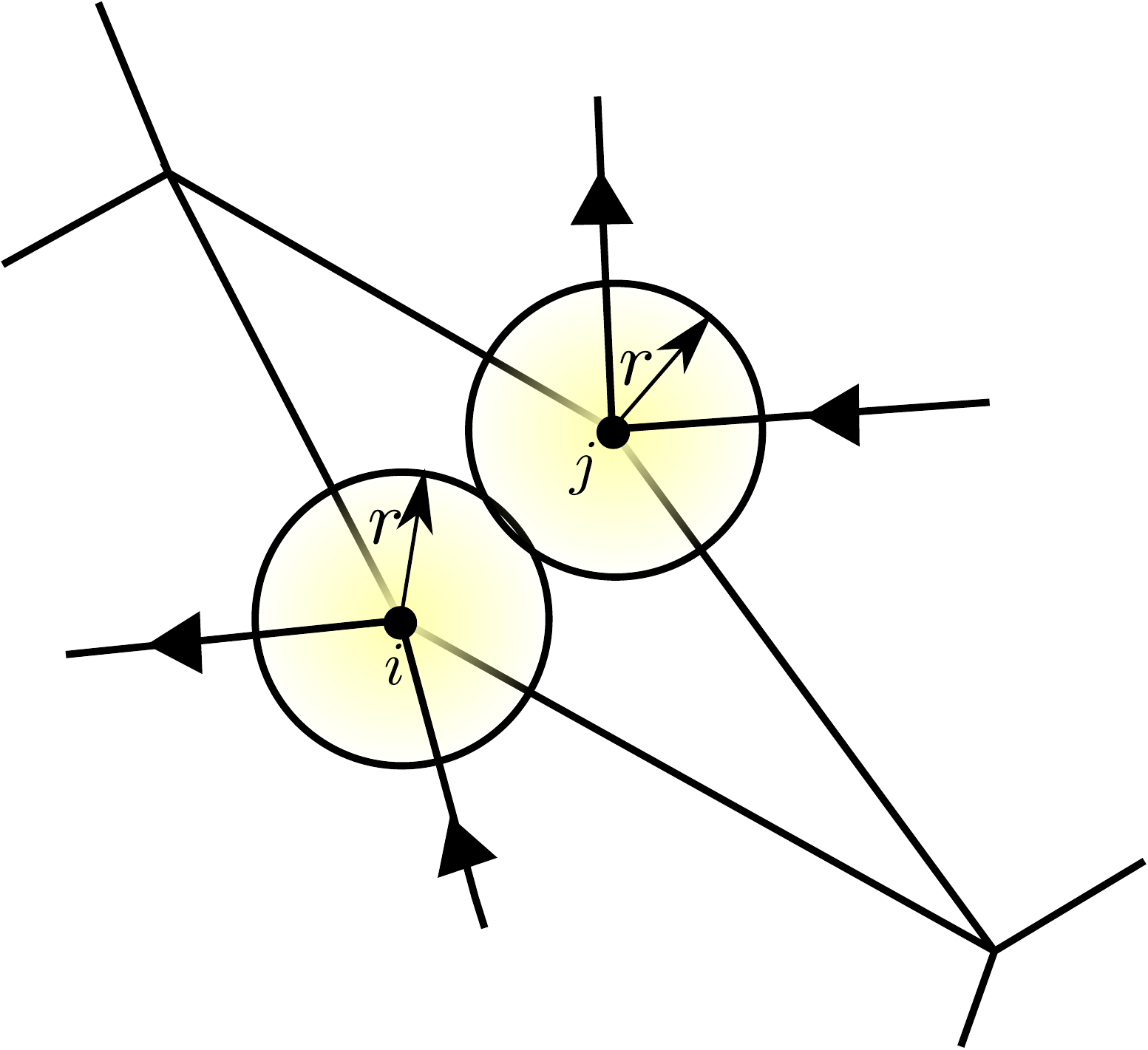}
       \caption{
         Left: Edge $e_{ij}$ is covered by traversing edges other than $e_{ij}$ if $\norm{e_{ij}} \leq 2r + \epsilon$, where $\epsilon$ is zero or too small compared to $r$, the extruder radius.
         Right: We could get material collision if the euclidean distance $d_{ij}$ between points $i$ and $j$ is less than $2r$.}
		
       \label{fig:extruder_size_effect}
     \end{figure}
            
   \item {\bfseries Material Collision:} If the distance between any two points on nonadjacent edges is less than $2r$, we could get material collision as shown in Figure \ref{fig:extruder_size_effect}.
     In particular, if the mitered offset by at least $r$ of polygon undergoes topological changes, then traversing some of its edges can cause material collision. 
\end{itemize} 

We will assume the input complex $K$ is such that $\tK$ is not affected by extruder size in the ways outlined above. 
Still, $\tK$ could have really thin polygons due to clip (\cref{def:clip}) and patch (\cref{def:patch}) operations.
Thin polygons in $\tK$ could make the print path discontinuous in order to prevent material collision.
\begin{defn}{\label{def:shrinkable}}
  \emph{({\bfseries Shrinkable} $2$-cell with offset distance $r$)}
  Let $\tilde{f}$ is a $2$-cell in $\tilde{K}$ and $\bar{f}$ is collection of $2$-cells created after mitered $r$-offset on $\tilde{f}$. $\tilde{f}$ is shrinkable if we can apply mitered $r$-offset such that each vertex and edge in $\bar{f}$ is part of some $2$-cell in $\bar{f}$. If $\bar{f}$ has more than one   $2$-cell then it is shrinkable with topological changes. If a $2$-cell is not shrinkable then it is called unshrinkable.  
\end{defn} 

We assume the traversal of polygons in $\hK$ is not affected by the extruder size ($r$) as outlined above.
This implies interior polygons in $\hat{K}$ are not affected by the extruder size.
But traversal of edges in boundary polygons in $\tilde{K}$ may be affected by extruder size due to clip and patch operations (\cref{def:clip,def:patch}).
\cref{fig:matcola} shows possible material collision in a boundary polygon in $\tilde{K}$, since those $2$-cells become too thin and do not satisfy the assumptions in Step \ref{itm:eulertransform} on Euler Transformation.
We consider the cases where $\tilde{f}$ is shrinkable or unshrinkable based on the extruder size.  

\begin{case}
  $\tilde{f}$ is shrinkable.
\end{case}
 
If $\tilde{f}$ is shrinkable with combinatorial changes then we can print all of its edges.
But if $\tilde{f}$ is shrinkable with topological changes, then we can print only some of the edges.
Let $\tilde{f}$ is shrinkable with topological changes.
This setting also implies $\tilde{f}$ is non-convex.
Let $\tilde{v}_{k}$ be the vertex of edges $\tilde{e}, \tilde{e}'$ in $\tilde{f}$ that is split up into two vertices $\bar{v}_{k}, \bar{v}_{k+1}$ in $\bar{f}$.
Also let $v_{k}', v_{k+1}'$ be the perpendicular points of intersection through $\bar{v}_k, \bar{v}_{k+1}$ on edges $\tilde{e}, \tilde{e}'$.
Then portions of edges between $v_{k}'$ and $v_{k+1}'$ of $\tilde{e}, \tilde{e}'$ have to be set as travel paths in order to avoid material collision.
An example is shown in \cref{fig:matcol}. 

  \begin{figure}[ht!]
	\centering
	\begin{subfigure}[t]{1.4in}
		\centering
		\includegraphics[scale=0.24]{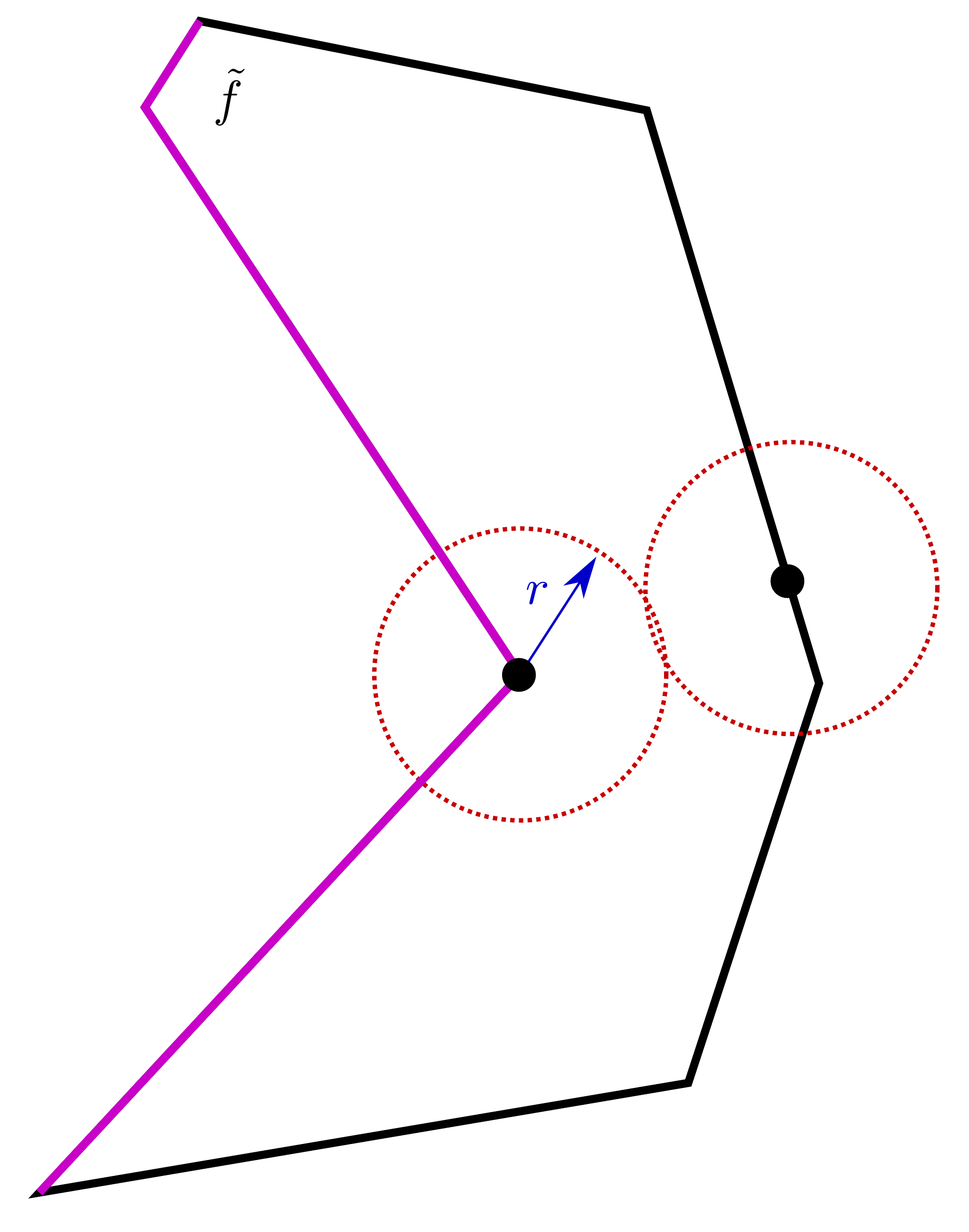}
		\caption{\label{fig:matcola}}
	\end{subfigure}
	\hspace*{.5in}
	\begin{subfigure}[t]{1.4in}
		\centering
		\includegraphics[scale=0.24]{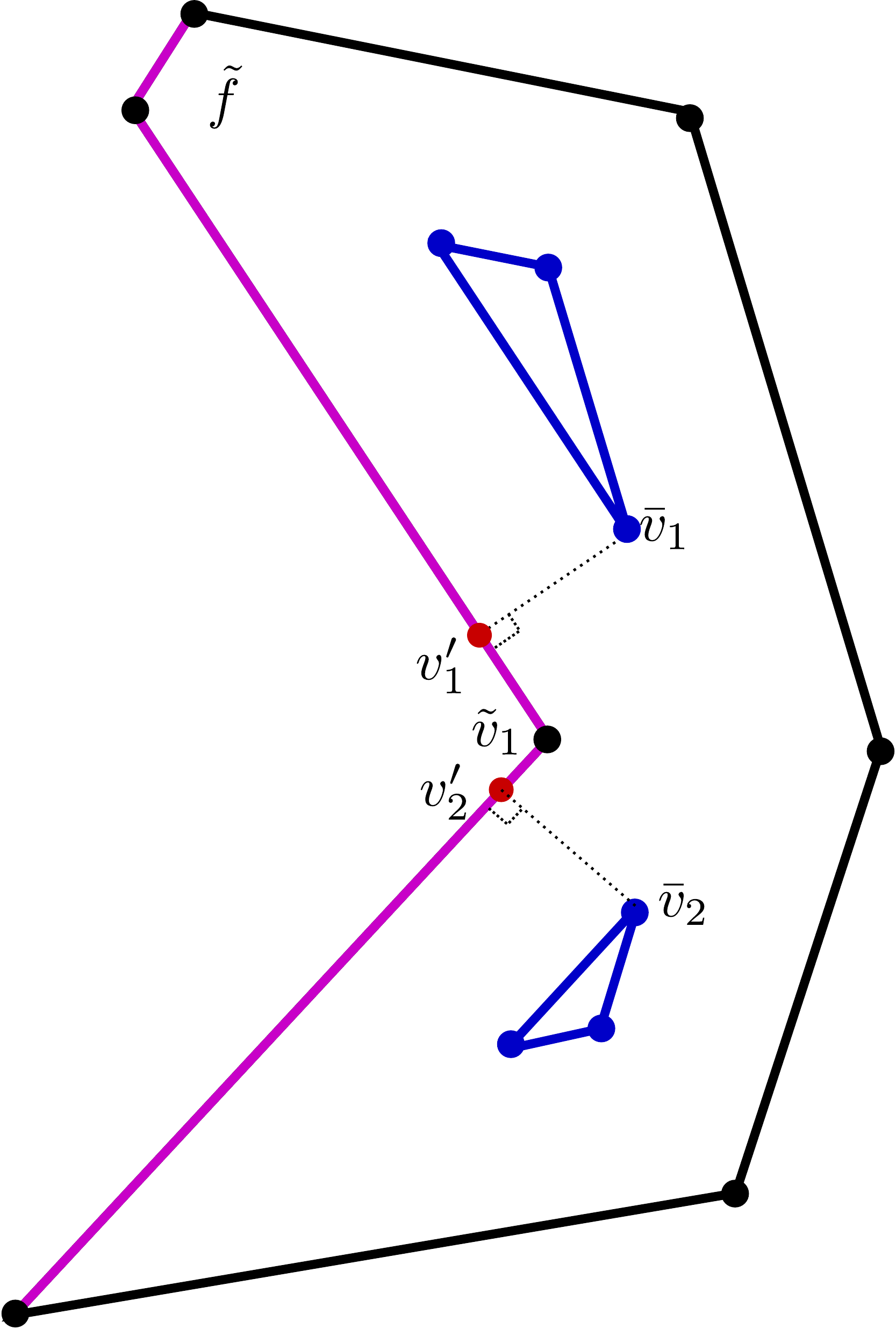}
		\caption{\label{fig:matcolb}}
	\end{subfigure}
	\hspace*{.5in}
	\begin{subfigure}[t]{1.4in}
		\centering
		\includegraphics[scale=0.24]{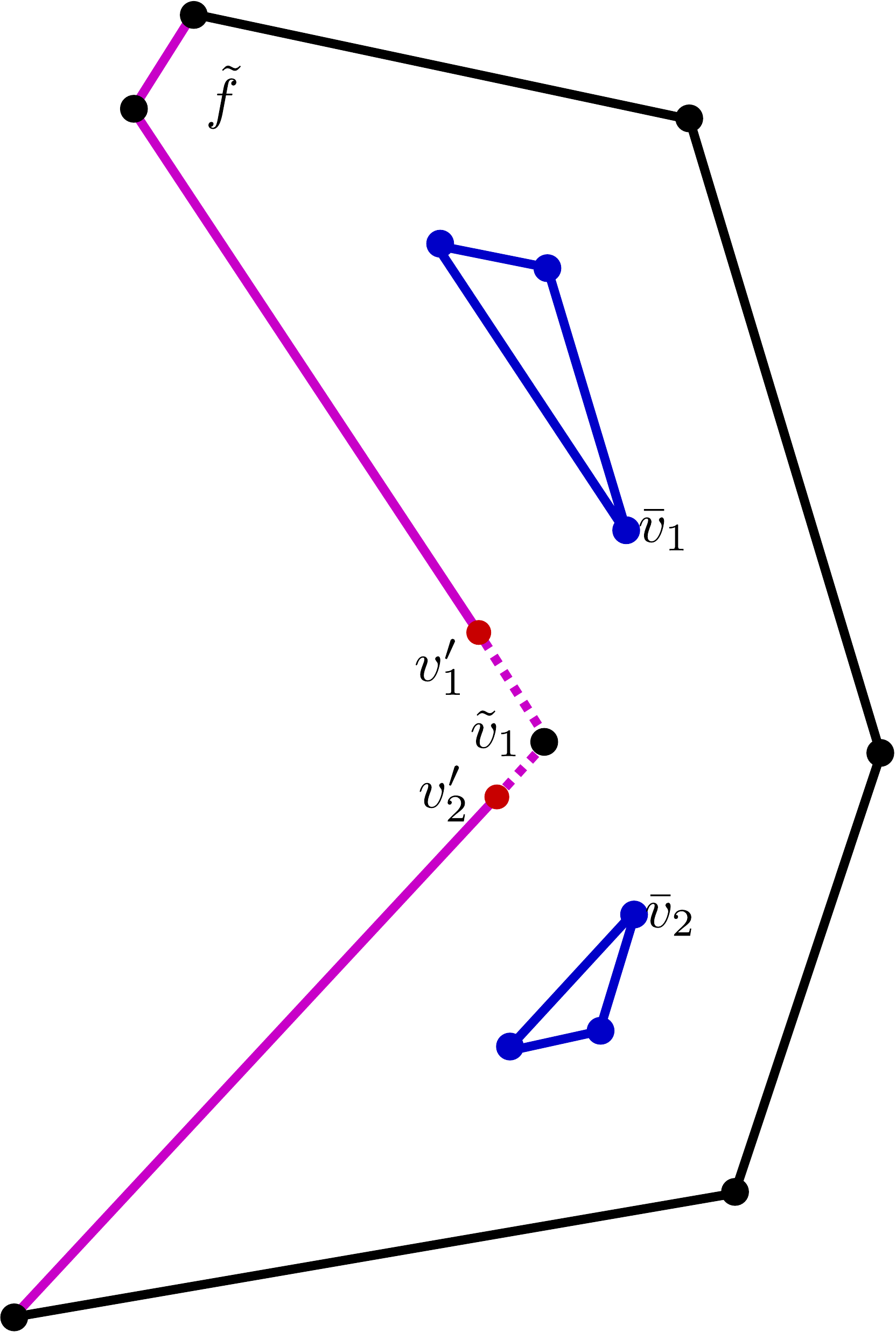}
		\caption{\label{fig:matcolc}}		
	\end{subfigure}
	\caption{\label{fig:matcol}
		Red dotted circle is extruder cross-section of radius $r$ in (a).
		$\tilde{f}$ is $2$-cell of $\tilde{K}$ created in step \ref{itm:patch} shown in (a), (b), (c).
		$\bar{f}$ is a set of $2$-cells created from $\tilde{f}$ (mitered offset), in solid blue in (b), (c).
		The set of $1$-cells created in step \ref{itm:patch} and not portion of any $1$-cell in $\hK$ is in pink.
		$\tilde{v}_1$ is original vertex of $\tilde{f}$ and $v_1', v_2'$ are perpendicular point of intersection through $\bar{v}_1, \bar{v}_2$ to corresponding edges in pink as shown in (b), (c).
		$\{v'_2, \tilde{v}_1\}$, $\{v'_1, \tilde{v}_1\}$ are travel paths in $\tilde{f}$ in (c).
	}
\end{figure}

   \begin{case}
 	$\tilde{f}$ is unshrinkable
 \end{case} 
   Based on our assumptions in Step $\ref{itm:eulertransform}$ on Euler Transformation, printing all the edges or portions of those edges in $\hat{k}$ will not be affected by the extruder size ($r$).
   If $\tilde{f}$ is unshrinkable, then it will be due to boundary edges added in $\tilde{f}$, which are part of the polygon $\tilde{R}_{ij}$ used in the patch operation (\cref{def:patch}) on $\tilde{K}$ (see \cref{fig:unshrinkable}).
   These boundary edges have to be set as travel paths to avoid material collision. 
  
   \begin{figure}[hbp!] 
     \centering
     \vspace*{-0.05in}
     \includegraphics[scale=0.26]{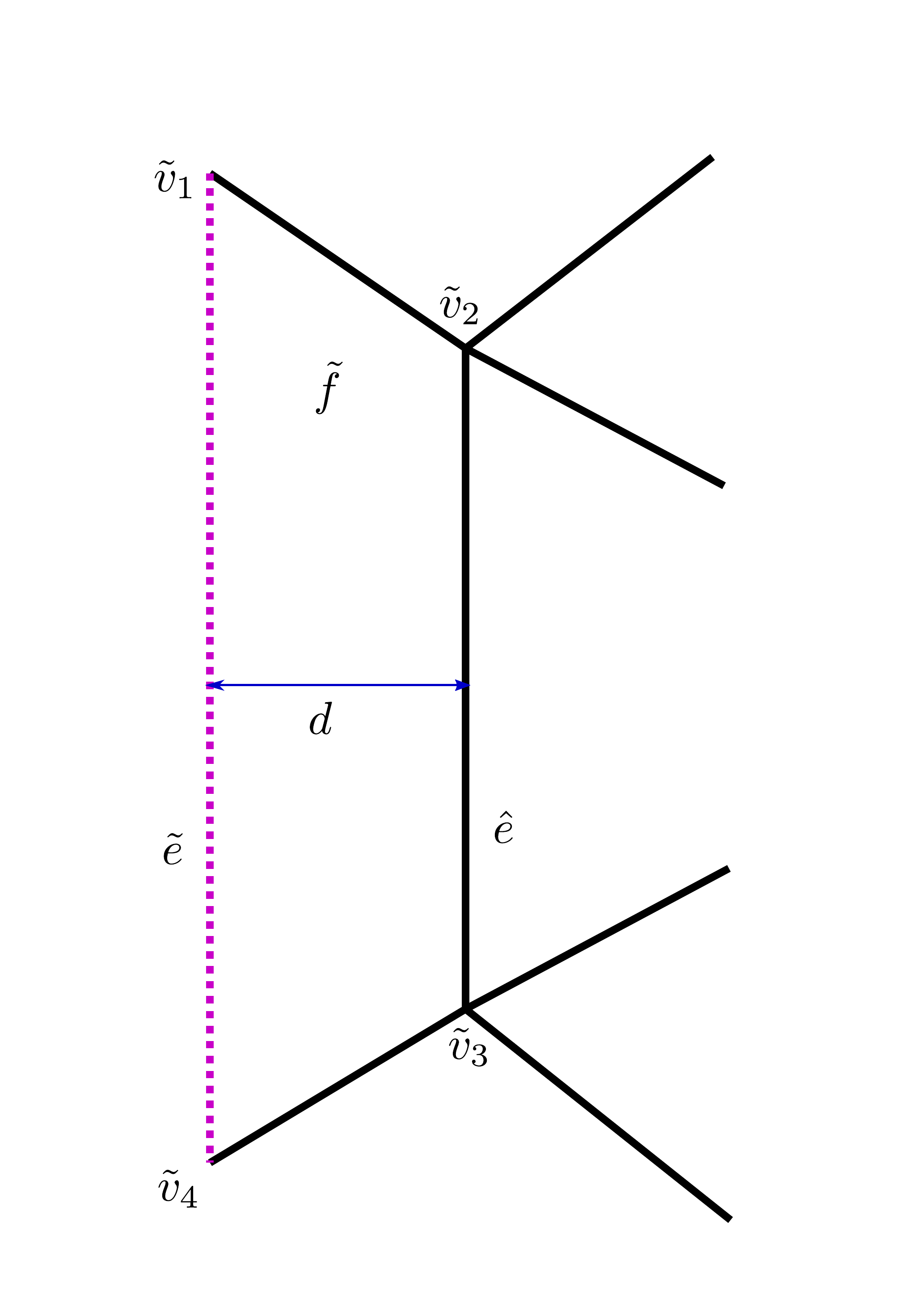}
     \caption{\label{fig:unshrinkable}
       Polygon $\tilde{f}$ is added to $\tilde{K}$ with the creation of edge $\tilde{e}$ in the patch step, where $\tilde{e}, \hat{e}$ are parallel with perpendicular distance $d < 2r$.
       $\tilde{f}$ is unshrinkable with offset distance $r$.
       Hence $\tilde{e}$ has to be set as a travel path.
     }
   \end{figure}

   \noindent We can have at most $m/2$ boundary edges set as travel paths, when there are $m$ boundary edges on $\tilde{R}_{ij}$.

   \begin{rem}
     {\rm It is still an open problem to identify how to continuously print the boundary edges in $\tilde{K}$, if these boundary polygons  in $\tilde{K}$ are shrinkable with topological changes, or are unshrinkable.}
   \end{rem}

\section{Implementation}\label{sec:implem}

We first printed a few proof-of-concept shapes to test our Euler transformation framework.
We printed the shapes shown in \cref{fig:testprts} with 10 layers each.
We then scaled up the jobs to bigger sizes.
The dimensions of the pyramid shown in \cref{printedpyramid} were $609.6\,$mm $\times$ $609.6\,$mm $\times$ $609.6\,$mm, and each layer had a height of $4.26\,$mm, resulting in a total of $143$ layers.

\begin{figure}[htp!]
  \centering
  \includegraphics[height=2.2in]{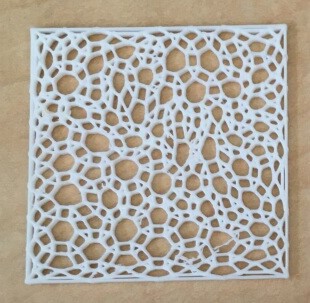}
  \quad
  \includegraphics[height=2.2in]{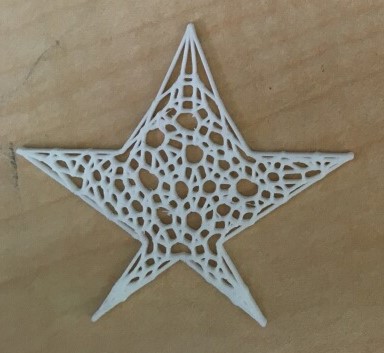}
  \caption{\label{fig:testprts}
    Test prints of a square and a star domain with 10 layers each.
    }
\end{figure}

We illustrate our complete framework on an object with nontrivial geometry and intermediate layer topology---the Stanford bunny.
The height of the bunny is $83$mm. %
We sliced the bunny  into $415$ layers each with height $0.2$mm, and we use an extruder diameter of $0.35$mm.
First we found the minimal square ($86$mm $\times \, 86$mm) containing the union of polygons from every layer.
We triangulated this square using Pymesh.
We then applied Euler transformation on this triangle mesh with a mitered offset of $1$mm.
Finally, we applied clipping and patching operations along with support perimeter, and used the tool path traversal algorithm to generate the infill toolpath for each layer.
We also printed an extra perimeter in each layer to smooth out the surface.
The entire computation ran in 1.5 hrs on a laptop.
See \cref{fig:bunny} for illustrations of salient features of the print object.
%

\begin{figure}[htp!]
  \centering
  \includegraphics[width=3.5in]{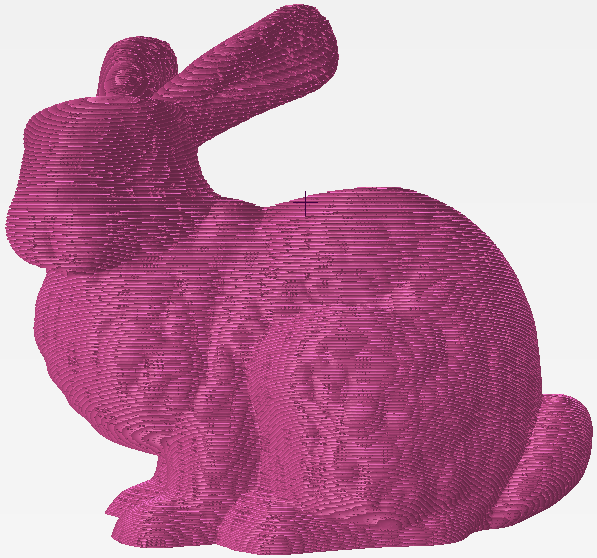}
  \\
  \smallskip
  \includegraphics[width=3.5in]{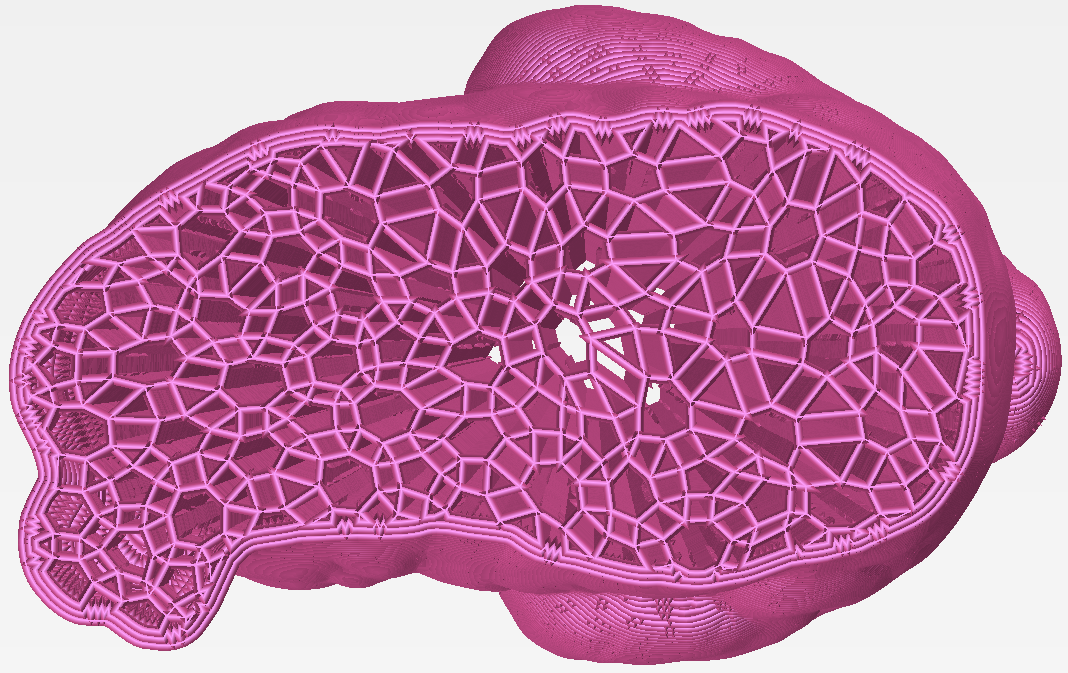}
  \\
  \smallskip
  \includegraphics[width=3.5in]{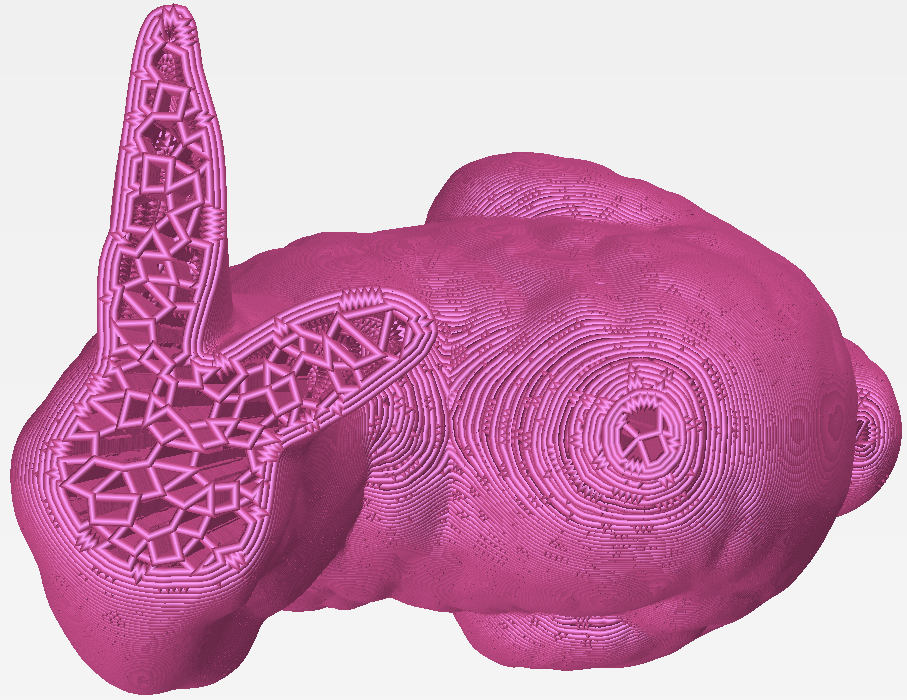}
  \caption{\label{fig:bunny}
    Complete print of the Stanford bunny (top) using 415 layers, and views of layer 212 (middle) and layer 324 (bottom).
    Layer 324 consists of disconnected polygons.
    Support perimeter is evident in the intermediate layers.
  }
\end{figure}

\clearpage
\vspace*{-0.1in}
\section{Discussion}\label{sec:experiment}
\vspace*{-0.1in}

The bottleneck for computational complexity of the Euler transformation is determined by the mitered offsets it creates for each cell in $K$.
The number of cells in $\hK$ are clearly linear in the number of cells in $K$ (\cref{lem:cntshVhEhF2d}).
For $d=2$, if $K$ in $\R^d$ has $m$ $d$-cells, each of which has at most $p$ facets, the time complexity of Euler transformation is $O(m p^d)$ \cite{AuWa2013,AuWa2016}.
Not all cells in the Euler transformation $\hK$ are guaranteed to be convex, even when all cells in $K$ are (see \cref{fig:disjholes}).
We could triangulate the non-convex cells so that all cells in $\hK$ are convex.
But could we do so while maintaining even degrees for all vertices?
A related problem is that of finding a triangulation (rather than a cell complex) of a given domain that minimizes the number of odd-degree vertices.
The total Euclidean length of edges is $\tilde{K}$ is going to be at least double compared to that in the original complex in $K$.
Hence it is better to start with a sparse input complex $K$ (i.e., with a smaller total Euclidean length of edges).
We have described a complete framework for continuous tool path planning in layer-by-layer 3D printing.
The clipping step will be bottlenecked by the computation of intersection of the Euler transformed complex with each polygon in each layer.
We have generalized the Euler transformation defined to allow combinatorial changes when computing mitered offsets of cells.
What about allowing topological changes?
It appears applying the generalized Euler transformation should be able to generate an Euler complex even when topological changes are allowed.
But there might be some new geometric challenges generated in this process, which would have to be taken care of.
We will address this question in future work.
Another promising generalization of our approach would be to \emph{non-planar} 3D printing.
Many of our results should generalize to the non-planar realm as long as underlying support is guaranteed by the design.

\medskip
\noindent {\bfseries Acknowledgments:}
This research was supported in part by an appointment of Gupta to the Oak Ridge National Laboratory (ORNL) ASTRO Program, sponsored by the U.S. Department of Energy (DOE) and administered by the Oak Ridge Institute for Science and Education.
Krishnamoorthy acknowledges partial funding from the National Science Foundation through grants 1661348 and 1819229.
Dreifus acknowledges funding from the Manufacturing Demonstration Facility (MDF) of ORNL and DOE.


\bibliographystyle{plain}
\input{Journal_ContinuousToolPathPlanning.bbltex}

\end{document}